\newtheorem{theorem}{Theorem}
\newtheorem{lemma}{Lemma}
\newtheorem{corollary}{Corollary}
\newcommand\eqa{\mathrel{\overset{\makebox[0pt]{\mbox{\normalfont\tiny\sffamily (a)}}}{=}}}
\newcommand\eqb{\mathrel{\overset{\makebox[0pt]{\mbox{\normalfont\tiny\sffamily (b)}}}{=}}}
\newcommand\eqc{\mathrel{\overset{\makebox[0pt]{\mbox{\normalfont\tiny\sffamily (c)}}}{=}}}
\newcommand\eqf{\mathrel{\overset{\makebox[0pt]{\mbox{\normalfont\tiny\sffamily (f)}}}{=}}}
\newcommand\leqc{\stackrel{\mathclap{\normalfont\mbox{\normalfont\tiny\sffamily (c)}}}{\leq}}
\newcommand\leqd{\stackrel{\mathclap{\normalfont\mbox{\normalfont\tiny\sffamily (d)}}}{\leq}}
\newcommand\approxa{\stackrel{\mathclap{\normalfont\mbox{\normalfont\tiny\sffamily (a)}}}{\approx}}
\newcommand\approxe{\stackrel{\mathclap{\normalfont\mbox{\normalfont\tiny\sffamily (e)}}}{\approx}}
\begin{document}
\definecolor{ffffff}{rgb}{1.,1.,1.}
\definecolor{ffqqqq}{rgb}{1.,0.,0.}
\definecolor{qqqqff}{rgb}{0.,0.,1.}
\definecolor{uuuuuu}{rgb}{0.26666666666666666,0.26666666666666666,0.26666666666666666}
\definecolor{zzttqq}{rgb}{0.6,0.2,0.}
\definecolor{xdxdff}{rgb}{0.49019607843137253,0.49019607843137253,1.}
\definecolor{ffqqff}{rgb}{1.,0.,1.}
\definecolor{cqcqcq}{rgb}{0.7529411764705882,0.7529411764705882,0.7529411764705882}
\title{Coverage Analysis of Two-Tier HetNets for Co-Channel, Orthogonal, and Partial Spectrum Sharing under Fractional Load Conditions}

\author{Praful D. Mankar, Goutam Das,~\IEEEmembership{Member,~IEEE} and S. S. Pathak,~\IEEEmembership{Senior Member,~IEEE}
\thanks{Copyright (c) 2015 IEEE. Personal use of this material is
permitted. However, permission to use this material for any other purposes must be obtained from the
IEEE by sending a request to pubs-permissions@ieee.org.}
\thanks{Praful D. Mankar and S. S. Pathak are with the Dept. of E \& ECE; and Goutam Das is with the G. S. Sanyal School of Telecommunications, Indian Institute of Technology, Kharagpur, WB, India. E-mail: \{praful,ssp\}@ece.iitkgp.ernet.in and gdas@gssst.iitkgp.ernet.in}
}

\maketitle
\begin{abstract}
In heterogeneous networks, the random deployment of femto access points (FAPs) and macro base stations (MBSs) with uncoordinated channel access impose huge inter-tier interferences. In real-life networks, the process of MBSs' deployment exhibits the homogeneity, however the FAPs have the behavioral characteristic of clusters formation like in  malls, apartments, offices, etc. Therefore, the composite modeling of the MBSs and the FAPs using Poisson point process and Poisson cluster process is employed for the evaluation of coverage probability. The scenario of the real-time traffic for macro-tier and the best-effort traffic for femto-tier is considered. Cognition is introduced in the clustered FAPs to control the inter-tier interference. Furthermore, the impact of macro-tier load is analyzed by exploiting the inherent coupling between coverage probability and activity factor of an MBS. Further, we study the effect of co-channel, orthogonal, and partial spectrum sharing modes on the coverage for given parameters like load condition, FAPs/MBSs density, etc. We provide simulation validation for the derived expressions of coverage and present an comparative analysis for the mentioned spectrum sharing modes.
\end{abstract}
\begin{IEEEkeywords}
Coverage Probability, Activity Factor, Femto Cell, Macro Cell, Poisson Point Process, Poisson Cluster Process, MBS density, FAP density, etc. 
\end{IEEEkeywords}
\IEEEpeerreviewmaketitle
\section{Introduction}
\IEEEPARstart{T}{o} expedite support for high data rate services, the concept of subscriber-owned small base stations (BSs) (such as femto cells), beside the tiers of macro/pico cells, is being adopted by the next generation cellular standards such as LTE and WiMax. However, the random deployment of femto access points (FAPs) and their random channel access pattern introduce huge interference on the users associated with macro-tier causing a serious degradation of the overall network capacity. To minimize cross-tier interference, prior research has proposed techniques like hybrid spectrum sharing, transmit power control, and adaptive channel access, etc. The concern of this paper is to motivate and propose a method to select an effective spectrum sharing mode for the deployment of two-tier networks. We also introduce cognitive channel access at the small cells level to further control inter-tier interference that requires local information instead of at at macro cell level where global information might be essential.  

The cellular BSs are deployed in accordance with the traffic demand which is often independent from one geographical region to another. Therefor, the fact of non-uniform deployment has motivated the researcher to model the BSs using stochastic geometry models. 
The most natural choice of point process for modeling the BSs locations is homogeneous Poisson point process (PPP) \cite{Haenggi2009,haenggi2009interference,Baccelli_stochasticgeometry,Stoyan,Hossain_2013,Andrews_2011,JXu_2011}.  
The homogeneous PPP promises to be the right convention for modeling the cellular BSs since their deployment is bounded to yield ubiquitous connectivity. In other words, the homogeneous PPP is more accurate for modeling the cells having different coverage range as the service demand across the region is non-uniform. 
Moreover, the application of PPP renders mathematical tractability in the analysis along with practical relevance. In \cite{Andrews_2011}, the authors have shown that the assumption of PPP provides a pessimistic model for coverage/rate with  almost equal deviation as the optimistic model resultant of grid-based modeling.
The authors of \cite{WeberS_2010} have derived exact form of upper and lower bounds on the coverage probability, and further provide insights into the achievable transmission capacity. 

Furthermore, the researchers have extended the stochastic geometry approach to analyze the performance  of multi-tier network wherein the tiers of BSs are modeled using independent homogeneous PPPs. The authors of \cite{Dhillon_2012,Dhillon_2013,Dhillon_2014,HSJo_2012,HSJo_2011,SSingh_Offloading} have extended analysis of \cite{Andrews_2011} to multi-tier HetNet. Further the author of \cite{Heath_2013,Renzo,Dhillon_2014} have modeled the interference in such networks under generalized fading. 
On the contrary, assumption of modeling small cells using homogeneous PPP may not be accurate for modeling the subscriber-owned BSs such as FAPs which are added in the network on ad-hoc basis. The FAPs are more likely to form clusters as they are randomly plugged within residential complexes, malls, offices etc. 
 Therefore, to evaluate the performance of wireless nodes having behavioral characteristic of forming clusters are modeled using Poisson cluster process (PCP) in the literature.
In \cite{RKGanti_2009}, the expressions (in integral form) are derived for interference and outage for PCP modeled ad-hoc networks.  Authors of \cite{Haenggi_2012} realized the resemblance of process of forbidding the points in PPP to lie closer than a certain minimum distance (as done in cognitive networks) with the PCP. 
Thenceforth, using the analysis of \cite{RKGanti_2009}, the Laplace transform of interference for cognitive network is obtained. 
The authors of \cite{Afshang_2015,Afshang_2015a} have modeled the Device-to-device networks using PCP. The derived coverage probability is further used to characterize the area spectral efficiency of the network.  
Moreover, the authors of \cite{YJChun_2015} have extended analysis of \cite{RKGanti_2009} to investigate the outage and the achievable average rate in a  multi-tier HetNet by assuming that the BSs in each tier form an independent PCP. However, this assumption may not be practical for the case of macro-tier because of homogeneity in the deployment of macro base stations (MBSs). 
The authors of \cite{Zhong_2013,Luo} have realized that the cluster process is likely to be more realistic for the modeling of femto-tier as the FAPs are typically deployed in populous locations, like commercial or residential areas. Thus the composite modeling of MBSs and FAPs using PPP and PCP, respectively, is employed for the analysis in \cite{Zhong_2013,Luo}.

In all the above mentioned investigations, the interference analysis is undertaken for the network with only one frequency channel. However, the consideration of multiple channels and fractional load conditions may significantly change the interference realization in a single/multi-tier cellular network. 
Authors of \cite{Dhillon_RandomArrival} characterize the system throughput as a function of service arrival rate within the framework of uniformly distributed service in a single cell of fixed coverage. The analysis is conducted under the assumption of single time-frequency slot allocation.
The notion of load awareness in interference modeling in a multi-tier network is introduced in \cite{Dhillon_2013}. Therein fractional load of $i$-th tier is modeled as $p_i$ that represents the activity factor (represent the probability of a channel access) of a BS in $i$-th tier and is assumed to be known a priori. 
{The activity factor of a BS in PPP modeled network for elastic traffic is derived in \cite{Wei_Bao_2014_NearOptimal}. 
However, \cite{Wei_Bao_2014_NearOptimal} reserves unit bandwidth per service arrival which is not the case in actual networks. The bandwidth consumption of a service is dependent on the SINR statistics that is further dependent on location of its arrival.}
The consideration of multiple channels and fractional load conditions may significantly change the interference realization in a single/multi-tier cellular network.
In fact there is a implicit coupling between the activity factor and the coverage probability. For example transmission of a BS, say B1, generates interference to its neighboring BS, say B2, which force B2 to transmit for longer time which again interferes back to B1 and make him transmit for even longer time. 
Thus, it is essential to resolve the coupling between activity factor and coverage probability to investigate the load-aware performance of such networks.
Further, to understand the impact of underlying traffic load on the BS activity factor in a multi-channel environment, it becomes necessary to incorporate stochastic traffic into the activity factor evaluation.

In \cite{EkramHossain_2013,EkramHossain_2014} the two-tier network comprised of macro cells and femto cells sharing the same set of channels are modeled using independent PPPs. To avoid the inter-tier interference, the femto cell is assumed to access a channel only if the received signal strength of MBSs is below a sensing threshold. The orthogonal spectrum allocation techniques in open (closed) access PPP modeled multi tier cellular network are presented in \cite{Wei_Bao_2014_NearOptimal, Wei_Bao_2014_Structured, Adve_2014} (\cite{Chandrasekhar_2009}) such that the area spectral efficiency is optimum in full load scenario. Further, the authors of \cite{Wei_Bao_2014_Structured,Adve_2014} have also considered tier biasing as the optimization parameter as it has direct impact on the mean rate of a user in open access scenario. Moreover, in \cite{YoungjuKim_2010} the area spectral efficiency in closed access two-tier network for partial spectrum sharing is investigated. 
However, co-channel/partial deployment of multi-tier network might be effective over orthogonal deployment for a particular range of traffic intensity, BSs densities, and power configuration. Therefore, it is essential to perform a comparative analysis among the co-channel, partial, and orthogonal deployments.   

It is clear that the network analysis under PPP modeled macro-tier and PCP modeled femto-tier has more practical relevance compared to modeling both tiers using either  PPP or PCP. In the existing literature, the coverage analysis is conducted for the best-effort traffic only except \cite{Wei_Bao_2014_NearOptimal, Dhillon_2013,Dhillon_RandomArrival}. {However, the influence of real-time service on network performance is still at the early stage under the paradigm of random cellular network modeling.}
Therefore, investigation of coverage probability for such two-tier HetNets in multi-channel environment under fractional load condition along with comparative analysis of co-channel, partial, and orthogonal deployment modes should be more interesting and practically relevant. Therefore in this paper, we set out for such investigation. In the following section we enlist the contributions of this paper.
\subsection{Contributions}
The earlier mentioned practicability reasons have motivated us to analyze coverage probability and rate distribution for the case of PPP modeled macro-tier and PCP modeled femto-tier in a multi-channel environment. 
To control inter-tier interference issue, we have introduced cognition in clustered FAPs. 
The considered underlay mode of cognition allow FAPs in a cluster to access a channel for transmission only if the signal sensing level from the nearest MU is below some threshold, or equivalently if the nearest MU is beyond exclusion range. 
{Further in this paper, we have considered more realistic scenario of the real-time traffic for macro-tier and  best-effort traffic for femto-tier under closed access}. 
Due to the randomness of service arrival and departure process, activity factor of an MBS becomes a function of traffic intensity and coverage probability.
Therefore, exploiting the implicit coupling of coverage and activity factor, while relaxing the fixed bandwidth allocation as in \cite{Wei_Bao_2014_NearOptimal,Dhillon_RandomArrival}, we incorporate a fractional load scenario of real-time traffic into the analysis.  We have derived the load-aware coverage probabilities for macro users (MUs) and femto users (FUs) in a two-tier network with clustered cognitive FAPs. 

Furthermore, properly configured spectrum sharing is one of the potential methods to enhance the coverage/rate in a closed accessed HetNets.
Therefore, we have also investigated the coverage analysis in three types of spectrum sharing modes \textit{viz.}: co-channel, orthogonal, and partial. 
In this paper, the co-channel deployment of multi-tier network is proven to be yield better coverage it the certain density of FAPs and beyond it the orthogonal mode becomes dominant choice of deployment.
Through numerical results, we show that the best choice of spectrum sharing mode is dependent on given system parameters like macro cell density, femto cell density, traffic load, etc.   
Further, we provide comprehensive analysis for the choice of performance assessment parameters, like exclusion range and frequency partitioning, along with the comparison of the spectrum sharing modes for a given system parameters. The contributions of the paper are summarized in the following:
   \begin{itemize}
    \item The load-aware performance analysis of two-tier HetNet with cognitive clustered FAPs. 
    \begin{itemize}     
    \item A framework is developed to resolve the coupling between coverage probability and activity factor.    
    \item A simplistic approximated model for the activity factor of an MBS is presented.    
    \end{itemize}    
    \item Performance analysis is presented for three spectrum allocation methods viz. co-channel, orthogonal, and partial. 
    \begin{itemize}
     \item It is shown that the co-channel and orthogonal modes are effective for the FAPs density less than and greater than a critical point, respectively. 
    \end{itemize}    
    \item The numerical analysis indicates the mode of deployment (i.e. type of spectrum allocation method, exclusion range, etc.) is dependent of network parameters such as macro-tire load, density of FAPs, density of MBSs, etc.
   \end{itemize}
   
The rest of the paper is organized as follows. Section \ref{sec:SystmModelAndAssumption} presents the system model and assumptions. The load-aware coverage analysis in  under a co-channel, orthogonal and partial modes is carried out in Section \ref{sec:CoverageAnalysisforCo-ChannelAllocation}, Section \ref{sec:CoverageAnalysisforOrthogonalSpectrumAllocation}, and Section \ref{sec:CoverageAnalysisforPartialSpectrumAllocation}, respectively. Next, Section \ref{sec:FurtherAnalysis} provides few key insights into the network design.
In Section \ref{sec:NumericalResultsAndDiscussion}, we present numerical results to demonstrate the coverage performance of users of both tiers along with a comparative analysis of coverage probability in different spectrum sharing modes. Finally, we conclude the paper in Section \ref{sec:Conclusion}.
\section{System Model and Assumptions}
\label{sec:SystmModelAndAssumption}
We consider a two-tier HetNet comprised of macro cells and femto cells. Assuming that the both tiers are independent, we model MBSs location using homogeneous PPP \mbox{\small{$\Phi_B$}} with intensity \mbox{\small{$\lambda_B$}} in \mbox{\small{$\mathbb{R}^2$}} \cite{Andrews_2011,haenggi2009interference,Dhillon_2012} and FAPs location using PCP \mbox{\small{$\Phi_F$}} in \mbox{\small{$\mathbb{R}^2$}} \cite{RKGanti_2009,YJChun_2015}.  
Applying homogeneous independent clustering to the points of a homogeneous PPP results in PCP. It is also referred as Neyman-Scott cluster process. Let $\lambda_F$ be the density of PPP $\Phi_p$ that forms the parents points of the clusters and $c$ be the mean number of points (FAPs) in a cluster.  Therefore, the density of PCP $\Phi_F$ becomes $\lambda_Fc$. Let the cluster be of the form $\phi_c^x=\{y+x|y\in\phi_c\}$ for each $x\in\Phi_p$ and $\phi_c\in\mathcal{F}$ where $\mathcal{F}$ represent the family of random clusters of points centered at origin.
The complete PCP $\Phi_F$ becomes\begingroup\makeatletter\def\f@size{8}\check@mathfonts
\begin{equation}
\Phi_F=\bigcup\limits_{x\in\Phi_p}\phi_c^x 
\end{equation}\endgroup
There are two processes to realize a cluster $\phi_c$ of PCP, \textit{viz.}: Matern cluster process and Thomas cluster process. In both of these processes number of points in a cluster are Poisson distributed with mean $c$. In Matern process, each point is uniformly distributed over a disk of radius $R$  around origin. However, in Thomas process, each point is distributed using a symmetric normal distribution around the origin. For sake of simplicity we consider Matern cluster process $\phi_c$ to realize the clusters of the PCP $\Phi_F$.  Therefore, the distribution of a FAP in a cluster centered at origin can be written as \begingroup\makeatletter\def\f@size{8}\check@mathfonts
\begin{equation}
 f(y)=\begin{cases}
       &\frac{1}{\pi R^2},~~~~\|y\|\leq R\\
       &0, ~~~~~~~~\text{otherwise}
      \end{cases}
\end{equation}\endgroup

Both tiers use the set \mbox{\small{$\mathcal N$}} of \mbox{\small{$N$}} number of channels each of bandwidth \mbox{\small{$B$}}. The transmission powers of MBSs and FAPs are represented by \mbox{\small{$P_B$}} and \mbox{\small{$P_F$}}, respectively. Note that no power control is assumed at the BSs of either of the tiers. In our analysis we consider macro-tier and femto-tier to have distinct set of users, i.e. a closed access scenario is considered. Further, let \mbox{\small{$\mathcal{M}$}} be the set of \mbox{\small{$T$}} modulation and coding schemes (MCSs) employed to transmit data to the users. An interference limited scenario is considered and therefore, the noise is ignored in the analysis. Small scale fading is assumed, where the fading coefficients are assumed to be exponentially distributed with unity mean. A general power-law path loss model is used, i.e. the path loss with distance \mbox{\small{$r$}} is modeled as \mbox{\small{$r^{-\alpha}$}}, where \mbox{\small{$\alpha$}} is a path loss exponent. Let $\delta=\frac{2}{\alpha}$. Although, the path loss exponent may be a tier specific parameter, however like \cite{EkramHossain_2014}, the same path loss exponent for both tiers is considered for analytical tractability.
\subsection{Spectrum Sharing Modes for Two-tier Networks}
The spectrum sharing mode selection plays an important role in realization of coverage, mean user throughput, achievable network capacity, etc. 
In literature, mainly three types of spectrum sharing in a two tired network are available \textit{viz.}: co-channel, orthogonal, and partial. In co-channel spectrum sharing mode the FABs and MBSs have access to the same set of channels. However, in orthogonal spectrum sharing mode, FAPs and MBSs have distinct sets of channels. 
A subset of channels is accessed by both tiers and the remaining subset of channels is reserved for macro-tier only under the assumption of partial mode. The partial mode basically evades the fraction of interference to macro user from FAPs by limiting the access of set of channels for FAPs. 
\subsection{Macro User SIR and Channel Access Model under Co-Channel Mode}
\label{sec:MacroUserSINRandChannelAccessModel}
The MU is assumed to be associated to the BS with maximum received power. Let the location of serving MBS be \mbox{\small{$x_0$}}, i.e. \mbox{\small{$x_0=\arg\max_{x_i\in\Phi_B}P_B\|x_i\|^{-\alpha}$}}.
Without loss of generality, the MU is assumed to be located at the origin. 
The signal-to-interference ratio (SIR) of an MU associated to the MBS at \mbox{\small{$x_0\in\Phi_B$ ($\|x_0\|=r$)}} can be written as 
\begingroup\makeatletter\def\f@size{8}\check@mathfonts
\begin{equation}
 \Gamma_M(r) = \frac{h_0r^{-\alpha}P_B}{I_{\text{mm}}(r) + I_{\text{fm}}}
 \label{eq:MacroUserSINR}
\end{equation}\endgroup
where \newline\mbox{\small{$I_{\text{mm}}(r)=\sum_{x_i\in\Phi_B\setminus x_0} h_i\|x_i\|^{-\alpha}\xi_iP_B$}} is MBSs interference, 
\newline\mbox{\small{$I_{\text{fm}}=\sum_{x_i\in\tilde\Phi_F}h_i\|x_i\|^{-\alpha}\chi {P_F}$}} is FAPs interference. 
The \mbox{\small{$\tilde\Phi_F$}} represent the set of interfering co-channel FAPs in \mbox{\small{$\Phi_F$}}. 
The $\xi_i$'s are Bernoulli random variable with parameter \mbox{\small{$\zeta$}} that represents the activity factor of an MBS. We represent the activity factor for co-channel mode by $\zeta_C$, for orthogonal mode by \mbox{\small{$\zeta_O$}}, and for partial mode by \mbox{\small{$\zeta_P$}}.
The \mbox{\small{$h_i$'s}} represent the Rayleigh fading co-efficients and are i.i.d exponential random variable with unit mean. The \mbox{\small{$\chi$}} denotes the wall penetration loss. 

{The macro-tier is considered to provide real-time service having rate requirement of \mbox{\small{$R_{\text{th}}$}}.
The cellular traffic model requires the description of the service arrival and departure process along the time axis that by definition are stochastic in nature. Therefore, traffic modeling in random networks (in Euclidean space) will append the time dimension to the point process. Hence, the overall analysis requires modeling of a three-dimensional stochastic process where the arrival is random in both time and space. For this reason, the service/user arrival process is modeled using space-time PPP (STPPP) of intensity $\lambda_\text{M}$ (units/min$\cdot$m$^2$) in $\mathbb{R}^3$. Note that the traffic intensity $\lambda_\text{M}$ specifies the network load.
The user stays in the network for exponentially distributed time of mean $1/\mu$.
\begin{figure}
\centering
\hspace{-.45cm}\includegraphics[width=.26\textwidth]{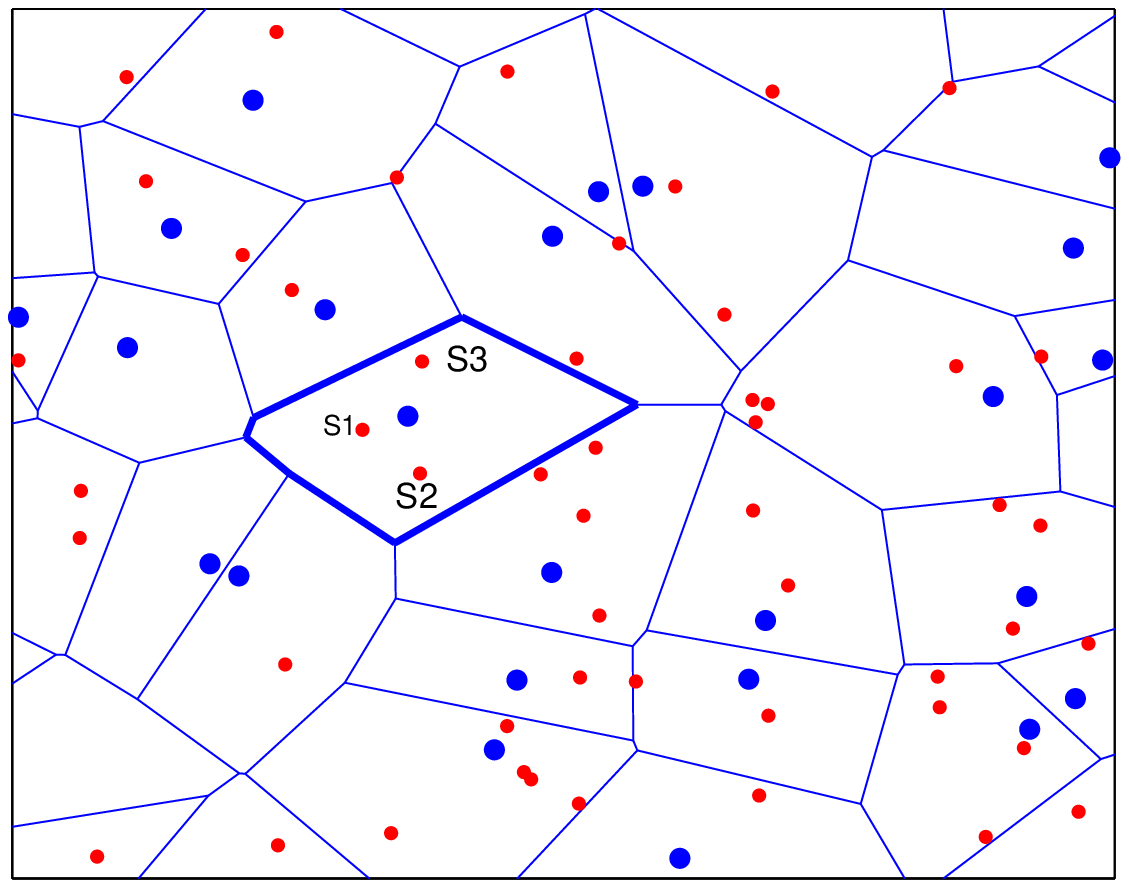}\hspace{-.35cm}\includegraphics[width=.26\textwidth]{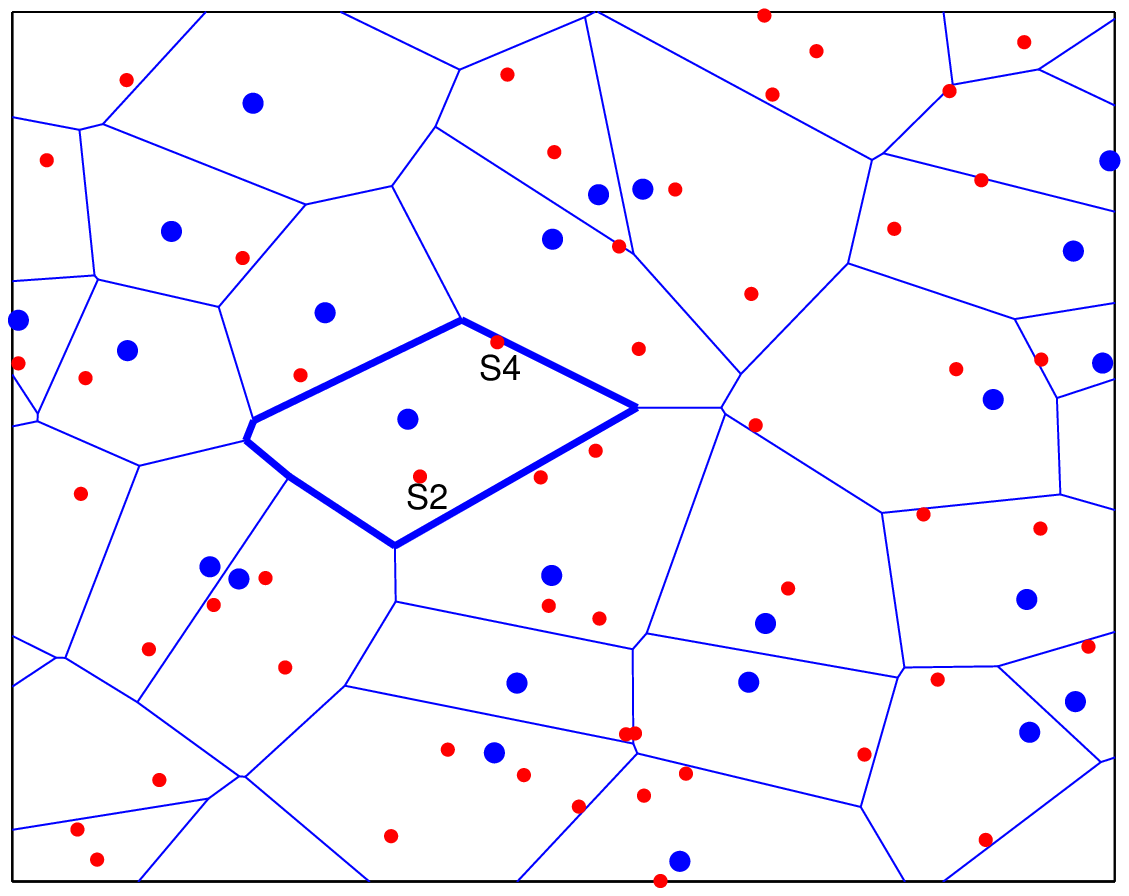}  
\vspace{-.15cm}
\resizebox{.33\textwidth}{!}{ 
 \begin{tikzpicture}[line cap=round,line join=round,>=triangle 45,x=1.0cm,y=1.0cm]
\clip(4.,1.) rectangle (15.,6.);
\draw [->] (4.54,1.42) -- (4.56,5.28);
\draw [->] (4.22,1.72) -- (14.88,1.72);
\draw (6.04,2.28)-- (6.04,1.72);
\draw (6.04,2.28)-- (9.9,2.28);
\draw (9.9,2.28)-- (9.9,1.72);
\draw (7.,3.2)-- (14.,3.2);
\draw (14.,3.2)-- (14.,2.6);
\draw (14.,2.6)-- (7.,2.6);
\draw (7.,2.6)-- (7.,3.2);
\draw (7.6,3.6)-- (7.6,4.2);
\draw (7.6,4.2)-- (10.8,4.2);
\draw (10.8,4.2)-- (10.8,3.6);
\draw (10.8,3.6)-- (7.6,3.6);
\draw (10.5,5.2)-- (14.3,5.2);
\draw (14.3,5.2)-- (14.3,4.6);
\draw (14.3,4.6)-- (10.5,4.6);
\draw (10.5,4.6)-- (10.5,5.2);
\draw (7.4,2.2) node[anchor=north west] {S1};
\draw (10.14,3.12) node[anchor=north west] {S2};
\draw (8.96,4.14) node[anchor=north west] {S3};
\draw (12.1,5.14) node[anchor=north west] {S4};
\draw (13.86,1.46) node[anchor=north west] {Time};
\draw [dash pattern=on 2pt off 2pt] (12.66,1.36)-- (12.66,5.54);
\draw (11.45,1.64) node[anchor=north west] {$t+\Delta t$};
\draw [dash pattern=on 2pt off 2pt] (8.26,1.14)-- (8.26,5.32);
\draw (7.85,1.64) node[anchor=north west] {$t$};
\end{tikzpicture}}
\vspace{-.15cm}
\caption{Illustration of STPPP for macro tier traffic. Top: snapshot of macro tier at time $t$ (top left) and $t+\Delta t$ (top right) [Blue dot denotes MBS and red dot denoted the user/service location]. Bottom: services in process of highlighted cell at time instances $t$ and $t+\Delta t$.}
\label{fig:STPPP}
\vspace{-.4cm}
\end{figure} 
Fig. \ref{fig:STPPP} illustrates the STPPP modeling of cellular traffic. Snapshots of the network, comprised of users and MBS, at time $t$ and $t+\Delta t$ are depicted in the figure. It can be seen that the locations of MBS, modeled using PPP $\Phi_B$ with density $\lambda_B$ (units/m$^2$) in $\mathbb{R}^2$, are static and do not change with time. On the other hand, the set of users changes over time as users randomly arrives in and departs from the network.  For example, consider the highlighted cell with incumbent service arrival and departure instances as shown in the figure. It can be seen that the cell comprises of three services  (say S1, S2, and S3) at time $t$. By time $t+\Delta t$, services S1 and S3 departs the cell
and new service S4 joins the cell. 
Ignoring the call blocking, traffic of each cell can be modeled using independent $M/M/\infty$ queues with different arrival rates depending on cell size. Thus the number of users in a typical cell becomes a Poisson random variable with parameter $\frac{\lambda_Ma}{\mu}$ \cite{Kleinrock} where $a$ is the cell area. Therefore, as the cells are disjoint with time invariant areas and users are uniformly distributed in each cell, the resultant process of user form a PPP of intensity $\frac{\lambda_M}{\mu}$ at any given time instance.

It is assumed that an MBS uniformly distribute its load across $N$ channels. Thus, the probability of an MU accessing  a channel becomes $\frac{1}{N}$. Therefore, the point process of MUs per channel can be obtained by splitting the overall PPP of macro service into $N$ processes independently with probability $\frac{1}{N}$. Therefore, MUs per channel can be modeled using homogeneous PPP $\Phi_m$ of density $\frac{\lambda_m}{\mu}=\frac{\lambda_M}{\mu}\frac{1}{N}$. Note that as a consequence, the intra-cell MUs  in  $\Phi_m$ are Poisson distributed with parameter $\frac{\lambda_ma}{\mu}$ where $a$ is the area of the cell. Therefore, it is assumed that these MUs access the channel at orthogonal time instances. 
\subsection{Femto User SIR and Channel Access Model under Co-Channel Mode}
\label{sec:FemtoUserSINRandChannelAccessModel}
The coverage analysis is carried out for an FU situated at the center of a cluster. 
The distance between the FAP and its user is assumed to be fixed (\mbox{\small{$r_0$}}) as in \cite{Chandrasekhar_2009} and the FAP is located at \mbox{\small{$x_1\in\phi_c$}}. The SIR of FU, located at origin, can be written as
\begingroup\makeatletter\def\f@size{8}\check@mathfonts
\begin{equation}
 \Gamma_F = \frac{h_0r_0^{-\alpha}P_F}{ I_{\text{mf}} + I_{\text{if}} + I_{\text{of}}}
 \label{eq:FemtoUserSINR}
\end{equation}\endgroup
where,
\newline \mbox{\small{$I_{\text{mf}}=\sum_{x_j\in \Phi_B}  h_j\|x_j\|^{-\alpha}\chi \xi_iP_B$}} is MBSs interference,
\newline \mbox{\small{$I_{\text{if}}=\sum_{x_j\in \tilde\phi_c\setminus x_1} h_j\|x_j\|^{-\alpha}\chi P_F$}} is intra-cluster interference, 
\newline \mbox{\small{$I_{\text{of}}=\sum_{x_j\in \tilde\Phi_F\setminus\tilde\phi_c} h_j\|x_j\|^{-\alpha}\chi P_F$}} is inter-cluster interference.
The $\tilde\phi_c$ is the set of interfering co-channel FAPs in $\phi_c$.

At any given time instance, the point process of macro users per channel follows PPP $\Phi_m$ with density $\frac{\lambda_m}{\mu}=\frac{\lambda_M}{N\mu}$ where overall $\lambda_M$ is the arrival density. It is assumed that the intra cell macro users in the PPP $\Phi_m$ accesses the channel in orthogonal time instances. The channel is considered to be accessible by the cognitive FAPs in a cluster only if the nearest MU in $\Phi_m$ (point process of the same channel) is beyond $r_m$ distance (i.e. exclusion range) from the center of the cluster. The probability that an MU in $\Phi_m$ is beyond $r_m$ can be written as \cite{HSJo_2012}
   \begin{equation}
   p_{r_m}=\exp\left(-\frac{\lambda_m}{\mu}\pi r_m^2\right) 
   \label{eq:ChannelAccessProb}
   \end{equation}
Therefore, as the MUs are modeled using $N$ independent PPPs, probability that a total of $n$ channels are accessible follows a Binomial distribution such that
\begin{align}
\mathbb{P}\left(n\right)&=\dbinom{N}{n} p_{r_m}^{n}\left(1-p_{r_m}\right)^{N-n}
\label{eq:nChannelAccessibleProb}
\end{align}
It is assumed that the accessible channels in a cluster are assigned such that each channel has the same number of co-channel FAPs.
This assumption capture the effect of the intra-cluster interference for given MU density. Thus effective number of co-channel FAPs increases with increase in MU density as $p_{r_m}$ varies inversely with MU density. Further, it is assumed that an FAP always have packets in buffer to transmit. Thus the FAP is always transmitting at achievable rate  of the channel. This type of traffic is referred as full-buffer best-effort traffic. This can  incorporated in the analysis by substituting the activity factor of FAPs equal to one. 

The typical example of the considered system model with opportunistic channel access for clustered FAPs is illustrated in Fig. \ref{fig:illustration}. It may be noted that the solid lines indicate desired links and dotted lines indicate interfering links.
For simplicity, single MU per channel under single macro cell is depicted. However, in actual we have assumed random sets of MUs per channel such that the channel is accessed by these MUs at orthogonal time instances. The node color indicates the channel index in which the wireless link is established. From the figure it can be seen that the nearest cluster of co-channel FAPs is alway beyond exclusion region of radius $r_m$. This makes void around an MU within which co-channel FAPs cannot exist. It can be seen that the FAP cluster 1 has has access to three channels which are equally shared by the six incumbent FAPs.
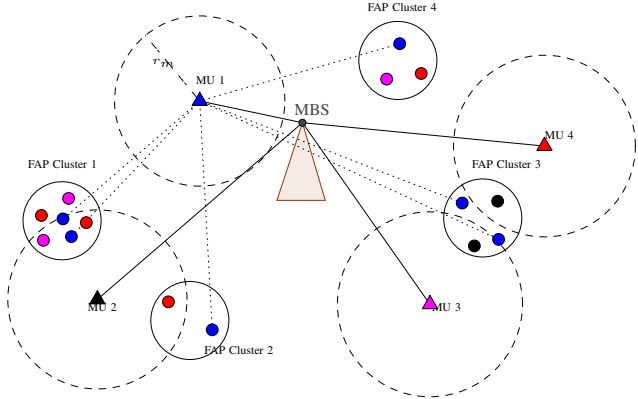
\begin{figure}[h]
  \resizebox{.5\textwidth}{!}{ \tiny{
  \begin{tikzpicture}[line cap=round,line join=round,>=triangle 45,x=1.0cm,y=1.0cm]
\clip(5.,7.) rectangle (15.,14.);
\fill[color=zzttqq,fill=zzttqq,fill opacity=0.1] (9.517613570923457,10.225840653562948) -- (9.897613570923458,11.385840653562948) -- (10.237613570923457,10.225840653562948) -- cycle;
\draw [color=zzttqq] (9.517613570923457,10.225840653562948)-- (9.897613570923458,11.385840653562948);
\draw [color=zzttqq] (9.897613570923458,11.385840653562948)-- (10.237613570923457,10.225840653562948);
\draw [color=zzttqq] (10.237613570923457,10.225840653562948)-- (9.517613570923457,10.225840653562948);
\draw (8.363800988253285,11.710185749698653)-- (9.897613570923456,11.385840653562944);
\draw (6.83605140865709,8.745624065482241)-- (9.897613570923456,11.385840653562944);
\draw (11.801237542344722,8.67287408550147)-- (9.897613570923456,11.385840653562944);
\draw (13.510862071892845,11.037248434876522)-- (9.897613570923456,11.385840653562944);
\draw [dash pattern=on 3pt off 3pt] (8.363800988253285,11.710185749698653) circle (1.2716948292985175cm);
\draw [dash pattern=on 3pt off 3pt] (6.83605140865709,8.745624065482241) circle (1.337739403031422cm);
\draw [dash pattern=on 3pt off 3pt] (11.801237542344722,8.67287408550147) circle (1.3815315056613529cm);
\draw [dash pattern=on 3pt off 3pt] (13.510862071892845,11.037248434876522) circle (1.3572551423077008cm);
\draw(6.30862667934379,9.921044226757534) circle (0.5830703606557937cm);
\draw(12.589304980335328,9.962822574657698) circle (0.5830703606557937cm);
\draw(11.322028427363643,12.316336173033676) circle (0.5830703606557937cm);
\draw(8.216504566784678,8.430949818318306) circle (0.5830703606557937cm);
\draw (8.244356798718123,12.149222781433018) node[anchor=north west] {MU 1};
\draw (13.45272417027208,11.327581939396433) node[anchor=north west] {MU 4};
\draw (11.767664138298741,8.765176601519629) node[anchor=north west] {MU 3};
\draw (6.628927346578392,8.751250485552907) node[anchor=north west] {MU 2};
\draw (5.7376559247081955,10.895872344428058) node[anchor=north west] {FAP Cluster 1};
\draw (10.806762136594937,13.235459826837314) node[anchor=north west] {FAP Cluster 4};
\draw (12.36648712486778,10.895872344428058) node[anchor=north west] {FAP Cluster 3};
\draw (8.36969184241862,8.110649151083706) node[anchor=north west] {FAP Cluster 2};
\draw [dash pattern=on 1pt off 1pt on 3pt off 4pt] (8.363800988253285,11.710185749698653)-- (7.552690937590077,12.689628563025552);
\draw (7.589829348282198,12.399892868834009) node[anchor=north west] {$r_m$};
\draw [dotted] (6.322552795310512,9.948896458690976)-- (8.363800988253285,11.710185749698653);
\draw [dotted] (6.447887839011008,9.684300255323263)-- (8.363800988253285,11.710185749698653);
\draw [dotted] (11.349880659297087,12.567006260434669)-- (8.363800988253285,11.710185749698653);
\draw [dotted] (8.550731349986002,8.291688658651088)-- (8.363800988253285,11.710185749698653);
\draw [dotted] (12.282930429067449,10.185640430125249)-- (8.363800988253285,11.710185749698653);
\draw [dotted] (12.8260489517696,9.6425219074231)-- (8.363800988253285,11.710185749698653);
\begin{scriptsize}
\draw [fill=black,shift={(6.83605140865709,8.745624065482241)}] (0,0) ++(0 pt,3.75pt) -- ++(3.2475952641916446pt,-5.625pt)--++(-6.495190528383289pt,0 pt) -- ++(3.2475952641916446pt,5.625pt);
\draw [fill=ffqqqq,shift={(13.510862071892845,11.037248434876522)}] (0,0) ++(0 pt,3.75pt) -- ++(3.2475952641916446pt,-5.625pt)--++(-6.495190528383289pt,0 pt) -- ++(3.2475952641916446pt,5.625pt);
\draw [fill=qqqqff,shift={(8.363800988253285,11.710185749698653)}] (0,0) ++(0 pt,3.75pt) -- ++(3.2475952641916446pt,-5.625pt)--++(-6.495190528383289pt,0 pt) -- ++(3.2475952641916446pt,5.625pt);
\draw [fill=ffqqff,shift={(11.801237542344722,8.67287408550147)}] (0,0) ++(0 pt,3.75pt) -- ++(3.2475952641916446pt,-5.625pt)--++(-6.495190528383289pt,0 pt) -- ++(3.2475952641916446pt,5.625pt);
\draw [fill=uuuuuu] (9.897613570923456,11.385840653562944) circle (1.5pt);
\draw[color=uuuuuu] (10.040825758425237,11.578252026797424) node {MBS};
\draw [fill=qqqqff] (6.322552795310512,9.948896458690976) circle (2.5pt);
\draw [fill=ffqqff] (6.030104360009354,9.628595791456377) circle (2.5pt);
\draw [fill=ffqqqq] (6.,10.) circle (2.5pt);
\draw [fill=ffqqqq] (6.670705694478557,9.89319199482409) circle (2.5pt);
\draw [fill=ffqqff] (6.4061094911108425,10.255271009958856) circle (2.5pt);
\draw [fill=qqqqff] (6.447887839011008,9.684300255323263) circle (2.5pt);
\draw [fill=black] (12.46396993663483,9.545039095656046) circle (2.5pt);
\draw [fill=black] (12.812122835802876,10.213492662058691) circle (2.5pt);
\draw [fill=qqqqff] (12.282930429067449,10.185640430125249) circle (2.5pt);
\draw [fill=qqqqff] (12.8260489517696,9.6425219074231) circle (2.5pt);
\draw [fill=qqqqff] (11.349880659297087,12.567006260434669) circle (2.5pt);
\draw [fill=ffqqff] (11.154915035762981,12.037813853699241) circle (2.5pt);
\draw [fill=ffqqqq] (11.670181326531688,12.121370549499572) circle (2.5pt);
\draw [fill=qqqqff] (8.550731349986002,8.291688658651088) circle (2.5pt);
\draw [fill=ffqqqq] (7.896203899550076,8.70947213765274) circle (2.5pt);
\draw[color=black] (8.940662597054214,12.608784608334835) node {};
\end{scriptsize}
\end{tikzpicture}}}\normalsize
\caption{Typical example of two-tier HetNet with cognitive clustered FAPs.}
\label{fig:illustration}
\end{figure}


\section{Load-aware Performance Analysis of Two-tier HetNets}
Below we present the load-aware performance analysis of two-tier HetNets with clustered cognitive FAPs under co-channel mode. In next two sub-sections we extend the analysis for orthogonal and partial modes, respectively, followed by a sub-section providing network design insights based on the developed analytical framework.
\subsection{Analysis under Co-Channel Spectrum Sharing}
\label{sec:CoverageAnalysisforCo-ChannelAllocation}
Under co-channel mode, both the tiers have access to the entire channel set \mbox{\small{$\mathcal{N}$}}. 
An MBS accesses a typical channel with probability \mbox{\small{$\zeta_C$}}.
Therefore, thinning the process of MBSs with density \mbox{\small{$\lambda_B$}} by a factor of \mbox{\small{$\zeta_C$}} results in a process of co-channel MBS of density \mbox{\small{$\tilde\lambda_B=\zeta_C\lambda_B$}}. 
Moreover, the cluster of FAPs are restricted to access a channel if there exits no MU (accessing same channel) within the exclusion range.  
This creates holes of radius \mbox{\small{$r_m$}} in the clusters process \mbox{\small{$\Phi_p$}}. 
Note that the point process of parent points of co-channel clusters in \mbox{\small{$\Phi_p$}} is not a PPP but a Poisson hole process. Nonetheless, independent thinning of point process of co-channel clusters (i.e. parent points of co-channel clusters) with probability $p_{r_m}$, given by \eqref{eq:ChannelAccessProb}, yields a good approximation with PPP \cite{Haenggi_2012}. Therefore, effective density of FAPs' clusters per channel becomes \mbox{\small{$\tilde\lambda_F=\lambda_Fp_{r_m}$}}.
To ensure the FAPs transmission, we have assumed that there is one dedicated channel for femto-tier.  Therefore, \mbox{\small{$N$}} is replaced by \mbox{\small{$N-1$}} in the probability mass function (p.m.f) of the number of accessible channels \eqref{eq:nChannelAccessibleProb}. The access of channel, chosen to be dedicated, at FAPs is independent of the incumbent MUs. Therefore, coverage analysis with cognitive FAPs is valid only for the MUs those are using remaining set of channels.

In the following we first derive the coverage probability of an FU and an MU, respectively for a given activity factor. Thereafter, exploiting the activity factor and coverage probability coupling, we present a framework to evaluate the activity factor of MBS for fractional load conditions. {The evaluated activity factor is further plugged into the derived expressions to get the actual load-aware coverage probabilities.}
\subsubsection{Femto User coverage probability Analysis}
\label{sec:FemtoUsercoverage probabilityAnalysis}
The coverage probability of an FU is the probability that its SIR is above the threshold \mbox{\small{$\beta_F$}}. 
The coverage analysis is carried out for an FU situated at the center of a cluster. 
The interfering nodes for an FU are categorized in three sets: 1) co-channel MBSs in \mbox{\small{$\Phi_B$}}, 2) intra-cluster co-channel FAPs in \mbox{\small{$\tilde\phi_c$}}, and 3) inter-cluster co-channel FAPs in \mbox{\small{$\tilde\Phi_F\setminus\tilde\phi_c$}}. The Laplace transform (LT) of interference from these sets of interferes are given in Lemma \ref{lemma1}, \ref{lemma1a}, and \ref{lemma2}, respectively. 
\begin{lemma}
\label{lemma2}
 For a given $\zeta_C$, the LT of macro-tier interference \mbox{\small{$I_{\text{mf}}$}} is:\begingroup\makeatletter\def\f@size{8}\check@mathfonts
 \begin{equation}
  \mathcal{L}_{I_{\text{mf}}}\left(s,\zeta_C\right)=\exp\left(-\pi^2\delta\lambda_B\zeta_C\left(s\chi {P_B}\right)^{\delta}\csc\left[\pi\delta\right]\right)
  \label{eq:Laplace_MacroInterference}
 \end{equation} \endgroup
\end{lemma}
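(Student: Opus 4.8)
The plan is to compute the Laplace transform directly from its definition $\mathcal{L}_{I_{\text{mf}}}(s,\zeta_C)=\mathbb{E}\big[e^{-sI_{\text{mf}}}\big]$ by exploiting the mutual independence of the fading coefficients $h_j$, the Bernoulli activity indicators $\xi_j$, and the point process $\Phi_B$. Writing the exponential of the sum as a product over the atoms of $\Phi_B$ and moving the expectation over $\{h_j,\xi_j\}$ inside, I would first reduce the problem to the per-node expectation $\mathbb{E}_{h,\xi}\big[\exp(-s h\,r^{-\alpha}\chi P_B\,\xi)\big]$ at a generic distance $r=\|x\|$. Averaging the unit-mean exponential $h$ produces the factor $(1+s r^{-\alpha}\chi P_B)^{-1}$, and averaging the Bernoulli $\xi$ (which equals $1$ with probability $\zeta_C$ and $0$ otherwise) then yields $1-\zeta_C\frac{s r^{-\alpha}\chi P_B}{1+s r^{-\alpha}\chi P_B}$.

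The second step is to invoke the probability generating functional of the homogeneous PPP $\Phi_B$ of intensity $\lambda_B$, which converts the product of per-node factors into $\exp\!\big(-\lambda_B\int_{\mathbb{R}^2}(1-f(x))\,dx\big)$ with $1-f(x)=\zeta_C\frac{s\|x\|^{-\alpha}\chi P_B}{1+s\|x\|^{-\alpha}\chi P_B}$. Passing to polar coordinates collapses the planar integral to $2\pi\lambda_B\zeta_C\int_0^\infty \frac{s\,r^{-\alpha}\chi P_B}{1+s\,r^{-\alpha}\chi P_B}\,r\,dr$, so all that remains is a single radial integral.

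For that integral, the substitution $u=r^\alpha$ together with the shorthand $a=s\chi P_B$ turns it into $\frac{a}{\alpha}\int_0^\infty\frac{u^{\delta-1}}{u+a}\,du$ with $\delta=2/\alpha$; rescaling $u=av$ reduces this to $a^\delta/\alpha$ times the Euler integral $\int_0^\infty\frac{v^{\delta-1}}{1+v}\,dv=\pi\csc(\pi\delta)$. Using $1/\alpha=\delta/2$ gives $\frac{\pi\delta}{2}(s\chi P_B)^\delta\csc(\pi\delta)$, and substituting back into the exponent (the prefactor $2\pi\lambda_B\zeta_C$ absorbs the $\tfrac12$) reproduces the claimed $\exp(-\pi^2\delta\lambda_B\zeta_C(s\chi P_B)^\delta\csc[\pi\delta])$.

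I expect the only genuinely delicate point to be this radial integral: the Euler/reflection-formula identity requires $0<\delta<1$, i.e. the physically standard regime $\alpha>2$, which is precisely what guarantees integrability of the tail (the integrand decays like $r^{1-\alpha}$, while near the origin it behaves like $r$) and hence finiteness of the interference transform. I would state this convergence condition explicitly rather than treat the identity as purely formal, since everything else — the factorization, the fading and activity averages, and the PGFL step — is routine once independence is used.
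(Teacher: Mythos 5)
Your derivation is correct and is essentially the argument the paper outsources to its citation of Haenggi--Ganti: the per-node average over the Bernoulli activity marks $\xi_j$ inside the PGFL is mathematically identical to the paper's view of independently thinning $\Phi_B$ to the effective co-channel density $\zeta_C\lambda_B$, after which the polar-coordinate integral and the Euler reflection identity $\int_0^\infty \frac{v^{\delta-1}}{1+v}\,dv=\pi\csc(\pi\delta)$ (valid for $0<\delta<1$, i.e. $\alpha>2$) yield exactly \eqref{eq:Laplace_MacroInterference}. Your explicit statement of the convergence condition $\alpha>2$ is a small but worthwhile addition that the paper leaves implicit.
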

\begin{proof}
 Refer \cite{haenggi2009interference}.
\end{proof}
It can be noted that the LT of macro-tier interference \eqref{eq:Laplace_MacroInterference} differ from \cite{haenggi2009interference} in terms of MBSs density. The fractional activity of an MBS causes in reducing the effective density of co-channel MBS ($\zeta_C\lambda_B$) . This fact results in reduced mean macro-tier interference by factor $\zeta_C$.
\begin{lemma}
\label{lemma1}
 For a given \mbox{\small{$r_m$}}, the approximated LT of the intra-cluster interference \mbox{\small{$I_{\text{if}}$}} can be written as:
 \begingroup\makeatletter\def\f@size{8}\check@mathfonts
 \begin{align}
  \mathcal{L}_{I_{\text{if}}}\left(s,r_m\right)\approx\exp&\left(-\frac{ce^{\pi\lambda_m r_m^2/\mu}}{N R^2}\left(s\chi P_F\right)^{\delta}\mathcal{H}_1\left(s\chi P_F,R,\delta\right)\right)
  \label{eq:Laplace_IntraClusterInterference}\\
  \text{where~~~}&\mathcal{H}_1\left(s\chi P_F,R,\delta\right)=\int_{0}^{\frac{1}{R^{2}}{\left(s\chi P_F\right)^{\delta}}}\frac{1}{1+t^{\frac{1}{\delta}}}dt\nonumber
 \end{align}\endgroup
For \mbox{\small{$\delta=\frac{1}{2}$}}, \mbox{\small{$\mathcal{H}_1\left(s\chi P_F,R,\frac{1}{2}\right)=\arctan\left(\frac{R^2}{(s\chi P_F)^{\frac{1}{2}}}\right)$}}.
\end{lemma}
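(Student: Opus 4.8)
The plan is to treat the interfering intra-cluster FAPs as a single spatially inhomogeneous Poisson cluster centred at the FU (placed at the origin) and to apply the corresponding probability generating functional (PGFL). The two ingredients are (i) the \emph{effective} mean number of co-channel FAPs seen on the FU's channel, and (ii) the shot-noise Laplace transform over the Matérn disk.

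First I would fix the effective interferer intensity. The cluster $\phi_c$ carries a Poisson number of FAPs with mean $c$, each placed independently with the Matérn density $f(y)=\frac{1}{\pi R^2}$ on $\|y\|\le R$. By the stated assumption these FAPs are spread equally over the accessible channels; since the mean number of accessible channels is $N p_{r_m}$ and a typical FAP lands on the tagged channel with probability $\approx 1/(N p_{r_m})$, thinning the Poisson cluster leaves a Poisson number of co-channel interferers of mean
\[ \Lambda=\frac{c}{N\,p_{r_m}}=\frac{c}{N}\,e^{\pi\lambda_m r_m^2/\mu}, \]
where the last equality uses $p_{r_m}=\exp(-\pi\lambda_m r_m^2/\mu)$ from \eqref{eq:ChannelAccessProb}; the removal of the single serving FAP is absorbed into this approximation.

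Next, because a Poisson$(\Lambda)$ number of i.i.d. points with spatial density $f$ and i.i.d. unit-mean exponential fading $h$ has PGFL of exponential form, I would write
\[ \mathcal{L}_{I_{\text{if}}}(s)=\exp\left(-\Lambda\int_{\|y\|\le R}\left(1-\mathbb{E}_h\!\left[e^{-s h\|y\|^{-\alpha}\chi P_F}\right]\right)f(y)\,dy\right). \]
Evaluating the fading expectation via $\mathbb{E}_h[e^{-shg}]=(1+sg)^{-1}$ with $g=\|y\|^{-\alpha}\chi P_F$ turns the integrand into $\frac{s\chi P_F\|y\|^{-\alpha}}{1+s\chi P_F\|y\|^{-\alpha}}$. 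Passing to polar coordinates and substituting $t=\|y\|^2/(s\chi P_F)^\delta$ (so that $\|y\|^{\alpha}=(s\chi P_F)\,t^{1/\delta}$, using $\delta=2/\alpha$, and $f\,dy=\frac{(s\chi P_F)^\delta}{R^2}\,dt$) collapses the disk integral to
\[ \frac{(s\chi P_F)^\delta}{R^2}\int_0^{R^2/(s\chi P_F)^\delta}\frac{dt}{1+t^{1/\delta}}=\frac{(s\chi P_F)^\delta}{R^2}\,\mathcal{H}_1(s\chi P_F,R,\delta). \]
Multiplying by $\Lambda$ reproduces \eqref{eq:Laplace_IntraClusterInterference}; for $\delta=\tfrac12$ the remaining integral is $\int_0^{R^2/(s\chi P_F)^{1/2}}\frac{dt}{1+t^2}=\arctan\!\big(R^2/(s\chi P_F)^{1/2}\big)$, recovering the stated closed form (and indicating that the upper limit printed in the statement of $\mathcal{H}_1$ should read $R^2/(s\chi P_F)^\delta$).

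The spatial integral itself is routine calculus, so the genuine work — and the reason the lemma is only \emph{approximate} — lies in pinning down $\Lambda$. This rests on three approximations that I would state explicitly: the Poisson-hole-process-to-PPP reduction used for the channel-access thinning \cite{Haenggi_2012}, the replacement of the random number $n$ of accessible channels by its mean through $\mathbb{E}[1/n]\approx 1/\mathbb{E}[n]=1/(N p_{r_m})$, and the absorption of the single serving FAP into the Poisson count. Keeping these approximations transparent, rather than the integral manipulation, is the delicate step.
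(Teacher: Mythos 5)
Your proposal is correct and takes essentially the same route as the paper's proof: both reduce the problem to a Poisson-cluster PGFL with effective co-channel mean $\tilde c \approx \frac{c}{N p_{r_m}} = \frac{c}{N}e^{\pi\lambda_m r_m^2/\mu}$ and then evaluate the identical polar-coordinate integral, the paper merely reaching the exponential form more laboriously (MGF of the Poisson count, the bound $1-z^{1/(n+1)}\geq \frac{1}{n+1}(1-z)$, a Jensen step, and an exact Binomial evaluation of $c\,\mathbb{E}\left[\frac{1}{n+1}\right]$) where you compress that averaging into the single mean-field replacement you flag explicitly. Your side remark is also right: the paper's own substitution $t=r^2/(s\chi P_F)^{\delta}$ with $r\in[0,R]$, together with the $\delta=\frac{1}{2}$ special case, confirms that the upper limit printed in $\mathcal{H}_1$ should read $R^2/(s\chi P_F)^{\delta}$.
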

\begin{proof}
 Refer Appendix \ref{app:lemma1}.
\end{proof}
It can be noted that the inter-cluster and inter-tier interference statistics for an FU is independent of its location in a cluster. Therefore, to justify the consideration of an FU at the center of the cluster in Lemma \ref{lemma1}, we show that the intra-cluster interference is almost independent of an FUs location.
 Let the FU be at \mbox{\small{$y\in\mathcal{B}(0,R)$}}. Thus \eqref{eq:lemma11} can be rewritten as\begingroup\makeatletter\def\f@size{8}\check@mathfonts
 \begin{equation}
  \mathcal{L}_{I_\text{if}}(s,y)=\exp\left(-\frac{ce^{\pi\lambda_mr_m^2/\mu}}{N\pi R^2}\int_{\|x\|\leq R}\frac{dx}{1+\frac{1}{s\chi P_F}\|x-y\|^\alpha}\right)
 \end{equation}\endgroup
 Therefore, the first moment of \mbox{\small{$I_{\text{if}}$}} becomes\begingroup\makeatletter\def\f@size{8}\check@mathfonts
\begin{align}
 \bar I_{\text{if}}(y)&=-\frac{d}{ds}\mathcal{L}_{I_\text{if}}(s,y)\big |_{s=0}\nonumber\\
 &=\frac{ce^{\pi\lambda_mr_m^2/\mu}}{N\pi R^2}\int_{\|x\|\leq R}P_F\|x-y\|^{-\alpha} dx
\end{align}\endgroup
and the second moment of \mbox{\small{$I_{\text{if}}$}} becomes\begingroup\makeatletter\def\f@size{8}\check@mathfonts
\begin{align}
 &\hat I_{\text{if}}(y)=\frac{d^2}{ds^2}\mathcal{L}_{I_\text{if}}(s,y)\big |_{s=0}\nonumber\\
 &=\frac{2ce^{\pi\lambda_mr_m^2/\mu}}{N\pi R^2}\int_{\|x\|\leq R}\left[P_f^2\|x-y\|^{-2\alpha}+P_f\|x-y\|^{-\alpha}\right]dx
\end{align}\endgroup
By plotting above expression it can be realized that these moments are almost independent of location of an FU. 
\begin{lemma}
\label{lemma1a}
  For a given \mbox{\small{$r_m$}}, the approximated LT of the inter-cluster interference \mbox{\small{$I_{\text{of}}$}} can be written as:\begingroup\makeatletter\def\f@size{8}\check@mathfonts
 \begin{equation}
  \mathcal{L}_{I_{\text{of}}}\left(s,r_m\right)\approx\exp\left(-\pi^2\delta\frac{\lambda_F c}{N}\left(s\chi {P_F}\right)^{\delta}\csc\left[\pi\delta\right]\right)
  \label{eq:Laplace_InterClusterInterference}
 \end{equation}\endgroup
\end{lemma}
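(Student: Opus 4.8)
The plan is to evaluate $\mathcal{L}_{I_{\text{of}}}$ by approximating the superposition of FAP (daughter) points contributed by all co-channel clusters other than the FU's own cluster $\tilde\phi_c$ as a single homogeneous PPP, and then reusing the homogeneous-PPP interference computation already invoked in Lemma \ref{lemma2}. The first step is to fix the effective per-channel density of these inter-cluster interferers. After the independent $p_{r_m}$-thinning discussed in the co-channel subsection, the parent points of the co-channel clusters on a given channel form an approximate PPP of density $\tilde\lambda_F=\lambda_F p_{r_m}$, with $p_{r_m}$ given by \eqref{eq:ChannelAccessProb}. Under the equal-sharing channel-assignment assumption, a co-channel cluster places on average $\frac{c}{N p_{r_m}}$ of its $c$ FAPs on that channel, which is the same mean count appearing in the intra-cluster term of Lemma \ref{lemma1}. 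Multiplying these two quantities yields the effective inter-cluster FAP intensity $\tilde\lambda_F\cdot\frac{c}{N p_{r_m}}=\frac{\lambda_F c}{N}$; the $p_{r_m}$ factors cancel, since restricting to co-channel clusters lowers their density but correspondingly raises the number of FAPs each packs onto the accessible channel.

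With the FU placed at the origin (the center of $\tilde\phi_c$), the contribution of a daughter point at location $x$ with Rayleigh fading $h$ is $h\chi P_F\|x\|^{-\alpha}$, so averaging over $h$ gives the per-point factor $1-\mathbb{E}_h[e^{-s h\chi P_F\|x\|^{-\alpha}}]=\frac{s\chi P_F\|x\|^{-\alpha}}{1+s\chi P_F\|x\|^{-\alpha}}$. Writing the Laplace functional of the cluster process through its probability generating functional and then replacing the daughter process by a homogeneous PPP of intensity $\frac{\lambda_F c}{N}$ (the approximation that neglects the residual intra-cluster correlations of the distant interfering clusters) reduces the problem to the standard form
\begin{equation}
\mathcal{L}_{I_{\text{of}}}(s,r_m)\approx\exp\left(-\frac{\lambda_F c}{N}\int_{\mathbb{R}^2}\frac{s\chi P_F\|x\|^{-\alpha}}{1+s\chi P_F\|x\|^{-\alpha}}\,dx\right).
\end{equation}

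The last step is to evaluate this integral in polar coordinates. With $t=\|x\|$ one obtains $2\pi\int_0^\infty \frac{s\chi P_F t^{-\alpha}}{1+s\chi P_F t^{-\alpha}}t\,dt=\pi(s\chi P_F)^{\delta}\Gamma(1-\delta)\Gamma(1+\delta)$, and the reflection identity $\Gamma(1-\delta)\Gamma(1+\delta)=\delta\Gamma(\delta)\Gamma(1-\delta)=\pi\delta\csc(\pi\delta)$ makes the exponent equal to $-\pi^2\delta\frac{\lambda_F c}{N}(s\chi P_F)^{\delta}\csc[\pi\delta]$, which is exactly \eqref{eq:Laplace_InterClusterInterference}. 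I expect the main obstacle to be the justification of the PPP approximation for the daughter process: the inter-cluster FAPs genuinely form a Neyman--Scott (clustered) process, so replacing it by a PPP of matching intensity discards the intra-cluster correlations. This is defensible here precisely because $\tilde\phi_c$ is excluded and the remaining clusters are typically far from the origin, making their daughters behave, in aggregate, like an independently scattered point set; the other delicate point is confirming the $p_{r_m}$ cancellation in the effective intensity, which relies on the equal-sharing assumption linking the intra- and inter-cluster FAP counts.
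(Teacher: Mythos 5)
Your proposal is correct and reaches the paper's formula, but it gets there by a genuinely different justification than the paper's own proof. The paper never posits the PPP replacement up front: it starts from the exact PGFL of the thinned Neyman--Scott process, \eqref{eq:probability_generating_function_PCP}, and erases the cluster structure through an explicit chain of one-sided steps --- the inequality $1-z^{1/(n+1)}\geq\tfrac{1}{n+1}(1-z)$ to extract the per-channel sharing from the inner exponent, the linearization $1-e^{-\Delta x}\leq\Delta x$ applied to the outer PGFL exponent (this is exactly the point where the clustered process becomes PPP-like, i.e., the rigorous counterpart of your ``neglect residual intra-cluster correlations''), and then a Jensen-type step to replace the expectation over the binomial number $n$ of accessible channels in \eqref{eq:nChannelAccessibleProb} by $\tilde c = c\,\mathbb{E}\left[\tfrac{1}{n+1}\right]\approx \tfrac{c}{Np_{r_m}}$. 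Your route collapses all of this into a single intensity-matching assertion: multiply the thinned parent density $\lambda_F p_{r_m}$ by the mean co-channel cluster size $\tfrac{c}{Np_{r_m}}$ and treat the superposition as a homogeneous PPP of density $\tfrac{\lambda_F c}{N}$, after which your integral evaluation (via the reflection identity) is identical to the paper's final step. What the paper's derivation buys is bookkeeping of where and in which direction each approximation enters --- it can then observe that the result combines an upper bound with a lower-bounded expression, and it handles the random channel-sharing variable $n$ explicitly rather than through a mean-count substitute; what your route buys is brevity and a transparent physical reading of the $p_{r_m}$ cancellation, and it is consistent with the fact the paper itself cites from \cite{RKGanti_2009} that the interference LT of a PCP is bounded by that of the equal-density PPP. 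One small tightening you should make: state the per-cluster mean count $\tfrac{c}{Np_{r_m}}$ as $c\,\mathbb{E}\left[\tfrac{1}{n+1}\right]$ under the binomial law of accessible channels (this is what the equal-sharing assumption actually yields, and the $\approx$ hides the factor $1-(1-p_{r_m})^{N}$, which is only negligible for large $N$), rather than asserting it outright.
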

\begin{proof}
 Refer Appendix \ref{app:lemma1a}.
\end{proof}
It can be seen that the above expression of the LT of interference from clustered FAPs is equivalent to the LT of interference from homogeneous MBSs given in \eqref{eq:Laplace_MacroInterference}. In \cite{RKGanti_2009}, it is shown that the LT of interference realized through the Poisson cluster process is lower bounded by the LT of interference realized through the homogeneous PPP of same density. However, the approximation in \eqref{eq:Laplace_InterClusterInterference} occurs because of applying upper bound on lower bounded expression while taking expectation over Binomial distributed random variable \eqref{eq:nChannelAccessibleProb} which decided the co-channel FAPs in a cluster.
\begin{theorem}
\label{theorem:coverage probability_FU_CoChannel}
 For given $r_m$ and $\zeta_C$, the approximated coverage probability of an FU in two-tier network using co-channel mode is given by 
 \begingroup\makeatletter\def\f@size{8}\check@mathfonts
   \begin{align}
\mathbf{F}_C&\left(\beta_F|r_m,\zeta_C\right)\approx\exp\left\{-\delta\pi^2r_0^2(\beta_F\chi)^{\delta}\left[\frac{\lambda_Fc}{N}+\lambda_B\zeta_C\left(\frac{P_B}{P_F}\right)^{\delta}\right]\csc\left[\pi\delta\right]\right.\nonumber\\
&\left.\vphantom{\left[\frac{\lambda_Fc}{N}+\lambda_B\zeta_C\left(\frac{P_B}{P_F}\right)^{\delta}\right]}-\frac{ce^{\pi\lambda_m r_m^2/\mu}}{N}\frac{r_0^2}{R^2}\left(\beta_F\chi\right)^\delta\mathcal{H}_1\left(r_0^{2/\delta}\chi\beta_F,R,\delta\right) \right\}  \label{eq:FemtoUsercoverage probability}
 \end{align}\endgroup
\end{theorem}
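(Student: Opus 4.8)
The plan is to exploit the fact that the desired-link fading coefficient $h_0$ is exponentially distributed with unit mean, which converts the coverage probability into an expectation of a negative exponential of the aggregate interference. First I would write
\begin{equation}
\mathbf{F}_C\left(\beta_F|r_m,\zeta_C\right)=\mathbb{P}\left(\Gamma_F>\beta_F\right)=\mathbb{P}\left(h_0>\frac{\beta_F r_0^{\alpha}}{P_F}\left(I_{\text{mf}}+I_{\text{if}}+I_{\text{of}}\right)\right).
\end{equation}
Conditioning on the aggregate interference, which is independent of $h_0$, and using the survival function $\mathbb{P}(h_0>x)=e^{-x}$ for $x\geq0$ gives
\begin{equation}
\mathbf{F}_C\left(\beta_F|r_m,\zeta_C\right)=\mathbb{E}\left[\exp\left(-s\left(I_{\text{mf}}+I_{\text{if}}+I_{\text{of}}\right)\right)\right],\quad s=\frac{\beta_F r_0^{\alpha}}{P_F}.
\end{equation}

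Next I would argue that the three interference terms are mutually independent: $I_{\text{mf}}$ is generated by the MBS process $\Phi_B$, $I_{\text{if}}$ by the intra-cluster co-channel FAPs in $\tilde\phi_c$, and $I_{\text{of}}$ by the inter-cluster co-channel FAPs in $\tilde\Phi_F\setminus\tilde\phi_c$; these point processes together with their i.i.d. fading marks are independent by construction. Hence the expectation factors as a product, and I may substitute the expressions from Lemma \ref{lemma2}, Lemma \ref{lemma1}, and Lemma \ref{lemma1a}, all evaluated at the common argument $s$:
\begin{equation}
\mathbf{F}_C\left(\beta_F|r_m,\zeta_C\right)\approx\mathcal{L}_{I_{\text{mf}}}\left(s,\zeta_C\right)\,\mathcal{L}_{I_{\text{if}}}\left(s,r_m\right)\,\mathcal{L}_{I_{\text{of}}}\left(s,r_m\right).
\end{equation}
The approximation symbol is inherited solely from the approximations already present in Lemma \ref{lemma1} and Lemma \ref{lemma1a}; the factorization itself is exact.

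Finally I would perform the substitution $s=\beta_F r_0^{\alpha}/P_F$ and simplify. The key observation is that, since $\delta=2/\alpha$, each factor of the form $(s\chi P)^{\delta}$ generates $r_0^{\alpha\delta}=r_0^2$; concretely $(s\chi P_B)^{\delta}=r_0^2(\beta_F\chi)^{\delta}(P_B/P_F)^{\delta}$ and $(s\chi P_F)^{\delta}=r_0^2(\beta_F\chi)^{\delta}$, while the argument of $\mathcal{H}_1$ becomes $s\chi P_F=r_0^{2/\delta}\chi\beta_F$. Adding the three exponents and grouping the two $\csc[\pi\delta]$ contributions from $\mathcal{L}_{I_{\text{mf}}}$ and $\mathcal{L}_{I_{\text{of}}}$ into the bracket $\left[\frac{\lambda_Fc}{N}+\lambda_B\zeta_C(P_B/P_F)^{\delta}\right]$ then yields the claimed expression. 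I expect the only conceptual step requiring care to be the independence justification of the factorization and checking that the single scaling $s=\beta_F r_0^{\alpha}/P_F$ threads consistently through all three lemmas; the remaining manipulation is routine algebra, as the substantive work is already carried out in the three preceding lemmas.
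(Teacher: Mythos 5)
Your proposal matches the paper's proof essentially step for step: both convert the coverage probability into a Laplace-transform product via the exponential fading of $h_0$, factorize over the three independent interferer sets, and substitute Lemmas \ref{lemma2}, \ref{lemma1}, and \ref{lemma1a} at $s=\beta_F r_0^{\alpha}/P_F$. The only minor caveat is your claim that the factorization is ``exact'': since channel accessibility at the typical cluster and at interfering clusters is driven by the same MU process $\Phi_m$, the terms $I_{\text{if}}$ and $I_{\text{of}}$ are only approximately independent, an assumption the paper makes implicitly as well.
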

\begin{proof}
 The coverage probability of an FU can be written as\begingroup\makeatletter\def\f@size{8}\check@mathfonts
 \begin{align}
  &\mathbf{F}_C\left(\beta_F|r_m,\zeta_C\right)=\mathbb{P}\left(\frac{h_0r_0^{-\alpha}P_F}{I_{\text{if}}+I_{\text{of}}+I_{\text{mf}}}\geq \beta_F\right)\nonumber\\
  &=\mathbb{P}\left(h_0\geq \frac{\beta_Fr_0^{\alpha}}{P_F}\left(I_{\text{if}}+I_{\text{of}}+I_{\text{mf}}\right)\right)\nonumber\\
  &=\mathbb{E}_{I_{\text{mf}}}\left[\exp\left(-\frac{\beta_Fr_0^{\alpha}}{P_F}I_{\text{mf}}\right)\right]\mathbb{E}_{I_{\text{if}}}\left[\exp\left(-\frac{\beta_Fr_0^{\alpha}}{P_F}I_{\text{if}}\right)\right]\nonumber\\
  &~~\mathbb{E}_{I_{\text{of}}}\left[\exp\left(-\frac{\beta_Fr_0^{\alpha}}{P_F}I_{\text{of}}\right)\right]\nonumber\\
  &=\mathcal{L}_{I_{\text{mf}}}\left(s,\zeta_C\right)\mathcal{L}_{I_{\text{if}}}\left(s,r_m\right)\mathcal{L}_{I_{\text{of}}}\left(s,r_m\right)\big|_{s=\frac{\beta_Fr_0^{\alpha}}{P_F}}
  \label{eq:FemtoUsercoverage probability1}
 \end{align}\endgroup
Substituting \eqref{eq:Laplace_IntraClusterInterference}, \eqref{eq:Laplace_InterClusterInterference} and \eqref{eq:Laplace_MacroInterference} into \eqref{eq:FemtoUsercoverage probability1} completes the proof.
\end{proof}
\subsubsection{Macro User coverage probability Analysis}
\label{sec:MacroUsercoverage probabilityAnalysis}
The coverage probability of an MU is the probability that its SIR is above the threshold \mbox{\small{$\beta_M$}}. There are two sets of interferers for an MU: 1) co-channel FAPs in \mbox{\small{$\tilde\Phi_F$}}, and 2) co-channel MBSs in \mbox{\small{$\Phi_B$}}. The LT of interference from the interferes in \mbox{\small{$\tilde \Phi_F$}} is given in Lemma \ref{lemma3}. Next, we provide the LT of interference from the interferers in \mbox{\small{$ \Phi_B$}} in Lemma \ref{lemma5}.
\begin{lemma}
 \label{lemma3}
 For a given $r_m$, the approximated LT of interference \mbox{\small{$I_{\text{fm}}$}} from co-channel FAPs in PCP is\footnote{$\mathcal{B}^c_{r_m}$ denotes the complement of a ball of radius $r_m$ centered at origin, i.e. $\mathbb{R}^2\setminus\mathcal{B}(0,r_m)$.}:\begingroup\makeatletter\def\f@size{8}\check@mathfonts
 \begin{align}
  &\mathcal{L}_{I_{\text{fm}}}\left(s,r_m\right)\approx\exp\left(-\lambda_F\frac{c}{N}\int_{\mathcal{B}^c_{r_m}}\int_{\mathcal{B}(0,R)}\frac{f(x)dx}{1+\frac{1}{s\chi P_F}\|x-y\|^{\alpha}}dy\right)  \label{eq:Laplace_PCPInterference1}
 \end{align}\endgroup
\end{lemma}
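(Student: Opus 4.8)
The plan is to compute $\mathcal{L}_{I_{\text{fm}}}(s,r_m)=\mathbb{E}[\exp(-sI_{\text{fm}})]$ by the standard iterated‑expectation route for a Poisson cluster process, paralleling the derivation of Lemma \ref{lemma1a} but retaining the exclusion region created by the tagged MU. First I would write $I_{\text{fm}}$ as a double sum over the active parent points $y\in\Phi_p$ and the co-channel daughter FAPs $x$ within each cluster, so that a single interferer located at $y+x$ contributes $h\,\|y+x\|^{-\alpha}\chi P_F$. Conditioning on the point locations and averaging first over the i.i.d.\ unit-mean exponential fading gives the per-interferer factor $\mathbb{E}_h[e^{-sh\|y+x\|^{-\alpha}\chi P_F}]=(1+s\chi P_F\|y+x\|^{-\alpha})^{-1}$, whose complement is exactly $1/(1+\frac{1}{s\chi P_F}\|y+x\|^{\alpha})$, the integrand appearing in the statement.

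The key modelling step is to identify which clusters are active on the tagged channel. Because an MU silences every cluster whose centre lies within the exclusion range $r_m$, the tagged MU at the origin forbids co-channel parents inside $\mathcal{B}(0,r_m)$; invoking the Poisson-hole-to-PPP approximation of \cite{Haenggi_2012}, I would model the surviving parents as a homogeneous PPP of density $\lambda_F$ restricted to $\mathcal{B}^c_{r_m}$. Each such cluster contributes, on the tagged channel, a number of FAPs governed by the Binomial channel-access law \eqref{eq:nChannelAccessibleProb}: the exact mean, obtained by averaging the per-channel count $c/n$ over the Binomial number $n$ of accessible channels, is bounded above by $c/N$, and I would adopt this upper bound exactly as in the remark following Lemma \ref{lemma1a}, treating the co-channel daughters as a Poisson cluster of mean $c/N$ with offsets drawn from the uniform density $f$ on $\mathcal{B}(0,R)$.

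With these ingredients the cluster-level probability generating functional gives, for a parent at $y$, the factor $\exp(-\frac{c}{N}\int_{\mathcal{B}(0,R)}\frac{f(x)\,dx}{1+\frac{1}{s\chi P_F}\|y+x\|^{\alpha}})$, and applying the PGFL of the parent PPP over $\mathcal{B}^c_{r_m}$ yields
\begin{equation*}
\mathcal{L}_{I_{\text{fm}}}(s,r_m)=\exp\left(-\lambda_F\int_{\mathcal{B}^c_{r_m}}\left[1-\exp\left(-\frac{c}{N}\int_{\mathcal{B}(0,R)}\frac{f(x)\,dx}{1+\frac{1}{s\chi P_F}\|y+x\|^{\alpha}}\right)\right]dy\right).
\end{equation*}
The final approximation linearises the inner exponential via $1-e^{-u}\approx u$ (the same step that produces the PPP lower bound of \cite{RKGanti_2009}), and the change of variable $x\mapsto-x$, legitimate since $f$ is radially symmetric, turns $\|y+x\|$ into $\|x-y\|$, giving the stated expression.

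The main obstacle I anticipate is not the PGFL bookkeeping but the justification of the two approximations that make the result clean: (i) collapsing the genuinely correlated set of active co-channel clusters, a Poisson hole process whose holes sit around all co-channel MUs, into an independent PPP on $\mathcal{B}^c_{r_m}$; and (ii) replacing the Binomial-averaged per-cluster FAP count by its upper bound $c/N$, which is what renders the density prefactor channel-independent. Both steps are heuristic, so the conclusion is necessarily stated as an approximation ($\approx$); I would flag them explicitly and defer their numerical validation to the simulation section, since establishing tight error bounds lies outside the intended scope.
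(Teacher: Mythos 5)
Your overall route is the paper's route: the paper's proof likewise takes the parent process to have zero intensity inside $\mathcal{B}(0,r_m)$, reuses the chain of steps from Lemma \ref{lemma1a} (cluster PGFL of the parent PPP, the inequality $1-z^{\frac{1}{n+1}}\geq\frac{1}{n+1}(1-z)$, the linearization $1-e^{-u}\leq u$, expectation over the Binomial number of accessible channels), and then substitutes the Rayleigh-fading LT $\mathcal{L}_h(s)=\frac{1}{1+s}$; your final expression coincides with \eqref{eq:Laplace_PCPInterference1}.

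However, one of your two ``key modelling'' steps is justified incorrectly, and as literally stated your bookkeeping is inconsistent — it lands on the right prefactor only because two choices offset. You keep the surviving parents at the \emph{unthinned} density $\lambda_F$ on $\mathcal{B}^c_{r_m}$ and then claim the Binomial-averaged per-cluster co-channel count ``is bounded above by $c/N$.'' The opposite is true: from \eqref{eq:ExpectationOfBinomial}, $\tilde c=c\,\mathbb{E}\left[\frac{1}{n+1}\right]=\frac{c}{Np_{r_m}}\left[1-\left(1-p_{r_m}\right)^N\right]\geq\frac{c}{N}$, with $\tilde c\approx\frac{c}{N}e^{\pi\lambda_m r_m^2/\mu}$, so $c/N$ is a \emph{lower} bound on that conditional mean. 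The paper's accounting pairs the \emph{thinned} parent density $\tilde\lambda_F=\lambda_F p_{r_m}$ (holes exist around all co-channel MUs, not only the tagged one at the origin) with the conditional count $\tilde c$, and the prefactor arises from the cancellation $\lambda_F p_{r_m}\cdot\frac{c}{Np_{r_m}}=\lambda_F\frac{c}{N}$, exactly as the paper remarks after Lemma \ref{lemma3}. If you insist on unthinned parents, the count you must pair with them is the \emph{unconditional} mean number of FAPs of a cluster transmitting on the tagged channel, $p_{r_m}\tilde c\approx\frac{c}{N}$, i.e., the channel-access probability has to be folded into the count because you did not put it in the density. With your stated pairing (density $\lambda_F$ together with the Binomial-averaged conditional count) you would instead obtain $\lambda_F\frac{c}{Np_{r_m}}$, overestimating the exponent by the factor $e^{\pi\lambda_m r_m^2/\mu}$. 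The fix is one sentence — say explicitly that the $p_{r_m}$ thinning is reallocated from the parent density into the per-cluster count — after which your derivation matches the paper's.
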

\noindent and for $r_m=0$ (i.e. with non-cognitive FAPs), the approximated LT of \mbox{\small{$I_{\text{fm}}$}} becomes:\begingroup\makeatletter\def\f@size{8}\check@mathfonts
\begin{align}
  \mathcal{L}_{I_{\text{fm}}}\left(s,0\right)\approx&\exp\left(-\pi^2\delta\lambda_F \frac{c}{N}(s\chi P_F)^\delta\csc[\pi\delta]\right) 
  \label{eq:Laplace_PCPInterference_NonCog}
\end{align}\endgroup
\begin{proof}
 Refer Appendix \ref{app:lemma3}.
\end{proof}

The LT of $I_\text{fm}$ with $r_m=0$ (i.e. with non-cognitive FAPs) is equivalent to that is realized through PPP of intensity $\frac{\lambda_Fc}{N}$ \cite{haenggi2009interference}. Further,  it may be noted that the LT of $I_\text{fm}$ with non-cognitive FAPs and LT of $I_\text{of}$ with cognitive FAPs are also equivalent. This implies that the effective density of co-channel nodes is the same in both the cases. It is attributed to the fact that the inclusion of cognition in the clusters of FAPs, the cluster density is thinned by a factor of $\exp(-\pi\lambda_mr_m^2/\mu)$ and the mean number of co-channel FAPs in a cluster are scaled by a factor $\frac{1}{N}\exp(\pi\lambda_mr_m^2/\mu)$. However, in  case on non-cognitive FAPs, cluster density remains unchanged and the mean number of co-channel FAPs in a cluster are scaled by factor $\frac{1}{N}$. Therefore, the effective density of co-channel FAPs becomes $\frac{\lambda_Fc}{N}$ with cognitive as well as non-cognitive FAPs.  
 \begin{lemma}
\label{lemma5}
 For given $\zeta_C$, the LT of interference \mbox{\small{$I_{\text{mm}}(r)$}} from co-channel MBSs outside of \mbox{\small{$\mathcal{B}\left(0,r\right)$}} is:\begingroup\makeatletter\def\f@size{8}\check@mathfonts
 \begin{align}
  \mathcal{L}_{I_{\text{mm}}}\left(s,r,\zeta_C\right)&=\exp\left(-\pi\lambda_B\zeta_C\left(sP_B\right)^\delta\mathcal{H}_2\left(\frac{sP_B}{r^{2/\delta}},\delta\right)\right)  \label{eq:Laplace_MacroInterferenceBeyond_d0}\\
  \text{where~~}\mathcal{H}_2\left(\frac{sP_B}{r^{2/\delta}},\delta\right)&=\int_{\frac{r^2}{(sP_B)^\delta}}^\infty\frac{dt}{1+t^{\frac{1}{\delta}}}.\nonumber
 \end{align}\endgroup
\noindent For \mbox{\small{$\delta=\frac{1}{2}$}}, \mbox{\small{$\mathcal{H}_2\left(\frac{sP_B}{r^{4}},\frac{1}{2}\right)=\arctan\left(\frac{(sP_B)^{\frac{1}{2}}}{r^2}\right)$}}. 
\end{lemma}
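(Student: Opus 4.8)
The plan is to compute the Laplace transform directly from the probability generating functional (PGFL) of the underlying PPP, exactly as in the standard single-tier PPP interference calculation, with two modifications specific to this setting: the interferers are restricted to lie outside the ball $\mathcal{B}(0,r)$ (because the MU associates to its strongest MBS, which at equal transmit power is the nearest one at distance $r$, so every other MBS is farther than $r$), and each potential interferer is active only with probability $\zeta_C$.

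First I would dispatch the activity indicators by independent thinning: since the $\xi_i$ are i.i.d.\ Bernoulli$(\zeta_C)$ and independent of the point process, the set of \emph{active} co-channel MBSs is itself a PPP, now of intensity $\zeta_C\lambda_B$, restricted to $\mathbb{R}^2\setminus\mathcal{B}(0,r)$. Then, for generic $s$, I would write
\[
\mathcal{L}_{I_{\text{mm}}}(s,r,\zeta_C)=\mathbb{E}\Big[\exp\big(-s\!\!\sum_{x_i}h_i\|x_i\|^{-\alpha}P_B\big)\Big]=\exp\!\Big(-\zeta_C\lambda_B\!\int_{\mathbb{R}^2\setminus\mathcal{B}(0,r)}\!\big(1-\mathbb{E}_h[e^{-s h\|x\|^{-\alpha}P_B}]\big)\,dx\Big),
\]
using the PGFL together with the independence of the fading coefficients.

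Next I would evaluate the inner fading expectation. Since $h$ is exponential with unit mean, $\mathbb{E}_h[e^{-ah}]=(1+a)^{-1}$, so $1-\mathbb{E}_h[e^{-s h\|x\|^{-\alpha}P_B}]=\big(1+\|x\|^{\alpha}/(sP_B)\big)^{-1}$. Passing to polar coordinates ($dx=2\pi u\,du$, with $u=\|x\|$ running from $r$ to $\infty$) turns the exponent into $2\pi\zeta_C\lambda_B\int_r^\infty u\,(1+u^{\alpha}/(sP_B))^{-1}\,du$. The substitution $t=u^2/(sP_B)^{\delta}$ (so that $u\,du=\tfrac12(sP_B)^{\delta}dt$ and, using $\alpha=2/\delta$, $u^{\alpha}/(sP_B)=t^{1/\delta}$) maps the lower limit $u=r$ to $t=r^2/(sP_B)^{\delta}$ and yields $\pi\zeta_C\lambda_B(sP_B)^{\delta}\int_{r^2/(sP_B)^\delta}^\infty (1+t^{1/\delta})^{-1}dt$, which is precisely $\pi\lambda_B\zeta_C(sP_B)^\delta\mathcal{H}_2(sP_B/r^{2/\delta},\delta)$. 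Exponentiating gives \eqref{eq:Laplace_MacroInterferenceBeyond_d0}.

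Finally, for the closed form at $\delta=\tfrac12$, the integrand becomes $(1+t^2)^{-1}$, whose antiderivative is $\arctan$; evaluating from $r^2/(sP_B)^{1/2}$ to $\infty$ and applying the identity $\arctan(x)+\arctan(1/x)=\pi/2$ for $x>0$ collapses the result to $\arctan\big((sP_B)^{1/2}/r^2\big)$. None of these steps is genuinely delicate, as this is essentially a textbook PPP Laplace-transform computation; the only things to be careful about are the bookkeeping items, namely correctly imposing the lower integration limit $r$ coming from nearest-BS association, and tracking the exponents through the substitution so that the answer lands in the exact $\mathcal{H}_2$ normalization used in the statement.
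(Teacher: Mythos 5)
Your proposal is correct, and it is essentially the argument the paper relies on: the paper's own ``proof'' of this lemma is simply a citation to \cite{HSJo_2012}, whose underlying derivation is exactly the standard PPP Laplace-transform computation you carry out (nearest-BS association giving the exclusion ball $\mathcal{B}(0,r)$, PGFL, exponential-fading expectation, polar coordinates, and the substitution $t=u^2/(sP_B)^{\delta}$). The one modification relative to that reference---independent thinning of $\Phi_B$ by the Bernoulli activity indicators to get intensity $\zeta_C\lambda_B$---is handled correctly in your first step, and it is the same adjustment the paper itself notes (in the discussion after Lemma \ref{lemma2}) when adapting cited single-tier results to the fractional-load setting.
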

\begin{proof}
 Refer \cite{HSJo_2012}.
\end{proof}
\begin{theorem}
\label{theorem:coverage probability_MU_CoChannel}
  For given $r_m$ and $\zeta_C$, the approximated coverage probability of an MU in two-tier network using co-channel mode is given by \eqref{eq:MacroUsercoverage probability} (given at the top of next page). 
  \begin{figure*}[!t]
 \begingroup\makeatletter\def\f@size{8}\check@mathfonts
   \begin{align}
   &\mathbf{M}_C\left(\beta_M|r_m,\zeta_C\right)\approx\int\limits_{0}^{\infty}2\pi\lambda_Br\exp\left\{-\pi\lambda_Br^2\left(1+\zeta_C\beta_M^\delta\mathcal{H}_2\left(\beta_M,\delta\right)\right)-\frac{\lambda_F}{\pi R^2}\frac{c}{N}\int\limits_{\mathbb{R}^2\setminus\mathcal{B}(0,r_m)}\int\limits_{\mathcal{B}(0,R)}\frac{dx}{1+\frac{P_M}{\beta_M\chi P_Fr^\alpha}\|x-y\|^{\alpha}}dy\right\}dr   
  \label{eq:MacroUsercoverage probability}
 \end{align}\endgroup 
 \hrule
 \end{figure*} 
  \newline For $r_m=0$ (i.e. with non-cognitive FAPs) the approximated coverage probability of an MU becomes:\begingroup\makeatletter\def\f@size{8}\check@mathfonts
  \begin{align}
  \begin{split}
   \mathbf{M}_C\left(\beta_M|r_m,\zeta_C\right)\approx\left[1+\zeta_c\beta_M^\delta\mathcal{H}_2(\beta_M,\delta)\vphantom{\pi\delta\frac{\lambda_Fc}{\lambda_BN}(\beta_M\chi)^\delta\csc[\delta\pi]}\right.\\
   \left.+\pi\delta\frac{\lambda_Fc}{\lambda_BN}\left(\beta_M\chi{P_F}/{P_B}\right)^\delta\csc[\delta\pi]\right]^{-1}
  \end{split}   
   \label{eq:MacroUsercoverage probability_NonCog}
  \end{align}\endgroup
\end{theorem}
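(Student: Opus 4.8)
The plan is to mirror the femto-user proof of Theorem \ref{theorem:coverage probability_FU_CoChannel}, the essential new feature being that the serving distance $r=\|x_0\|$ is itself random and must be averaged out. First I would write the coverage probability conditioned on $r$ as $\mathbb{P}(\Gamma_M(r)\geq\beta_M)$ and exploit the unit-mean exponential law of $h_0$ to turn this into $\mathbb{E}[\exp(-\tfrac{\beta_M r^\alpha}{P_B}(I_{\text{mm}}(r)+I_{\text{fm}}))]$. Because the MBS PPP and the FAP PCP are independent, $I_{\text{mm}}(r)$ and $I_{\text{fm}}$ are independent, so this expectation factors as $\mathcal{L}_{I_{\text{mm}}}(s,r,\zeta_C)\,\mathcal{L}_{I_{\text{fm}}}(s,r_m)$ evaluated at $s=\beta_M r^\alpha/P_B$, and I can import the two Laplace transforms directly from Lemma \ref{lemma5} and Lemma \ref{lemma3}.

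Next I would average over $r$. Since all MBSs transmit at the common power $P_B$, maximum-received-power association coincides with nearest-MBS association, so the serving distance carries the standard nearest-neighbour density $f_r(r)=2\pi\lambda_B r\,e^{-\pi\lambda_B r^2}$ of a PPP of intensity $\lambda_B$; moreover $I_{\text{mm}}(r)$ is correctly taken over MBSs outside $\mathcal{B}(0,r)$, exactly as in Lemma \ref{lemma5}. The key simplification is the substitution $s=\beta_M r^\alpha/P_B$, for which $sP_B=\beta_M r^{2/\delta}$: this collapses the argument of $\mathcal{H}_2$ to the constant $\beta_M$ and turns $(sP_B)^\delta$ into $\beta_M^\delta r^2$, so $\mathcal{L}_{I_{\text{mm}}}$ contributes $\exp(-\pi\lambda_B\zeta_C\beta_M^\delta r^2\,\mathcal{H}_2(\beta_M,\delta))$. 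In $\mathcal{L}_{I_{\text{fm}}}$ the same substitution replaces $\tfrac{1}{s\chi P_F}$ by $\tfrac{P_B}{\beta_M\chi P_F r^\alpha}$ (the $P_M$ in the displayed integral being $P_B$). Multiplying these against $f_r(r)$ and absorbing the $e^{-\pi\lambda_B r^2}$ factor into the macro-interference exponent to form $-\pi\lambda_B r^2(1+\zeta_C\beta_M^\delta\mathcal{H}_2(\beta_M,\delta))$ produces exactly the integral \eqref{eq:MacroUsercoverage probability}.

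Finally, for the non-cognitive case $r_m=0$ I would substitute the closed form \eqref{eq:Laplace_PCPInterference_NonCog} of Lemma \ref{lemma3} in place of the double integral. Under $s=\beta_M r^\alpha/P_B$ one has $(s\chi P_F)^\delta=(\beta_M\chi P_F/P_B)^\delta r^2$, using $\alpha\delta=2$, so the FAP-interference exponent also becomes proportional to $r^2$. The whole exponent is then $-A r^2$ with $A=\pi\lambda_B(1+\zeta_C\beta_M^\delta\mathcal{H}_2(\beta_M,\delta))+\pi^2\delta\lambda_F\tfrac{c}{N}(\beta_M\chi P_F/P_B)^\delta\csc[\pi\delta]$, and the elementary integral $\int_0^\infty 2\pi\lambda_B r\,e^{-Ar^2}\,dr=\pi\lambda_B/A$ yields the reciprocal form \eqref{eq:MacroUsercoverage probability_NonCog}.

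I expect the only genuine subtlety to be the substitution bookkeeping: verifying that $s=\beta_M r^\alpha/P_B$ simultaneously renders the $\mathcal{H}_2$ argument $r$-independent and produces the clean $r^2$ dependence that makes the radial integral tractable and, in the $r_m=0$ case, closed-form. The probabilistic content (exponential-fading linearisation, independence-based factorisation of the Laplace transforms, and nearest-MBS distance averaging) is routine given the lemmas already in hand.
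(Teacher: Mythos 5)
Your proposal is correct and follows essentially the same route as the paper's proof: condition on the serving distance $r$, use the exponential fading assumption to factor the coverage probability into $\mathcal{L}_{I_{\text{mm}}}(s,r,\zeta_C)\,\mathcal{L}_{I_{\text{fm}}}(s,r_m)$ at $s=\beta_M r^\alpha/P_B$ via Lemmas \ref{lemma5} and \ref{lemma3}, and average against the nearest-MBS density $f_R(r)=2\pi\lambda_B r e^{-\pi\lambda_B r^2}$. In fact you supply details the paper leaves implicit — the check that $s=\beta_M r^\alpha/P_B$ makes the $\mathcal{H}_2$ argument constant and the exponent quadratic in $r$, the identification of $P_M$ with $P_B$, and the explicit evaluation $\int_0^\infty 2\pi\lambda_B r\,e^{-Ar^2}dr=\pi\lambda_B/A$ yielding the closed form for $r_m=0$ — so the proposal is, if anything, more complete than the paper's own proof.
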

\begin{proof}
 The coverage probability of an MU when serving BS is at a distance \mbox{\small{$r$}} can be written as\begingroup\makeatletter\def\f@size{8}\check@mathfonts
 \begin{align}
  &\mathbf{M}_C\left(\beta_M|r,r_m,\zeta_C\right)=\mathbb{P}\left(\frac{hd^{-\alpha}P_B}{I_{\text{mm}}(r)+I_{\text{fm}}}\geq \beta_M\right)\nonumber\\
  &=\mathcal{L}_{I_{\text{mm}}}\left(s,r,\zeta_C\right)\times\mathcal{L}_{I_{\text{fm}}}\left(s,r_m\right)\big|_{s=\frac{\beta_Mr^{\alpha}}{P_B}}
  \label{eq:MacroUsercoverage probability1}
 \end{align}\endgroup
Note that the probability density function of distance of an MBS from origin is given by \mbox{\small{$f_R(r)=2\pi\lambda_Br\exp(-\pi\lambda_Br^2)$}} in \cite{HSJo_2012}. Therefore, the coverage probability of an MU becomes
\begingroup\makeatletter\def\f@size{8}\check@mathfonts
\begin{align}
 &\mathbf{M}_C\left(\beta_M|r_m,\zeta_C\right)=\int_0^{\infty}\mathbf{M}_C\left(\beta_M|r,r_m,\zeta_C\right)f_R(r)dr\nonumber\\
 &=\int_0^{\infty}\mathcal{L}_{I_{\text{mm}}}\left(\frac{\beta_Mr^{\alpha}}{P_B},r,\zeta_C\right)\times\mathcal{L}_{I_{\text{fm}}}\left(\frac{\beta_Mr^{\alpha}}{P_B},r_m\right)f_R(r)dr
 \label{eq:MacroUsercoverage probability_averaging}
\end{align}\endgroup
Substituting \eqref{eq:Laplace_PCPInterference1} and \eqref{eq:Laplace_MacroInterferenceBeyond_d0} into \eqref{eq:MacroUsercoverage probability_averaging} yields \eqref{eq:MacroUsercoverage probability}. Further, substituting \eqref{eq:Laplace_PCPInterference_NonCog} and \eqref{eq:Laplace_MacroInterferenceBeyond_d0} into \eqref{eq:MacroUsercoverage probability_averaging} yields \eqref{eq:MacroUsercoverage probability_NonCog}. This completes the proof.
\end{proof}
\noindent{Through simulation validation, it is proven that the derived approximated expressions for coverage probability of an MU and an FU are tighter.}
\subsubsection{Modeling Activity Factor of a Macro BS}
\label{sec:ModelingActivityFactorofaMacroBS}
The underlying MU services are considered to be admitted into the network under the assumptions of real-time traffic. Therefore, the scenario of multiple channels and fractional load conditions decides the bandwidth occupancy and thus the interference realization. The interference dynamics and bandwidth occupancy are related through the activity factor of an MBS. In a random cellular network, the activity factor can be interpreted as follows.
  \begin{enumerate}
   \item Probability that a BS randomly chooses a channel.
   \item Average fraction of BSs in the network those are co-channel at a given instant.  
   \item Fraction of total spectrum/channels  accessed by a BS. 
  \end{enumerate}  
   There is implicit coupling between the activity factor and the interference (or equivalently coverage probability). For example, transmission from a BS, say B1, generates interference to its neighboring BS, say B2, which forces B2 to transmit for a longer time which again interferes back to B1 and make B1 to transmit for even longer time. The coupling between interference and activity factor can be mathematically modeled using the interdependence between the four entities as described below: 1) \emph{interference} (thus the coverage probability) determines the \emph{achievable rate}, 2) \emph{achievable rate} along with traffic load determines the \emph{bandwidth occupancy}, 3) \emph{bandwidth occupancy} determines the \emph{BS activity factor}, and 4) \emph{BS activity factor} decides the \emph{interference}.    

   This makes the load dependent fractional activity of BS as a pivotal element in the co-channel interference modeling. In this section, we present a model to evaluate the activity factor \mbox{\small{$\zeta_C$}}. The activity factor is defined as the probability that an MBS uses a typical channel. The activity factor depends on the traffic intensity and the resource units required per service. In cellular environment, the resource requirement basically depends on SIR statistics experienced by the service which is further location dependent. Therefore, a cell can be modeled using multi-dimensional queue for the evaluation of number of busy servers. However, we modeled a cell using single dimensional queue with the consideration of the average number of resource units required by the service to avoid location dependency and to keep the model tractable. Therefore, using \eqref{eq:MacroUsercoverage probability}, the mean number of channels required by an MU service for a given \mbox{\small{$r_m$}} and \mbox{\small{$\zeta_C$}} can be calculated as follows
\begingroup\makeatletter\def\f@size{8}\check@mathfonts
\begin{align}
 \bar N\left(\zeta_C,r_m\right)&=\sum\limits_{i=1}^{T}\frac{R_{\text{th}}}{B\log_2\left(1+\Gamma_i\right)}\text{Prob}(\Gamma_i),
 \label{eq:AVgNumberOfChannelPerMacroService}\\
 \text{where~~}\text{Prob}(\Gamma_i)&=\frac{\mathbf{M}_C\left(\Gamma_{i+1}|r_m,\zeta_C\right)-\mathbf{M}_C\left(\Gamma_{i}|r_m,\zeta_C\right)}{1-\mathbf{M}_C\left(\Gamma_{1}|r_m,\zeta_C\right)},\nonumber
\end{align}\endgroup
\mbox{\small{$\Gamma_i$}} is the SINR threshold of the \mbox{\small{$i$th}} MCS,  and \mbox{\small{$\Gamma_{T+1}=\infty$}}. Note that the probability of each MCS is normalized by the outage probability (i.e. \mbox{\small{$1-\mathbf{M}_C\left(\Gamma_{1}|r_m,\zeta_C\right)$}}).
For a given cell area $a$ and \mbox{\small{$\bar N(\zeta_C,r_m)$}}, the cell can be modeled using \mbox{\small{$M/M/N_{\text{e}}/N_{\text{e}}$}} queue with service arrival rate \mbox{\small{$a\lambda_{M}$}} and departure rate $\mu$, where \mbox{\small{$N_{\text{e}}=\left\lfloor\frac{N}{\bar N(\zeta_C,r_m)}\right\rfloor$}}. Therefore, the probability that $n$ servers/channels are occupied is \mbox{\small{$\mathbb{P}\left(n|a\right)=\frac{\left({a\lambda_{M}}/{\mu}\right)^n}{n!}/\sum_{k=0}^{N_{\text{e}}}\frac{\left({a\lambda_{M}}/{\mu}\right)^i}{i!}$}} \cite{Bertsekas:1992}. The probability that the MBS uses a channel for a given $a$ is \cite{Wei_Bao_2014_NearOptimal}\begingroup\makeatletter\def\f@size{8}\check@mathfonts
\begin{align}
\label{eq:Channel_Access_Probability_conditional}
\zeta_C\left(a\right)&=\frac{1}{N_{\text{e}}}\sum\limits_{i=0}^{N_{\text{e}}}n\mathbb{P}\left(n|a\right)\approx\begin{cases}
                                                                                         &\frac{a\lambda_{M}}{\mu N_{\text{e}}} \mspace{17mu} \text{if}~ \frac{a\lambda_{M}}{\mu N_{\text{e}}}<1\\
                                                                                         &1 \mspace{40mu} \text{if}~ \frac{a\lambda_{M}}{\mu N_{\text{e}}}\geq 1
                                                                                        \end{cases}
\end{align}\endgroup
The approximation is tighter for larger values of \mbox{\small{$N_{\text{e}}$}}. For larger \mbox{\small{$N$}}, \mbox{\small{$N_{\text{e}}\approx \frac{N}{\bar N(\zeta_C,r_m)}$}}. 
The cell area density function is derived in \cite{Singh_2013} as follows
\begingroup\makeatletter\def\f@size{8}\check@mathfonts
\begin{equation}
\label{eq:cell_size_PDF}
 f_A\left(a\right)=\frac{\left({3.5\lambda_B}\right)^{3.5}}{\Gamma\left(3.5\right)}a^{2.5}\exp\left(-3.5a\lambda_B\right)
\end{equation}\endgroup
Therefore, using \eqref{eq:Channel_Access_Probability_conditional} and \eqref{eq:cell_size_PDF} we have derived the activity factor of an MBS in PPP {{$\Phi_B$}} in \cite{Praful_BlockingProb} as 
\begingroup\makeatletter\def\f@size{8}\check@mathfonts
 \begin{equation}
  \zeta_C=\frac{1}{3.5\varphi\Gamma(3.5)}\gamma\left(4.5,3.5\varphi\right)+\frac{1}{\Gamma(3.5)}\Gamma\left(3.5,3.5\varphi\right)
  \label{eq:ActivityFactorApp}
 \end{equation}\endgroup
where \mbox{\small{$\frac{1}{\varphi}=\frac{\lambda_M}{\lambda_B\mu N_{\text{e}}}=\frac{\lambda_M\bar N\left(\zeta_C,r_m\right)}{\lambda_B\mu N}$}}, and \mbox{\small{$\gamma(\cdot,\cdot)$}} and \mbox{\small{$\Gamma(\cdot,\cdot)$}} are lower and upper incomplete gamma functions.

After substitution of \eqref{eq:AVgNumberOfChannelPerMacroService} into \eqref{eq:ActivityFactorApp}, the resultant expression becomes recursive in nature. Therefore, 
we solve it using the bisection method to evaluate the activity factor for a given traffic intensity (\mbox{\small{$\frac{\lambda_B\mu}{\lambda_M}$}}) and \mbox{\small{$r_m$}}. {The determined value of $\zeta_C$ is further substituted in \eqref{eq:FemtoUsercoverage probability} and \eqref{eq:MacroUsercoverage probability} to evaluate the actual coverage probability of an FU and an MU, respectively, for a given macro-tier fractional-load condition.}
\subsection{Analysis under Orthogonal Spectrum Sharing}
\label{sec:CoverageAnalysisforOrthogonalSpectrumAllocation}

Let \mbox{\small{$N_F$}} and \mbox{\small{$N-N_F$}} be the number of channels allocated to the macro-tier and femto-tier, respectively. As the femto cluster has access to \mbox{\small{$N_F$}} channels, the effective mean number of femto  cells per cluster becomes \mbox{\small{$\tilde c=\frac{c}{N_F}$}} assuming the channels are equally shared by the femto cells within the cluster. 
\begin{corollary}
\label{theorem:coverage probability_FU_Orthogonal}
 The approximated coverage probability of an FU in two-tier network using orthogonal mode is \begingroup\makeatletter\def\f@size{8}\check@mathfonts 
 \begin{equation}
 \begin{split}
  \mathbf{F}_C\left(\beta_F|r_m,N_F\right)\approx\exp\left\{-\delta\pi^2\frac{\lambda_Fc}{N_F}r_0^2(\beta_F\chi)^{\delta}\csc\left[\pi\delta\right]\right.\\
  \left.-\frac{c}{N_F}\frac{r_0^2}{R^2}\left(\beta_F\chi\right)^\delta\mathcal{H}_1\left(r_0^{2/\delta}\chi\beta_F,R,\delta\right) \right\}
 \end{split}  
 \end{equation}\endgroup
\end{corollary}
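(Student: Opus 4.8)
The plan is to mirror the proof of Theorem~\ref{theorem:coverage probability_FU_CoChannel} almost verbatim, altering only the two features that distinguish the orthogonal mode from the co-channel mode: the absence of inter-tier (macro-to-femto) interference and the change in the effective number of co-channel FAPs per cluster. Because the two tiers now occupy disjoint channel sets, the term $I_{\text{mf}}$ in \eqref{eq:FemtoUserSINR} contributes nothing and the FU SIR reduces to $\Gamma_F=h_0r_0^{-\alpha}P_F/(I_{\text{if}}+I_{\text{of}})$. Consequently the Laplace-transform product used in \eqref{eq:FemtoUsercoverage probability1} loses its $\mathcal{L}_{I_{\text{mf}}}$ factor, and only the intra-cluster and inter-cluster transforms $\mathcal{L}_{I_{\text{if}}}$ and $\mathcal{L}_{I_{\text{of}}}$ survive.

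First I would re-examine the channel-access model in this mode. Since the femto-tier now owns a dedicated block of $N_F$ channels that no MU ever uses, the cognitive sensing step is inactive: every one of the $N_F$ channels is available with certainty, so the Binomial law \eqref{eq:nChannelAccessibleProb} degenerates to the deterministic count $N_F$ and the access probability $p_{r_m}$ effectively equals one. Under the equal-sharing assumption each channel therefore carries a mean of $\tilde c=c/N_F$ co-channel FAPs, replacing the cognitively-thinned counts $c\,e^{\pi\lambda_m r_m^2/\mu}/N$ (intra-cluster) and $\lambda_F c/N$ (inter-cluster) of the co-channel analysis. Propagating this substitution through Lemma~\ref{lemma1} and Lemma~\ref{lemma1a} gives $\mathcal{L}_{I_{\text{if}}}(s)=\exp(-\tfrac{c}{N_FR^2}(s\chi P_F)^\delta\mathcal{H}_1(s\chi P_F,R,\delta))$ and $\mathcal{L}_{I_{\text{of}}}(s)=\exp(-\pi^2\delta\tfrac{\lambda_Fc}{N_F}(s\chi P_F)^\delta\csc[\pi\delta])$, the exponential cognitive factor having vanished precisely because $p_{r_m}\to1$.

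Finally I would multiply the two surviving transforms and evaluate at $s=\beta_Fr_0^\alpha/P_F$. Using $\alpha\delta=2$, so that $(s\chi P_F)^\delta=(\beta_Fr_0^\alpha\chi)^\delta=r_0^2(\beta_F\chi)^\delta$ and $s\chi P_F=r_0^{2/\delta}\chi\beta_F$, collapses the two exponents exactly into the claimed expression. The remaining manipulations are routine; the only genuinely substantive step—and the one I would be most careful to justify—is the modeling claim that orthogonality deactivates the cognition, turning the random, $r_m$-dependent intra-cluster load into the deterministic $\tilde c=c/N_F$ and thereby stripping away the $e^{\pi\lambda_m r_m^2/\mu}$ term. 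This also explains why $r_m$, although retained in the statement's conditioning for notational uniformity with the co-channel result, does not actually appear on the right-hand side.
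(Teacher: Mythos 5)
Your proposal is correct and takes essentially the same approach as the paper: the paper's proof is exactly the three substitutions you identify (drop the macro-tier interference term, deactivate cognition so $p_{r_m}=1$ i.e.\ $r_m=0$, and replace $N$ by $N_F$), applied directly to the final expression of Theorem~\ref{theorem:coverage probability_FU_CoChannel} rather than propagated through Lemmas~\ref{lemma1} and \ref{lemma1a} as you do. Your extra step of justifying why orthogonality removes the $e^{\pi\lambda_m r_m^2/\mu}$ factor is a sound elaboration of what the paper compresses into ``$r_m$ by 0 (i.e.\ non-cognitive access).''
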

\begin{proof}
 Replacing \mbox{\small{$N$}} by \mbox{\small{$N_F$}}, \mbox{\small{$r_m$}} by 0 (i.e. non-cognitive access), and \mbox{\small{$\lambda_B\zeta_C$}} by 0 (i.e. no macro-tier interference) in the expression of an FU coverage probability in co-channel mode (\mbox{\small{$\mathbf{F}_C(\beta_F|r_m,\zeta_C)$}}) given in Theorem \ref{theorem:coverage probability_FU_CoChannel}, completes the proof.
\end{proof}
Let $\zeta_O$ represent the activity factor of an MBS under orthogonal mode.
\begin{corollary}
\label{theorem:coverage probability_MU_Orthogonal}
 For given \mbox{\small{$\zeta_O$}}, the coverage probability of an MU in two-tier network using orthogonal mode is
 \begingroup\makeatletter\def\f@size{8}\check@mathfonts
   \begin{equation}
   \mathbf{M}_O\left(\beta_M|\zeta_O\right)=\left[1+\zeta_O\beta_M^\delta\mathcal{H}_2\left(\beta_M,\delta\right)\right]^{-1}   
  \label{eq:MacroUsercoverage probability_ortho}
 \end{equation}\endgroup
\end{corollary}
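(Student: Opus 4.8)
The plan is to exploit the fact that orthogonal spectrum sharing removes all inter-tier interference seen by an MU, so that the coverage expression collapses to the single-tier PPP form. Under orthogonal mode the macro-tier and femto-tier occupy disjoint channel sets; hence the FAP interference term $I_{\text{fm}}$ in the MU SIR \eqref{eq:MacroUserSINR} vanishes identically, and the only interferers are the co-channel MBSs in $\Phi_B$, thinned by the activity factor $\zeta_O$. This is the key structural observation from which everything else follows.

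First I would start from the conditional coverage probability \eqref{eq:MacroUsercoverage probability1} with $\mathcal{L}_{I_{\text{fm}}}\equiv 1$, so that $\mathbf{M}_O(\beta_M|r,\zeta_O)=\mathcal{L}_{I_{\text{mm}}}(s,r,\zeta_O)\big|_{s=\beta_M r^\alpha/P_B}$. Substituting Lemma \ref{lemma5} and using $\alpha=2/\delta$, at $s=\beta_M r^\alpha/P_B$ one has $sP_B=\beta_M r^{2/\delta}$, so that $(sP_B)^\delta=\beta_M^\delta r^2$ while the argument $sP_B/r^{2/\delta}$ of $\mathcal{H}_2$ reduces to the constant $\beta_M$. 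The conditional Laplace factor therefore becomes $\exp(-\pi\lambda_B\zeta_O\beta_M^\delta\mathcal{H}_2(\beta_M,\delta)\,r^2)$, which is Gaussian in $r$ with the dependence on the serving distance entering only through $r^2$.

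Next I would average over the serving-distance density $f_R(r)=2\pi\lambda_B r\exp(-\pi\lambda_B r^2)$ as in \eqref{eq:MacroUsercoverage probability_averaging}. Combining the two exponents and substituting $u=r^2$ turns the integral into $\int_0^\infty \pi\lambda_B\exp(-\pi\lambda_B u[1+\zeta_O\beta_M^\delta\mathcal{H}_2(\beta_M,\delta)])\,du$, which evaluates in closed form to $[1+\zeta_O\beta_M^\delta\mathcal{H}_2(\beta_M,\delta)]^{-1}$, precisely the claimed \eqref{eq:MacroUsercoverage probability_ortho}. Equivalently, and more compactly, the same result follows by specializing the non-cognitive co-channel MU coverage \eqref{eq:MacroUsercoverage probability_NonCog}: setting the FAP-interference term $\pi\delta(\lambda_Fc/\lambda_B N)(\beta_M\chi P_F/P_B)^\delta\csc[\delta\pi]$ to zero and renaming $\zeta_C$ as $\zeta_O$ yields the corollary immediately.

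There is essentially no analytical obstacle in the computation itself, since both the MBS Laplace transform (Lemma \ref{lemma5}) and the $M/M/\infty$ distance averaging are exact. The only point requiring care is the justification that orthogonality makes $I_{\text{fm}}=0$ \emph{exactly}, rather than merely reducing it via a PCP-to-PPP bound as in the co-channel case; this is what upgrades the $\approx$ of Theorem \ref{theorem:coverage probability_MU_CoChannel} to the equality $=$ asserted in \eqref{eq:MacroUsercoverage probability_ortho}, and it is worth stating explicitly to explain why the orthogonal expression is claimed to be exact.
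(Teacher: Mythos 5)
Your proposal is correct and follows essentially the same route as the paper: the paper's proof likewise specializes the co-channel result of Theorem \ref{theorem:coverage probability_MU_CoChannel} by removing the femto-tier interference (setting $\lambda_F$ and $c$ to zero, equivalently $\mathcal{L}_{I_{\text{fm}}}\equiv 1$) and then solves the radial integral against $f_R(r)=2\pi\lambda_B r e^{-\pi\lambda_B r^2}$, which you carry out explicitly via the $u=r^2$ substitution to obtain $\left[1+\zeta_O\beta_M^\delta\mathcal{H}_2\left(\beta_M,\delta\right)\right]^{-1}$. Your added remark that orthogonality makes $I_{\text{fm}}=0$ exactly, upgrading the $\approx$ of the co-channel theorem to the equality in the corollary, is a correct and worthwhile clarification (only note that the distance density $f_R$ comes from the PPP nearest-point distribution, not from the $M/M/\infty$ traffic model).
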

\begin{proof}
 Replacing \mbox{\small{$\lambda_F$}} and \mbox{\small{$c$}} by 0 (i.e. no femto-tier interference), and further solving the integral in the expression of an MU coverage probability in co-channel mode, i.e. \mbox{\small{$\mathbf{M}_C(\beta_M|r_m,\zeta_O)$}} given in Theorem \ref{theorem:coverage probability_MU_CoChannel}, completes the proof.
\end{proof}
The activity factor evaluation of an MBS in orthogonal mode follows the same process as of the co-channel mode whereas the \mbox{\small{$\mathbf{M}_C(\Gamma_i|r_m,\zeta_C)$}} is replaced by \mbox{\small{$\mathbf{M}_O(\Gamma_i|\zeta_O)$}} in \eqref{eq:AVgNumberOfChannelPerMacroService} and \mbox{\small{$N$}} is replaced by \mbox{\small{$N-N_F$}}.
\subsection{Analysis under Partial Spectrum Sharing}
\label{sec:CoverageAnalysisforPartialSpectrumAllocation}
In partial mode, \mbox{\small{$N_F$}} number of channels are allocated for the co-channel deployment of macro-tier and femto-tier, and rest number of channels are reserved for macro-tier only. This helps in reducing the femto-tier interference to an MU by factor of $\frac{N-N_F}{N}$ compared to co-channel mode.
Therefore, it is obvious that the partial mode always yields better coverage for an MU as compared to that of co-channel mode. 
\begin{corollary}
\label{theorem:coverage probability_FU_Partial}
 For given $r_m$ and $\zeta_P$, the approximated coverage prob. of an FU in two-tier network using partial mode is given by \eqref{eq:FemtoUsercoverage probability} with $N=N_F$ and $\zeta_C=\zeta_P$.
\end{corollary}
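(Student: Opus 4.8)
The plan is to treat this as a direct corollary of Theorem~\ref{theorem:coverage probability_FU_CoChannel}, obtained by re-reading the co-channel derivation with the femto-tier restricted to the $N_F$ shared channels. Under partial mode the FAPs transmit only on the $N_F$ co-channel channels (the remaining $N-N_F$ being reserved for the macro-tier), so an FU sitting at the cluster centre on one of these shared channels sees exactly the same three interference categories as in the co-channel case: the intra-cluster co-channel FAPs $I_{\text{if}}$, the inter-cluster co-channel FAPs $I_{\text{of}}$, and the co-channel MBSs $I_{\text{mf}}$. Since these three interfering sets are independent, I would again write the coverage probability as the product $\mathcal{L}_{I_{\text{mf}}}\mathcal{L}_{I_{\text{if}}}\mathcal{L}_{I_{\text{of}}}$ evaluated at $s=\beta_F r_0^{\alpha}/P_F$, exactly as in the proof of Theorem~\ref{theorem:coverage probability_FU_CoChannel}, and then simply track how each factor changes.

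For the two FAP factors I would re-invoke Lemma~\ref{lemma1} and Lemma~\ref{lemma1a} verbatim, with the single change that the $c$ FAPs of a cluster are now distributed over $N_F$ accessible channels rather than $N$. This replaces the per-channel cluster-occupancy factor $c/N$ by $c/N_F$ in both the intra-cluster coefficient $\frac{c\,e^{\pi\lambda_m r_m^2/\mu}}{N R^2}$ of Lemma~\ref{lemma1} and the inter-cluster coefficient $\frac{\lambda_F c}{N}$ of Lemma~\ref{lemma1a}. The cognition machinery itself is untouched: the exclusion-range argument still thins the cluster process by $p_{r_m}$ and boosts the effective co-channel count by $e^{\pi\lambda_m r_m^2/\mu}$, so these factors carry over without modification.

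The macro-tier factor is handled by Lemma~\ref{lemma2}, with the co-channel activity factor $\zeta_C$ replaced by the partial-mode activity factor $\zeta_P$, since on a shared channel an MBS is active with probability $\zeta_P$; this is precisely the feature that distinguishes the partial case from the orthogonal one (Corollary~\ref{theorem:coverage probability_FU_Orthogonal}), where this factor is dropped entirely. Multiplying the three transforms and evaluating at $s=\beta_F r_0^{\alpha}/P_F$ reproduces the right-hand side of \eqref{eq:FemtoUsercoverage probability} with $N$ replaced by $N_F$ and $\zeta_C$ replaced by $\zeta_P$, which is the claim.

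The one point I would be careful about — and the only genuine obstacle — is to distinguish the two roles played by the channel count. The substitution $N\mapsto N_F$ must act only on the femto channel-sharing bookkeeping (the explicit $1/N$ factors arising from spreading $c$ FAPs over the accessible channels); it must not act on the per-channel MU density $\lambda_m=\lambda_M/(N\mu)$ hidden inside $e^{\pi\lambda_m r_m^2/\mu}$. Because the macro-tier continues to use all $N$ channels in partial mode, $\lambda_m$ (and hence $p_{r_m}$ and the cognition boost) is still governed by $N$, not $N_F$. Since \eqref{eq:FemtoUsercoverage probability} is written in terms of the symbol $\lambda_m$ rather than $\lambda_M/N$, the literal substitution $N=N_F$ leaves this factor intact and the statement is consistent; I would simply flag this explicitly to resist the temptation of replacing $N$ by $N_F$ everywhere.
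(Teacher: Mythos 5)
Your proposal is correct and takes essentially the same route as the paper: the paper's own proof is precisely the one-line observation that an FU in partial mode sees the co-channel situation with $N$ replaced by $N_F$ (and $\zeta_C=\zeta_P$), which you have merely expanded by re-running the three-factor Laplace-transform product $\mathcal{L}_{I_{\text{mf}}}\mathcal{L}_{I_{\text{if}}}\mathcal{L}_{I_{\text{of}}}$ from Theorem~\ref{theorem:coverage probability_FU_CoChannel} with Lemmas~\ref{lemma1}, \ref{lemma1a}, and \ref{lemma2} adjusted accordingly. Your closing caveat---that $N\mapsto N_F$ acts only on the femto channel-sharing factors and not on the per-channel MU density $\lambda_m=\lambda_M/(N\mu)$ inside the cognition term, since MUs still spread over all $N$ channels in partial mode---is correct and consistent with the paper's Table~\ref{table:EffectiveDensities}, where the partial-mode entry $\tilde c\approx c/(p_{r_m}N_F)$ keeps $p_{r_m}$ defined through the unchanged $\lambda_m$.
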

\begin{proof}
 The coverage of an FU in partial mode is same as it is in co-channel mode except $N$ is replaced by $N_F$.
\end{proof}
\begin{corollary}
\label{theorem:coverage probability_MU_Partial}
 For given $r_m$ and $\zeta_P$, the approximated coverage prob. of an MU in two-tier network using partial mode is 
 \begingroup\makeatletter\def\f@size{8}\check@mathfonts
 \begin{equation}
   \mathbf{M}_P\left(\beta_M|r_m,\zeta_P\right)=p_c \mathbf{M}_C\left(\beta_M|r_m,\zeta_P\right)+(1-p_c)\mathbf{M}_O\left(\beta_M|\zeta_P\right)
  \label{eq:MacroUsercoverage probability_ortho}
 \end{equation}\endgroup
 where \mbox{\small{$p_c=\frac{N_F}{N}$}}.
\end{corollary}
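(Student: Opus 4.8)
The plan is to decompose the coverage probability of an MU over the two disjoint types of channels available in partial mode, and then recognize each conditional coverage probability as an already-derived quantity. In partial mode the total channel set $\mathcal{N}$ is split into $N_F$ channels shared with the femto-tier and $N-N_F$ channels reserved for the macro-tier. When the MU's serving MBS allocates the user to one of the shared channels, the interference environment is identical to the co-channel setting, so the conditional coverage is $\mathbf{M}_C(\beta_M|r_m,\zeta_P)$ from Theorem \ref{theorem:coverage probability_MU_CoChannel}. When the user is instead placed on a reserved channel, there is no co-channel FAP interference, and the setting collapses to the orthogonal case, giving $\mathbf{M}_O(\beta_M|\zeta_P)$ from Corollary \ref{theorem:coverage probability_MU_Orthogonal}. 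The key point is that in both cases the MBS activity factor is the same $\zeta_P$, since the activity factor is a per-channel occupancy probability that does not distinguish shared from reserved channels.

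First I would invoke the channel-access model of Section \ref{sec:MacroUserSINRandChannelAccessModel}, under which an MBS distributes its load uniformly across all $N$ channels, so that conditioned on the MU being served, the channel it occupies is uniformly random over $\mathcal{N}$. Hence the probability that the MU lands on a shared channel is $p_c=N_F/N$, and on a reserved channel is $1-p_c=(N-N_F)/N$. Next I would condition the event $\{\Gamma_M(r)\ge\beta_M\}$ on which of the two channel classes the MU occupies and apply the law of total probability, writing the overall coverage as the convex combination
\begingroup\makeatletter\def\f@size{8}\check@mathfonts
\begin{equation}
\mathbf{M}_P\left(\beta_M|r_m,\zeta_P\right)=p_c\,\mathbb{P}\left(\Gamma_M\ge\beta_M\,\middle|\,\text{shared channel}\right)+(1-p_c)\,\mathbb{P}\left(\Gamma_M\ge\beta_M\,\middle|\,\text{reserved channel}\right).
\end{equation}\endgroup
Finally I would identify the two conditional probabilities with the expressions already established: on a shared channel both FAP and MBS interference are present exactly as in co-channel mode, yielding $\mathbf{M}_C(\beta_M|r_m,\zeta_P)$; on a reserved channel the co-channel FAP set is empty ($\tilde\Phi_F=\emptyset$), so only the MBS interference $I_{\text{mm}}(r)$ remains and the coverage reduces to $\mathbf{M}_O(\beta_M|\zeta_P)$. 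Substituting these completes the proof.

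The main subtlety to justify carefully is that the \emph{same} activity factor $\zeta_P$ governs both conditional terms. One must argue that since the serving MBS spreads its real-time load uniformly across the entire set $\mathcal{N}$, the per-channel occupancy probability, and hence the density $\zeta_P\lambda_B$ of interfering co-channel MBSs, is identical whether the tagged channel is shared or reserved. I would also note that the femto-tier, being restricted to the $N_F$ shared channels, contributes no interference on the reserved channels, which is precisely why the reserved-channel term matches the orthogonal-mode formula rather than the co-channel one. The remaining steps are purely a matter of bookkeeping via the total-probability decomposition, so no new estimate or approximation beyond those already inherited from Theorems \ref{theorem:coverage probability_MU_CoChannel} and Corollary \ref{theorem:coverage probability_MU_Orthogonal} is required.
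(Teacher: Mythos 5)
Your proposal is correct and follows essentially the same route as the paper: the MU accesses a channel uniformly at random from the pool of $N$ channels, so by the law of total probability the coverage is the convex combination of the co-channel expression (with probability $p_c=N_F/N$, on a shared channel) and the orthogonal expression (with probability $1-p_c$, on a reserved channel). Your additional remarks---that the reserved channels see no FAP interference and that the same activity factor $\zeta_P$ applies to both terms because the MBS spreads its load uniformly over all $N$ channels---are sound elaborations of what the paper leaves implicit.
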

\begin{proof}
 It is assumed that an MU uniformly accesses a channel from pool of \mbox{\small{$N$}} channels. Therefore, an MU experiences coverage probability with co-channel mode, i.e. \mbox{\small{$\mathbf{M}_C(\beta_M|r_m,\zeta_P)$}} given in Theorem \ref{theorem:coverage probability_MU_CoChannel},  with probability \mbox{\small{$p_c$ (=$\frac{N_F}{N}$)}} and an MU experiences coverage probability with orthogonal mode, i.e. \mbox{\small{$\mathbf{M}_O(\beta_M|\zeta_P)$}} given in Corollary \ref{theorem:coverage probability_MU_Orthogonal},  with probability \mbox{\small{$1-p_c$}}. This completes the proof. 
\end{proof}
The activity factor evaluation of an MBS in partial mode follows the same process as of the co-channel mode whereas the \mbox{\small{$\mathbf{M}_C(\Gamma_i|r_m,\zeta_C)$}} is replaced by \mbox{\small{$\mathbf{M}_P(\Gamma_i|r_m,\zeta_P)$}} in \eqref{eq:AVgNumberOfChannelPerMacroService}.
\subsection{Design Insights}
\label{sec:FurtherAnalysis}
The activity factor of an MBS can be approximate to be a linear function of service arrival rate as follows
  \begin{equation}
   \zeta_C=\begin{cases}
		&\frac{\lambda_M \bar N}{\lambda_B \mu N}  \text{~~~if~~}\frac{\lambda_M \bar N}{\lambda_B \mu N}  \leq 1\\
		&1 \text{~~~~~~~~~otherwise}
           \end{cases}   
  \end{equation}
From Fig. \ref{fig:LinearApprx}, it can be seen that the linear approximation is tighter till activity of 0.6 wherein the coverage is seen to be sensitive to the activity factor. The curvature beyond 0.6 is attributed to the fact of raising blocking probability with increase in load. Therefore, it is essential to design network such that the activity lies within the linear region.  It can be noted that the activity factor is equal to the traffic intensity per MBS per channel (i.e. $\frac{\lambda_M}{\lambda_B \mu N}$) times mean number of channels required by a service (i.e. $\bar N$). For case of fixed bandwidth allocation, above approximated model can simplify the load-aware performance evaluation to a great extent.
\begin{figure}
\begin{center}
\centering
\includegraphics[width=.4\textwidth]{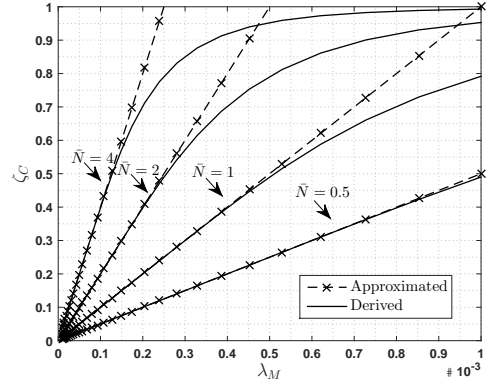} \\
\caption{Liner approximation of activity factor.}
\label{fig:LinearApprx}
\end{center} 
\end{figure}

It may be noted that under co-channel mode, the coverage probability of MUs degrade with increase in density of FAPs since inter-tier interference is proportional to density of FAPs. On the other hand, the coverage probability in orthogonal mode depends on the spectrum partitioning.
The spectrum partitioning increases the activity factor of an MBS results in increase intra-tier interference. Therefore, it is interesting to understand the interdependency between density of FAPs and spectrum partitioning in terms of regions  of co-channel mode and orthogonal mode is effective.
It is clear that till certain density of FAPs the co-channel mode provide better coverage and beyond that point the orthogonal mode dominates the co-channel mode. 
By equating the coverage probabilities under co-channel and orthogonal modes and further solving the expression for the density of FAPs we can obtain the critical point beyond which the orthogonal mode dominates the co-channel mode. 
Therefore, to gain the insight into the system we analyze the developed analytical framework under three special cases: 1) no cognition at FAPs and  mean number of channel per MU is one [\mbox{\small{$r_m=0$}} and \mbox{\small{$\bar N=1$}}], 2) no cognition at FAPs and actual mean number of channel per MU [\mbox{\small{$r_m=0$}} and \mbox{\small{$\bar N$}} as per \eqref{eq:AVgNumberOfChannelPerMacroService}], and 3) with cognitive FAPs [\mbox{\small{$r_m\neq 0$}}].

\begin{table*}[!t]
\begin{center}
\caption{Effective densities of co-channel MBSs and FAPs for co-channel, partial, and orthogonal spectrum access methods}
\label{table:EffectiveDensities} 
\small{
{\renewcommand{\arraystretch}{1.5}
\begin{tabular}{|l|c|c|c|c| }
\hline
&\multicolumn{2}{|c|}{{MUs}} & \multicolumn{2}{|c|}{{FUs}}\\ \hline
&  MBSs & FAPs& MBSs& FAPs\\
\hline
Co-channel&$\zeta_\text{C}\lambda_B$& $p_{r_m}\lambda_F\in\mathbb{R}^2\setminus\mathcal{B}(0,r_m)$ and $\tilde c\approx\frac{c}{p_{r_m}N}$ & $\zeta_\text{C}\lambda_B$ &$p_{r_m}\lambda_F$ and $\tilde c\approx\frac{c}{p_{r_m}N}$\\ \hline

&&For $N_F$ channels: &&\\ 
Partial&$\zeta_\text{S}\lambda_B$& $p_{r_m}\lambda_F\in\mathbb{R}^2\setminus\mathcal{B}(0,r_m)$ and $\tilde c\approx\frac{c}{p_{r_m}N_F}$ &$\zeta_\text{S}\lambda_B$&$p_{r_m}\lambda_F$ and $\tilde c\approx\frac{c}{p_{r_m}N_F}$\\
&&For $N-N_F$ channels: 0   &&\\ \hline
Orthogonal&$\zeta_\text{O}\lambda_B$&0&0&$\lambda_F$ and $\tilde c\approx\frac{c}{N_\text{F}}$\\ 
\hline
\end{tabular}}\quad}
\end{center}
\end{table*}

\textit{Case 1} [\mbox{\small{$r_m=0$}} and \mbox{\small{$\bar N=1$}}]: The fact of \mbox{\small{$r_m=0$}} allows a cluster of FAPs to access each channel. This makes the effective density of FAPs cluster equal to \mbox{\small{$\lambda_F$}} with \mbox{\small{$\frac{c}{N}$}} mean number of co-channel FAPs in $\mathbb{R}^2$. The fact of \mbox{\small{$\bar N=1$}} makes the activity factor and coverage probability uncoupled. The approximated activity factor reduces to \mbox{\small{$\zeta_C=\frac{\lambda_M}{\lambda_B \mu N}$}} for co-channel mode and  \mbox{\small{$\zeta_O=\frac{\lambda_M}{\lambda_B \mu (N-N_F)}$}} for orthogonal mode.  Therefore, by substituting activity factors in the coverage probabilities and equating them we get\begingroup\makeatletter\def\f@size{8}\check@mathfonts
\begin{align}
 \lambda_F' c'&=\frac{\lambda_M}{\mu}\frac{N_F}{N-N_F}\left(\frac{P_M}{\chi P_F}\right)^{\delta}\frac{\mathcal{H}_2(\beta_M,\delta)}{\delta\pi\csc[\delta\pi]}
\end{align}\endgroup
for $\frac{\lambda_M }{\lambda_B \mu }\leq N-N_F$.
The above expression state that till the density of FAPs is below critical point (i.e. \mbox{\small{$\lambda_F'c'$}}) the co-channel mode dominate the orthogonal mode. 
 An interesting remark can be observed that the critical point is proportional to the density of MUs and the ratio of channels available to FAPs and MBSs in orthogonal mode.   
  \begin{figure*}[!t]
\vspace{-.35cm}
\centering
\subfigure[]{
    \centering
    \includegraphics[width=.4\textwidth]{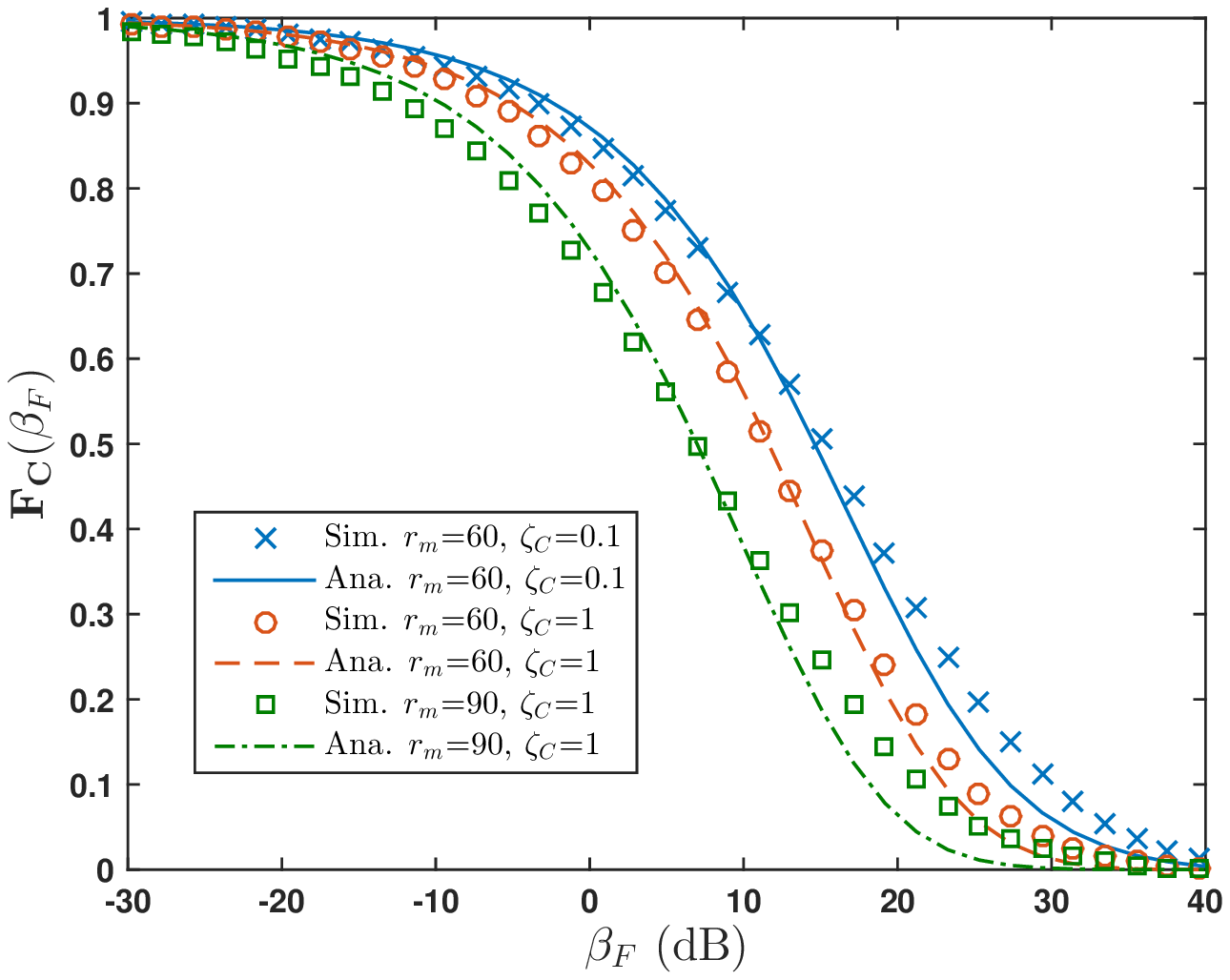} 
    \label{fig:coverage probabilityfemtoUser}
}
\subfigure[]{
    \centering
    \includegraphics[width=.4\textwidth]{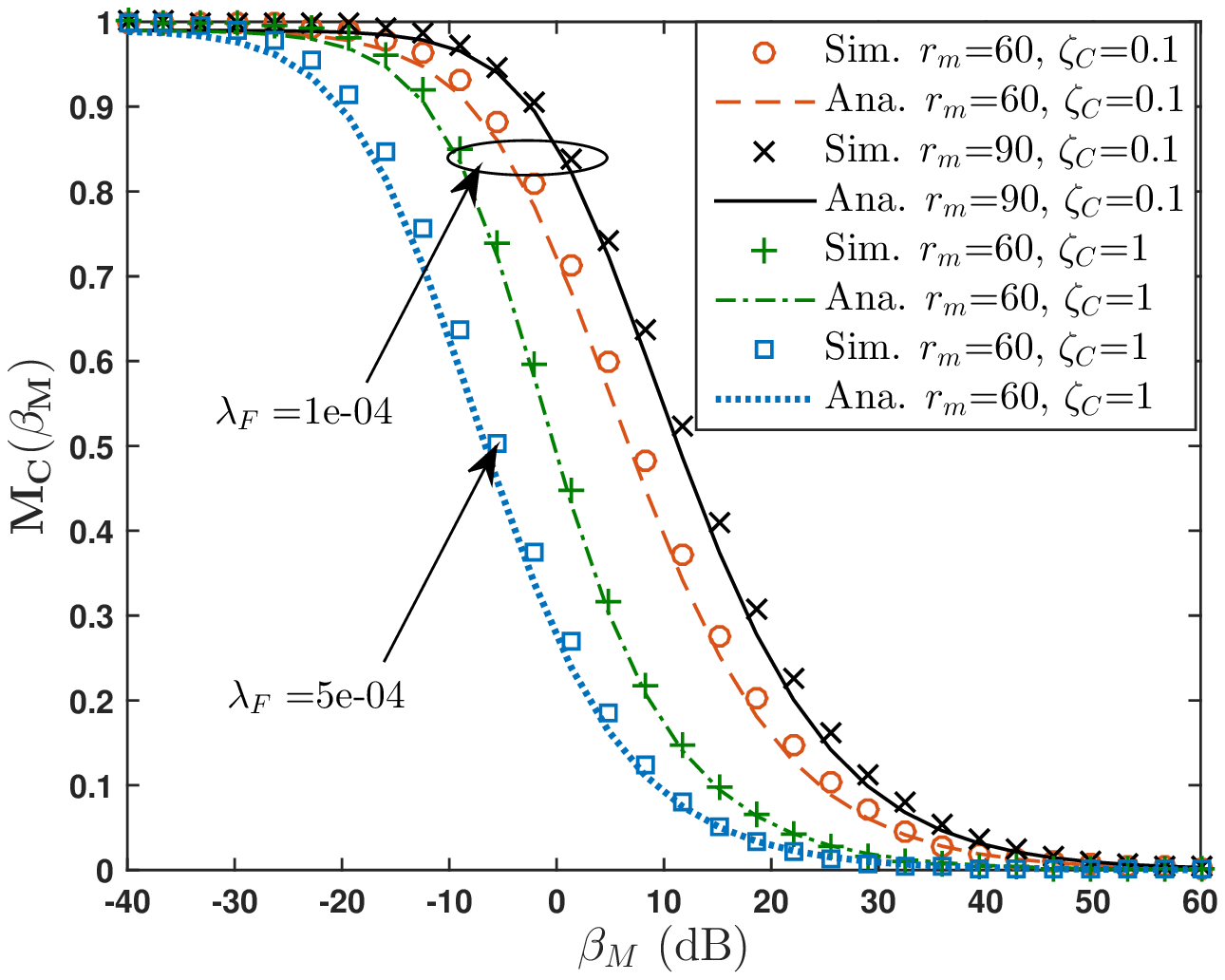} 
    \label{fig:coverage probabilityMemtoUser} 
}\vspace{-.35cm}
\caption{Validation of the coverage probability of \ref{fig:coverage probabilityfemtoUser} FU for \mbox{\small{$\lambda_F=1\text{e}-4$}}, and \ref{fig:coverage probabilityMemtoUser} MU.} 
\label{fig:SimulationValidation}\vspace{-.35cm}
\end{figure*}
 
\textit{Case 2} [\mbox{\small{$r_m=0$}} and \mbox{\small{$\bar N$}} as per \eqref{eq:AVgNumberOfChannelPerMacroService}]: With linear approximation of activity factor and evaluation of mean number of channels \mbox{\small{$\bar N$}} as per \eqref{eq:AVgNumberOfChannelPerMacroService}, the activity factor of an MBS can be rewritten as\begingroup\makeatletter\def\f@size{8}\check@mathfonts
\begin{align}
 \zeta_C&=\frac{\lambda_M}{\lambda_B\mu N}\sum\limits_{i=1}^{T}\frac{n_i-n_{i-1}}{1+\zeta_C \Gamma_i^\delta\mathcal{H}_2(\Gamma_i,\delta)+\pi\delta\frac{\lambda_Fc}{\lambda_BN}\left(\Gamma_i\chi\frac{P_F}{P_B}\right)^\delta\csc[\delta\pi]}\\
 \zeta_O&=\frac{\lambda_M}{\lambda_B\mu (N-N_F)}\sum\limits_{i=1}^{T}\frac{n_i-n_{i-1}}{1+\zeta_O \Gamma_i^\delta\mathcal{H}_2(\Gamma_i,\delta)}
\end{align}\endgroup
where \mbox{\small{$n_0=0$}} and \mbox{\small{$n_i=\frac{R_\text{th}}{B\log_2(1+\Gamma_i)}$}}. Note that \mbox{\small{$n_i<n_{i-1}$}}.
It is difficult to solve above expressions of activity factors  for large value of $T$. Therefor, to provide the insight into the load-aware network performance we considered the nature of activity factor as a function of FAPs density (i.e. \mbox{\small{$\zeta_C=f(\lambda_Fc)$}}) and as a function of \mbox{\small{$N_F$}} (i.e. \mbox{\small{$\zeta_O=g(N_F)$}}). It can be seen that \mbox{\small{$f(\lambda_Fc)$}} is a monotonically increasing function of \mbox{\small{$\lambda_Fc$}} and \mbox{\small{$g(N_F)$}} is also a monotonically increasing function of \mbox{\small{$N_F$}}.  
  
Representing the activity factors as \mbox{\small{$f(\lambda_Fc)$}} and \mbox{\small{$g(N_F)$}} under co-channel mode and orthogonal mode respectively, the solution for the critical point can be determined as follows.\begingroup\makeatletter\def\f@size{8}\check@mathfonts
  \begin{align}
   \lambda_F'c'={\lambda_BN}\left[g(N_F)-f(\lambda_F'c')\right]\left(\frac{P_M}{\chi P_F}\right)^{\delta}&\frac{\mathcal{H}_2(\beta_M,\delta)}{\delta\pi\csc[\delta\pi]}\label{eq:CriticalPoint1}
  \end{align}\endgroup
  for $\frac{\lambda_M }{\lambda_B \mu }\leq N-N_F$.
From above expression is can be seen that the critical point is dependent on the difference of activity factors under orthogonal mode and co-channel mode. In case the difference is negative, then the orthogonal mode is always better. Consider the extreme case of \mbox{\small{$N_F=0$}}, the \mbox{\small{$g(N_F)$}} is obtained to be least as an MBS access full spectrum and inter-tier interference is zero; however at the same time \mbox{\small{$f(\lambda_Fc)$}} is greater than \mbox{\small{$g(N_F)$}} for any value of \mbox{\small{$\lambda_Fc$}} due to inclusion of extra inter-tier interference. This makes the difference to be negative and orthogonal mode becomes the choice of deployment. Therefore, \mbox{\small{$N_F$}}  should be chosen beforehand such that the performance of FAPs is ensured.

\textit{Case 3} [\mbox{\small{$r_m\neq0$}}]: As we do not have the closed form expression for the coverage probability in case of \mbox{\small{$r_m>0$}}, i.e. clustered cognitive FAPs, it is difficult to find the critical point analytically. Nevertheless, to get a meaningful insight of introduction of the cognition in FAPs, here we provide the analysis of impact of \mbox{\small{$r_m$}} on the critical point in comparison with the non-cognitive case. By inclusion of cognition in the clusters of FAPs for the channel access, the inter-tier interference at MUs can be controlled. The inter-tier interference  reduces with increase in the range of exclusion region. Thus, the \mbox{\small{$f(\lambda_Fc)$}} becomes monotonically decreasing function of \mbox{\small{$r_m$}}. This fact helps to increase the value of critical point i.e. the difference of \mbox{\small{$g(N_F)-f(\lambda_F c)>0$}} in \eqref{eq:CriticalPoint1} can be maintained for larger values of \mbox{\small{$\lambda_Fc$}}. 
This implies the co-channel mode of deployment with cognitive FAPs  can provide better the coverage over larger region of values of density of FAPs. 

The analytical evaluation of critical point is difficult as the function $f(\lambda_Fc)$ and $g(N_F)$ are unknown.  However, using the developed analytical framework evaluation of the  critical point via numerical method is always possible.

\subsection{Effective Densities}
Table \ref{table:EffectiveDensities} summaries the impact of co-channel, partial, and orthogonal spectrum sharing modes on the point processes of co-channel MBSs and FAPs. Table \ref{table:EffectiveDensities} depicts the effective densities of interfering co-channel MBSs and FAPs for an MU and an FU. The density of intra-cluster co-channel FAPs is indicated by $\tilde c=\frac{c}{N}\exp(\pi\lambda_mr_m^2/\mu)$. 
From table it is clear that the characterization of interference from co-channel nodes is dependent on the spectrum sharing method. 

\section{Numerical Results and Discussion}
\label{sec:NumericalResultsAndDiscussion}
{In this section, we first validate the derived  expressions for coverage probability of an MU and an FU using the extensive simulations. Next, we present a numerical analysis for activity factor, coverage probabilitys and average rate under co-channel deployment of a two-tier network in terms of exclusion range (\mbox{\small{$r_m$}}), fractional load (\mbox{\small{$\lambda_M$}}) and FAP density (\mbox{\small{$\lambda_F$}}). Note that the macro user density $\lambda_M$ is considered as load parameters. Finally, we provide comprehensive analysis of network deployment in co-channel, orthogonal, and partial modes and also comment on their suitability for a given set of network parameters.} For numerical analysis, the parameters are considered to be \mbox{\small{$\alpha=4$}}, \mbox{\small{$\beta_M=\beta_F=1$}}, \mbox{\small{$N=20$}}, \mbox{\small{$\lambda_B=2e{-5}$}} meter$^{-2}$,  \mbox{\small{$c=7.85$, $R=50$}}, \mbox{\small{$r_0=15$}}, \mbox{\small{$\frac{R_{\text{th}}}{B}=0.5$}} b/s/Hz, and \mbox{\small{$\mu=1$}} per min; unless otherwise mentioned. 
\subsection{Coverage Probability and Activity Factor Validation}
Fig. \ref{fig:SimulationValidation} demonstrate a close agreement between the simulations and derived coverage probability for both MU and FU. It can be seen that the activity factor \mbox{\small{$\zeta_C$}} is the pivotal element of coverage probability for both MU and FU. Coverage of an MU further depends on the FAP density \mbox{\small{$\lambda_F$}} as it characterizes the inter-tier interference. Coverage probabilities of MUs and FUs also depends  on the choice of exclusion range ($r_m$). It is evident from Fig. \ref{fig:SimulationValidation} that the coverage of an MU (FU) is improves (degrades) with increase in $r_m$.  
Deviation in coverage probability of an FU is attributed to the fact that the assumption of fixed distance of FU from FAP made for analytical tractability. Use of mean distance in analytical framework causes mismatch in higher moments of SIR distribution. This justifies the reason of performance gap for very large and small values of SIR threshold ($\beta_F$). The close agreement between analytical and simulation results can be obtained by considering the FU distribution with a non-singular path loss model, such as $(1+r^\alpha)^{-1}$, in the analysis. However, this may hamper the analytical tractability.

\begin{figure}
\centering
\includegraphics[width=.5\textwidth]{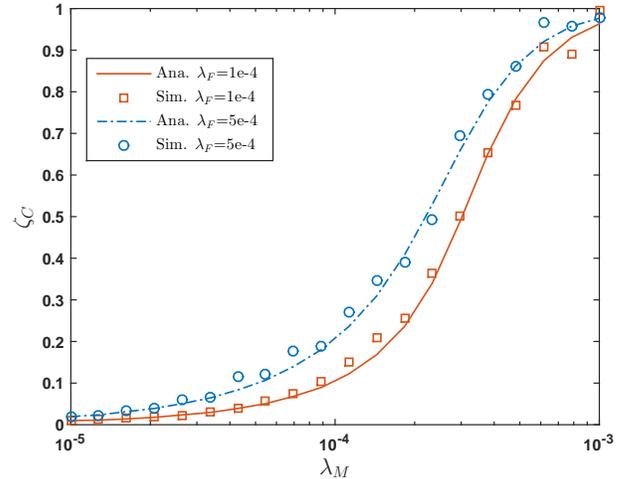} 
\caption{Validation of the Activity Factor of an MBS for \mbox{\small{$r_m=60$}}.}
\label{fig:ActivityFactorValidation}
\end{figure}
Fig. \ref{fig:ActivityFactorValidation} validates the designed framework for the activity factor of an MBS through simulations for co-channel mode. Higher activity of an MBS is seen with the increase of FAPs density for given traffic load. This can be attributed to the fact of increased interference because of increased FAP density causes to consume more bandwidth per service. The effective density of interfering FAP can be controlled be parameter $r_m$ (i.e. exclusion range), for example  \mbox{\small{$r_m=\infty$}} implies a zero femto-tier interference. It can be seen that the increase in activity factor due to increase of FAP density is higher in the middle portion of the load i.e. $\lambda_M$. This is mainly due to the fact that in lower region of $\lambda_M$ the number of MUs in a macro cell are inadequate. Therefore, even though the femto-tier interference is higher, the effective contribution in the increment of activity factor with lesser MUs is smaller compared to that with higher number of MUs. Moreover at higher values of $\lambda_M$, the probability of channel access (\mbox{\small{$p_{r_m}$}}) for femto-tier reduces leading to reduced overall interference suffered by MU. 
\subsection{Performance under Co-Channel Mode}

\begin{figure*}[!t]
\hspace{-.5cm}
\subfigure[]{
    \includegraphics[width=.36\textwidth]{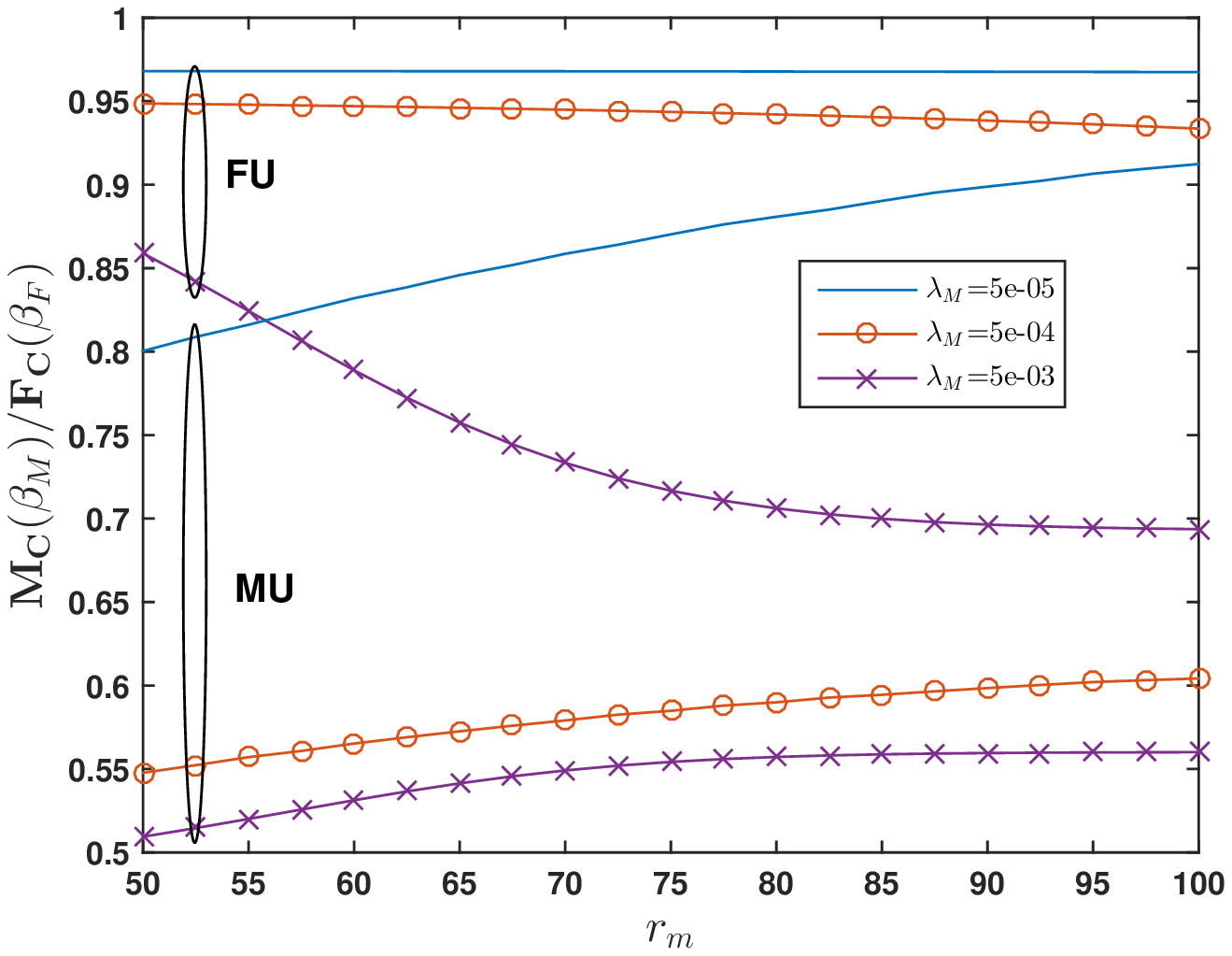} 
    \label{fig:OutP_vs_dm}
}\hspace{-.75cm}
\subfigure[]{
    \includegraphics[width=.36\textwidth]{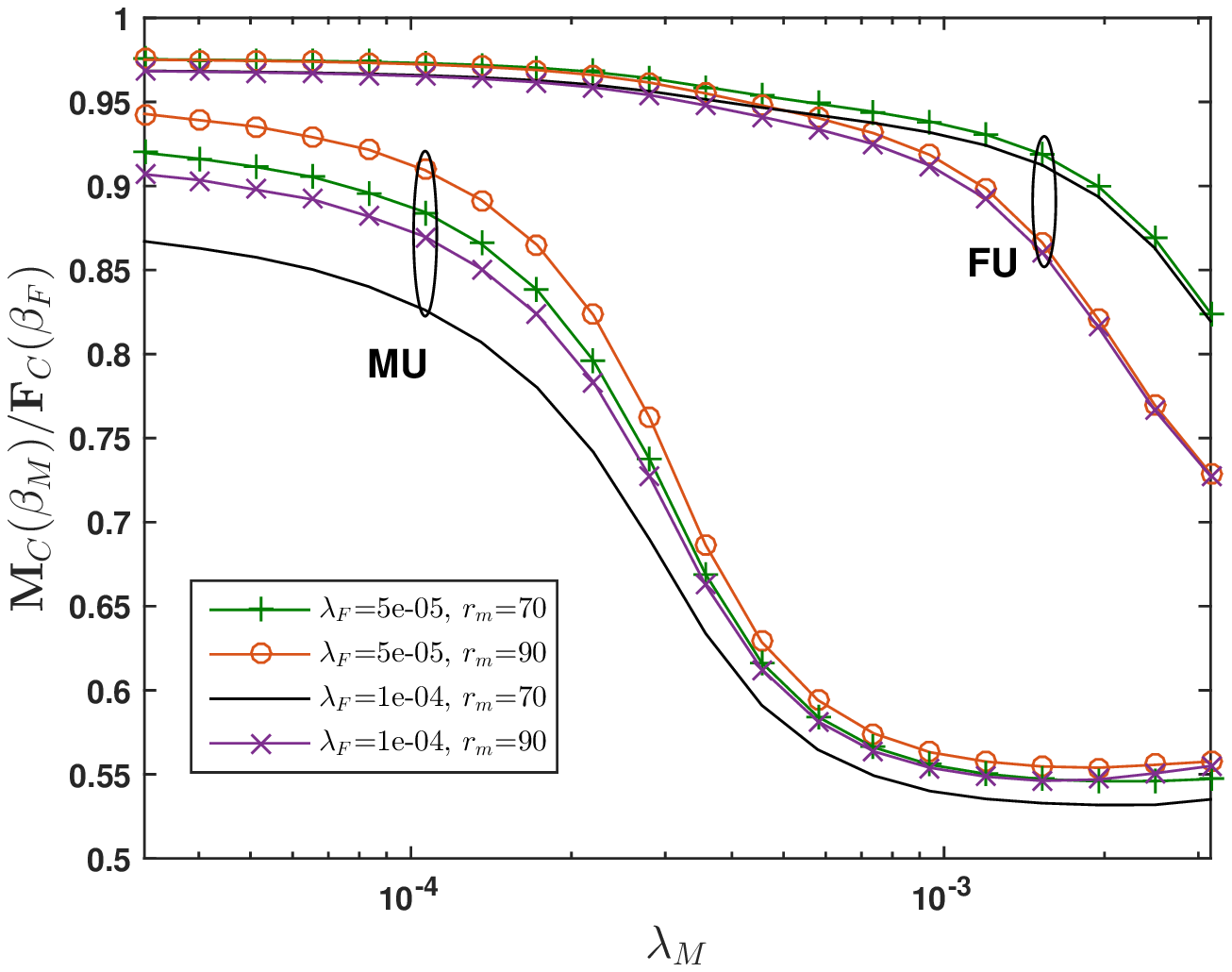} 
    \label{fig:OutP_vs_lamM}
}\hspace{-.75cm}
\subfigure[]{
    \includegraphics[width=.36\textwidth]{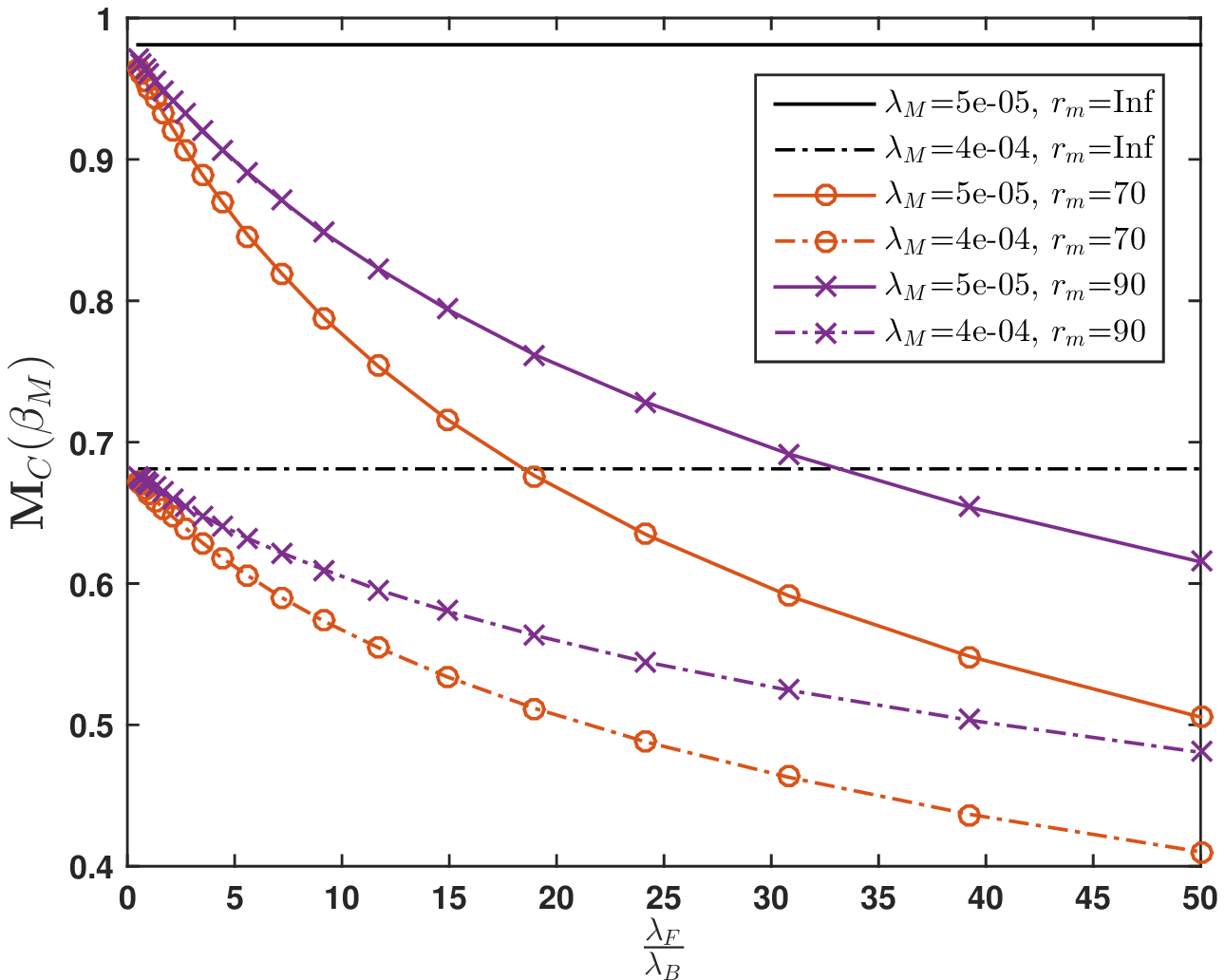} 
    \label{fig:OutP_vs_lamFbylamB} 
}\vspace{-.35cm}
\caption{Coverage probability of MU and FU Vs. \ref{fig:OutP_vs_dm} \mbox{\small{$r_m$}} for \mbox{\small{$\lambda_F=1\text{e}{-4}$}}, \ref{fig:OutP_vs_lamM} \mbox{\small{$\lambda_M$}}, and \ref{fig:OutP_vs_lamFbylamB} \mbox{\small{$\frac{\lambda_F}{\lambda_B}$}}.} 
\label{fig:OutP_vs_dm_and_lamFByLamB}\vspace{-.35cm}
\end{figure*}
\begin{figure}[!b]
\vspace{-.7cm}
 \includegraphics[width=.5\textwidth]{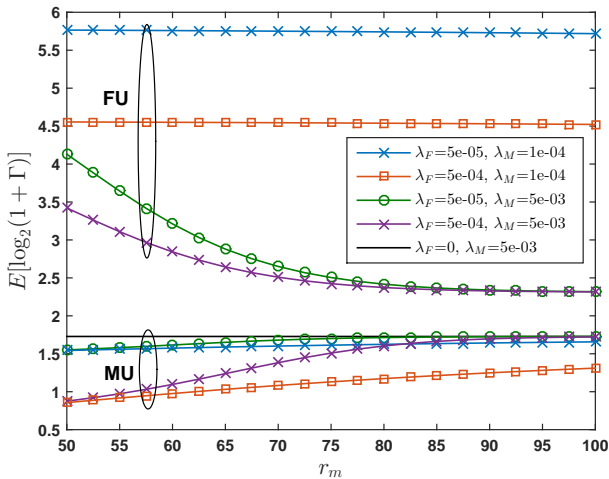} 
\caption{Average rate for best-effort traffic Vs. $r_m$ with \small{$\lambda_B=1\text{e}{-5}$}.}
\label{fig:AvgRateBestEffort_vs_dm} 
\end{figure}
\begin{figure*}[!b]
\centering
\subfigure[]{
    \centering
    \includegraphics[width=.4\textwidth]{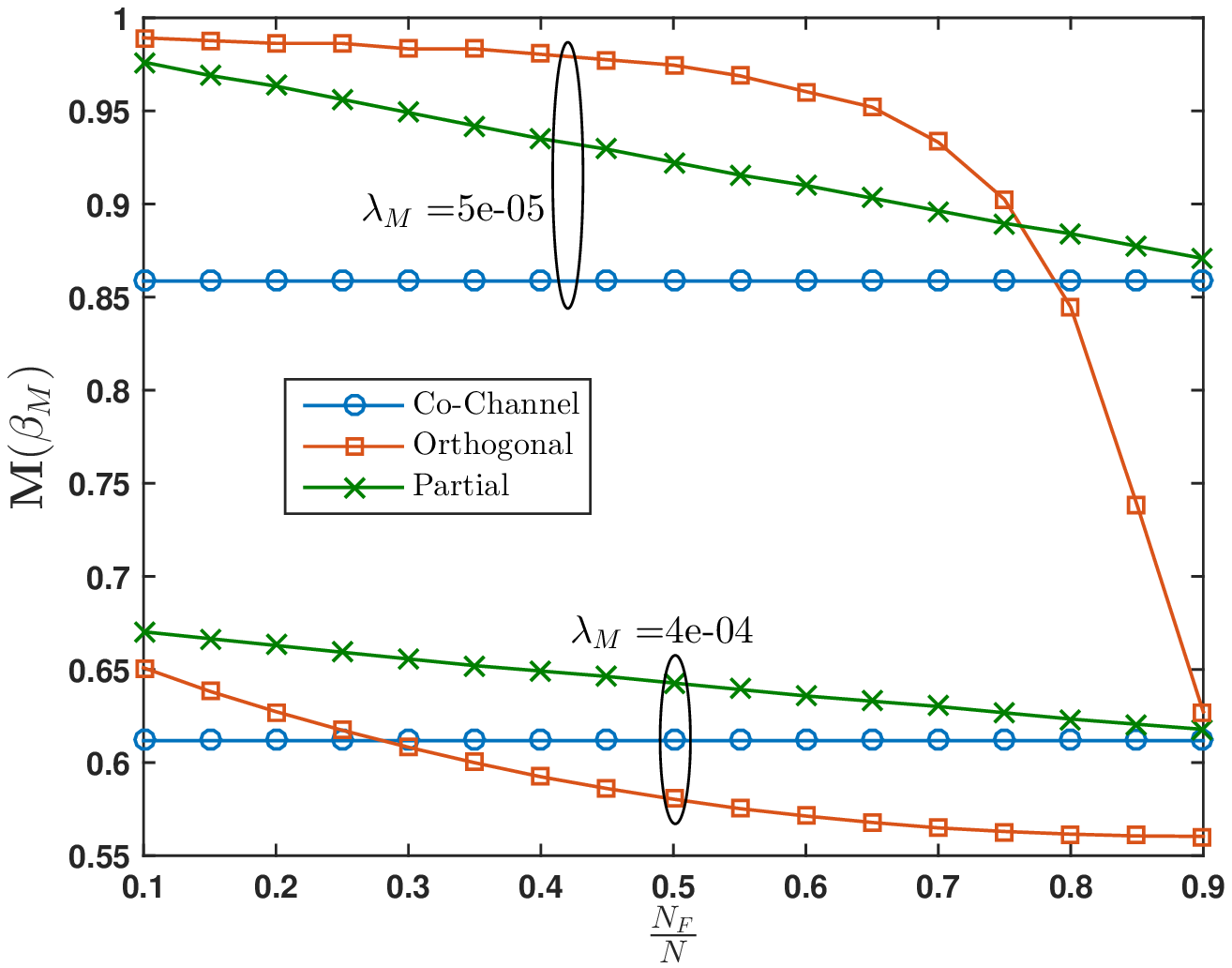} 
    \label{fig:coverage probabilityMacroUser_Comparison}
}
\subfigure[]{
    \centering
    \includegraphics[width=.4\textwidth]{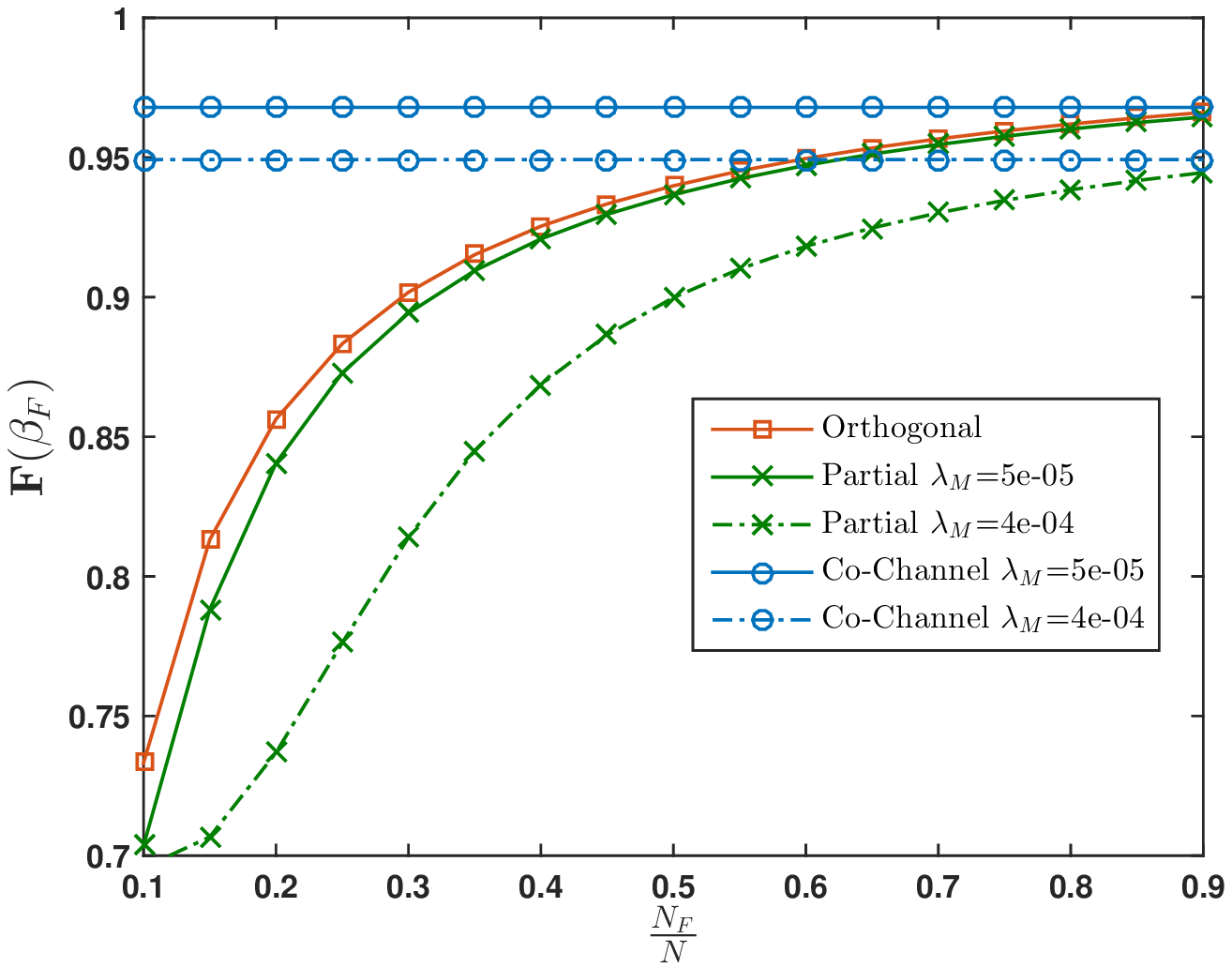} 
    \label{fig:coverage probabilityfemtoUser_Comparison} 
}\vspace{-.35cm}
\caption{Comparison of co-channel, orthogonal, and partial modes for coverage of an MU in \ref{fig:coverage probabilityMacroUser_Comparison} and an FU in \ref{fig:coverage probabilityfemtoUser_Comparison} Vs. \mbox{\small{$\frac{N_F}{N}$}} with \mbox{\small{$\lambda_F=1\text{e}{-4}$}} and \mbox{\small{$r_m=70$}}.} 
\label{fig:coverage probability_Comparison}
\end{figure*}
\begin{figure*}[!t]
\vspace{-.35cm}
\centering
\subfigure[]{
    \centering
     \includegraphics[width=.4\textwidth]{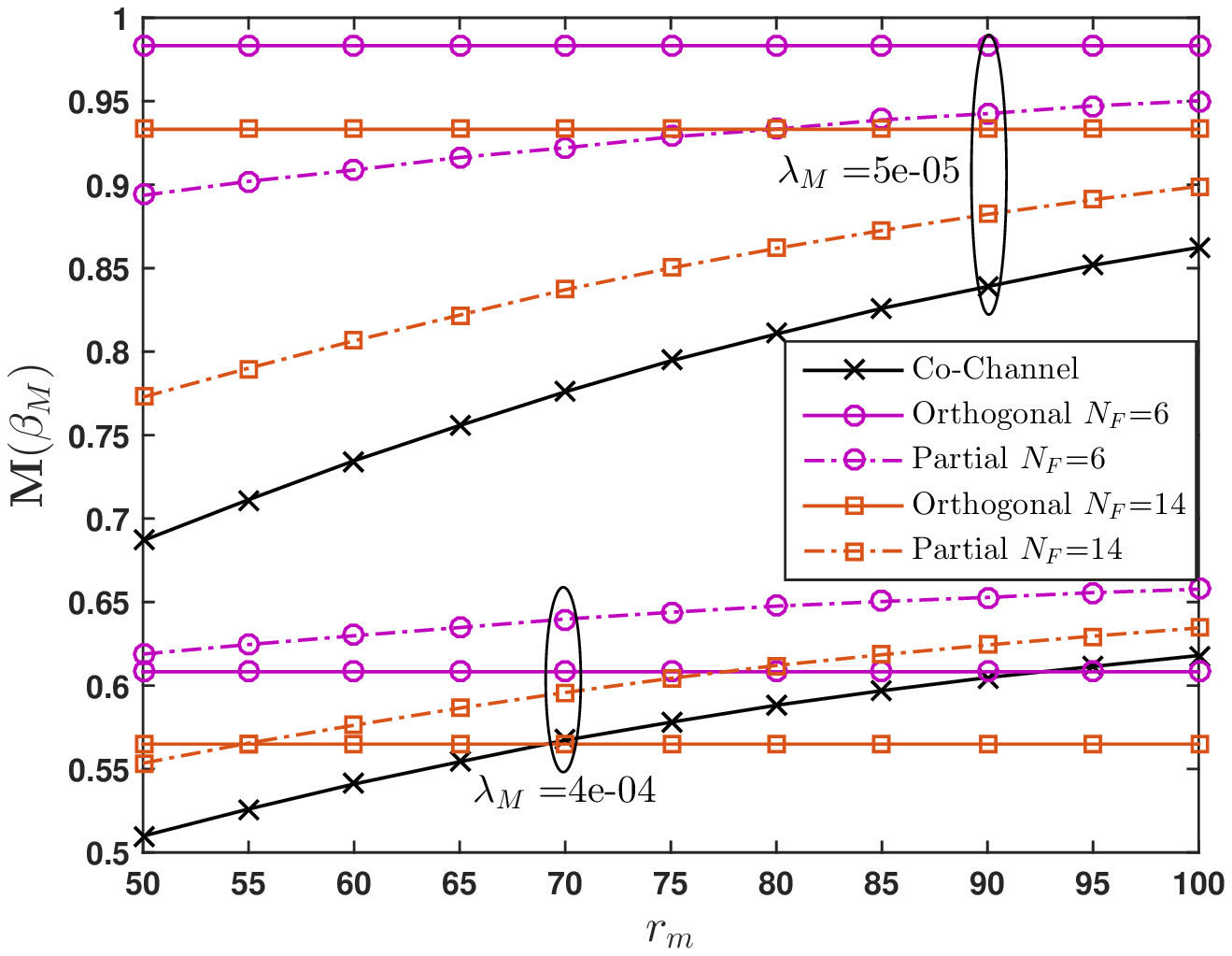} 
    \label{fig:coverage probabilityVsdm_MacroUser_Comparison} 
}
\subfigure[]{
    \centering
     \includegraphics[width=.4\textwidth]{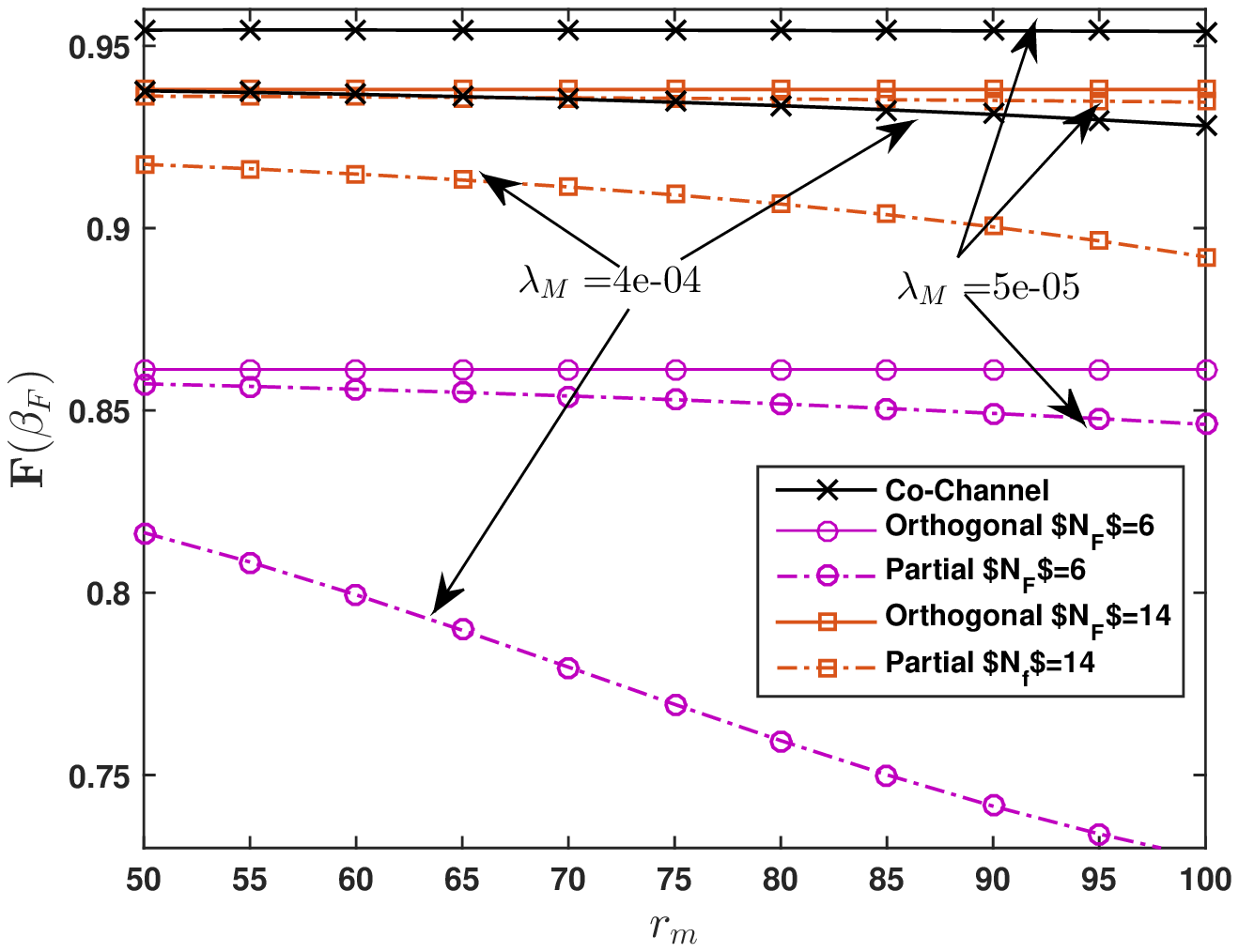} 
    \label{fig:coverage probabilityVsdm_FemtoUser_Comparison} 
}\vspace{-.35cm}
\caption{Comparison of co-channel, orthogonal, and partial modes for coverage of an MU in \ref{fig:coverage probabilityVsdm_MacroUser_Comparison}, and an FU in \ref{fig:coverage probabilityVsdm_FemtoUser_Comparison} Vs. \mbox{\small{$r_m$}} with \mbox{\small{$\lambda_F=1\text{e}{-4}$}}.}
\label{fig:coverage probabilityVsdm_Comparison} \vspace{-.35cm}
\end{figure*}
Fig. \ref{fig:OutP_vs_dm} depicts the coverage probability of an MU and an FU versus \mbox{\small{$r_m$}}. As expected the coverage of an MU improves with the increasing value of \mbox{\small{$r_m$}}. However, it increases at a faster rate in the lower traffic intensity of macro-tier (i.e. lower values of \mbox{\small{$\lambda_M$}}), since the intra-tier interference is less for a lower MBS activity. Moreover, the saturation is seen at a higher \mbox{\small{$\lambda_M$}} due to the fact that \mbox{\small{$\zeta_C\approx 1$}} and \mbox{\small{$p_{r_m}\ll 1$}}. The coverage of FU is more or less unaffected by \mbox{\small{$r_m$}} for lower macro traffic intensity. However, it degrades with \mbox{\small{$r_m$}} for higher values of \mbox{\small{$\lambda_M$}} as the intra-cluster interference increases due to significantly dropped $p_{r_m}$ at higher values of $\lambda_M$.
Next, Fig. \ref{fig:OutP_vs_lamM} depicts the coverage of an MU and an FU versus MU traffic load (\mbox{\small{$\lambda_M$}}). The coverage degrading for MUs and FUs with increase in the value of \mbox{\small{$\lambda_M$}} can be observed as the activity factor of MBS increases with \mbox{\small{$\lambda_M$}}. However, the rate of degradation depends on $r_m$ and $\lambda_F$ which decides the femto-tier interference for an MU and thus the activity factor of an MBS. With further increase in \mbox{\small{$\lambda_M$}}, a saturation effect is observed in coverage of MU as the activity factor of an MBS reaches to one and the femto-tier interference is minimal at a higher \mbox{\small{$\lambda_M$}}. 
From Fig. \ref{fig:OutP_vs_lamFbylamB} it can be seen that coverage of an MU drops with \mbox{\small{$\lambda_F$}}. The starting point of drop depends on the macro-tier traffic intensity. 

Fig. \ref{fig:AvgRateBestEffort_vs_dm} shows the average rate of an MU and an FU in a best-effort traffic environment which is dependent on the \mbox{\small{$\lambda_M$}} even though the activity factor of MBS is set to one. This is mainly because the channel accessibility of FAPs is \mbox{\small{$\lambda_M$}} dependent which decides the femto-tier interference. It can be noted that the average rate of an MU increases and an FU decreases with the increase in \mbox{\small{$r_m$}}. However, slope of increment and decrement is conditioned on \mbox{\small{$\lambda_M$}}.  
\subsection{Comparative Analysis of Coverage in Co-Channel, Orthogonal, and Partial Spectrum Sharing Modes}
Here we present the comparison of coverage probabilities of an MU and an FU experienced in co-channel, orthogonal, and partial spectrum sharing modes in terms of parameters \mbox{\small{$r_m$}} and \mbox{\small{$N_F$}} for a given load (\mbox{\small{$\lambda_M$}}) and femto cluster density (\mbox{\small{$\lambda_F$}}).

Fig. \ref{fig:coverage probability_Comparison} depicts the comparison of coverage probabilities of an MU and an FU versus \mbox{\small{$N_F$}}. It can be seen that the coverage of an MU (FU) is a non-increasing (non-decreasing) function of \mbox{\small{$N_F$}}. Fig. \ref{fig:coverage probabilityMacroUser_Comparison} shows that the partial/orthogonal mode yields better coverage for MUs for smaller \mbox{\small{$N_F$}} compared to co-channel mode as it avoids severe femto-tier interference over \mbox{\small{$N-N_F$}} number of channels. At higher \mbox{\small{$N_F$}}, however, co-channel mode yields better coverage for MUs compared to orthogonal mode. On the contrary, the co-channel mode always provides lower coverage compared to partial mode. Moreover, it can be seen that the effectiveness of partial/orthogonal mode over co-channel mode is higher in lower load scenario as \mbox{\small{$\lambda_M$}} has direct impact on the MBS activity which further decides the intra-tier interference. Fig. \ref{fig:coverage probabilityfemtoUser_Comparison} shows that partial/orthogonal mode render reduced coverage for FUs compared to the co-channel  mode. This is due to the fact that: 1) these modes limit the span of channel access for FUs which leads to increased intra-tier interference, and 2) interference imposed by MBSs on the shorter femto links is significantly smaller compared to their intra-tier interference.
 
Fig. \ref{fig:coverage probabilityVsdm_Comparison} shows that the coverage experienced by an MU (FU) is a non-decreasing (non-increasing) function of \mbox{\small{$r_m$}}. This indicates that the variations of coverage probabilities w.r.t \mbox{\small{$r_m$}} are opposite to that of w.r.t \mbox{\small{$N_F$}}. Note that the coverage in orthogonal mode is independent of \mbox{\small{$r_m$}}. Fig. \ref{fig:coverage probabilityVsdm_MacroUser_Comparison} depicts that the coverage of an MU increases with \mbox{\small{$r_m$}} under co-channel/partial mode and moves beyond the value of coverage that is realized using orthogonal mode. It can be observed that the disparity in the coverage under these modes is relatively more for lesser load compared to that in higher load scenario. This is due to a lower activity of an MBS. From Fig. \ref{fig:coverage probabilityVsdm_FemtoUser_Comparison} it can be seen that in a co-channel/partial mode the coverage of an FU drops with increase in \mbox{\small{$r_m$}}. This is because, at higher values of \mbox{\small{$r_m$}} the channel availability for a cluster is reduced which further increases the intra-cluster interference. 

Moreover, Fig. \ref{fig:coverage probabilityVsLamF_Comparison} depicts the coverage of an MU versus femto cluster density (\mbox{\small{$\lambda_F$}}). As expected, the coverage reduces with increase in \mbox{\small{$\lambda_F$}}. Note that the coverage in orthogonal mode is independent of \mbox{\small{$\lambda_F$}}.  It can be seen that the co-channel and partial mode provide better coverage for smaller values of \mbox{\small{$\lambda_F$}} compared to that of orthogonal mode. However, at larger values of \mbox{\small{$\lambda_F$}} the orthogonal mode becomes better. The crossover point is dependent on \mbox{\small{$N_F$}} and \mbox{\small{$\lambda_M$}}. It is observed that the crossover point shifts towards right with increase in the value of \mbox{\small{$N_F$}}. 
Moreover, the crossover point further shifts towards right with increased \mbox{\small{$\lambda_M$}}. This implies that the orthogonal mode with suitable value of \mbox{\small{$N_F$}} seems to be a natural choice for higher vales of \mbox{\small{$\lambda_F$}}. For the lower load conditions of macro-tier and smaller \mbox{\small{$\lambda_F$}}, the co-channel (or partial with suitable \mbox{\small{$N_F$}} that keeps outage of FU within limit) can provide better coverage to MUs.
\begin{figure}[!h]
\vspace{-.35cm}
\centering
\includegraphics[width=.45\textwidth]{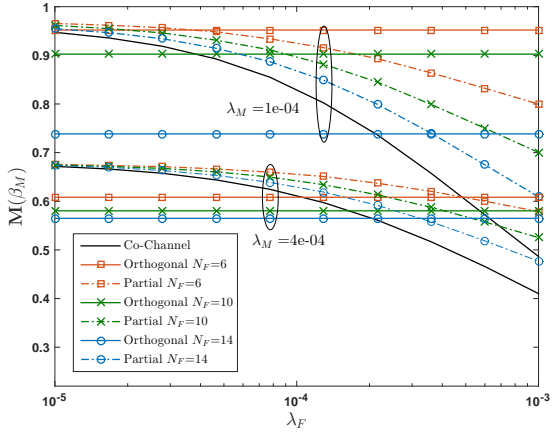} 
\caption{Coverage probability of an MU Vs. \mbox{\small{$\lambda_F$}} for co-channel, orthogonal, and partial modes.}
\label{fig:coverage probabilityVsLamF_Comparison} \vspace{-.5cm}
\end{figure}
\section{Conclusion}
\label{sec:Conclusion}
{In this paper, we present an coverage analysis for a two-tier cellular network under fractional load scenario. We consider the real-time traffic for macro-tier and best-effort traffic for femto-tier under closed access to seize the realistic traffic scenario.} For practical relevance, we model the spatial randomness of macro base stations (MBSs) and femto access points (FAPs) using homogeneous Poisson point process and Poisson cluster process, respectively. To avoid severe inter-tier interference to macro users, we have considered cognitive FAPs which accesses the channels having nearest macro users beyond the exclusion range from the center of a cluster of FAPs. Further the available set of channels are assumed to be equally distributed among the FAPs within the cluster. {Because of the fractional load of macro-tier, the MBS activity factor has role in the coverage analysis. Moreover, coverage probability decides the bandwidth consumption of a service which further decides the activity factor. Therefore, exploiting this coupling, we developed a framework to evaluate the activity factor recursively which aids the evaluation of actual coverage under fractional load conditions.} The derived coverage probability expressions for a femto user and macro users are validated using simulation results. Through numerical results we demonstrate the impact of fractional load and FAP density on the activity factor of an MBS, henceforth on coverage. Next, we provide the comparative analysis of coverage probability in three types of spectrum sharing modes, \text{viz.}: co-channel, orthogonal, and partial. It is shown that the orthogonal mode yields better coverage at higher femto cluster density. Furthermore, we also demonstrate that the reservation of a few number of channels with larger exclusion range for FAPs (as in partial mode), can enhance the coverage of macro users significantly. However, one needs to ensure that the outage of femto user is kept within limit. The achievable gain in coverage is observed to be higher for the lower values of macro-tier load. 
\appendices
\section{Proof of Lemma \ref{lemma1}}
\label{app:lemma1}
The \mbox{\small{$\mathbb{E}_o^![\cdot]$}} represents the expectation with respect to the reduced Palm measure.  It is conditional expectation for point processes, given that there is a point of process at origin but without including the point. Since cluster has a Poisson distribution of points, by Slivnyak's theorem \cite{Stoyan}, we have  \mbox{\small{$\mathbb{E}_o^![\prod_{x\in\phi_c}f(x)]=\mathbb{E}[\prod_{x\in\phi_c}f(x)]$}}. The LT of \mbox{\small{$I_{\text{if}}$}} is given as\begingroup\makeatletter\def\f@size{8}\check@mathfonts
\begin{align}
\mathcal{L}_{I_{\text{if}}}(s)&=\mathbb{E}_{\phi,h,n}\left[\exp\left(-sI_{\text{if}}\right)\right]\nonumber\\ 
&=\mathbb{E}_{\phi,h,n}\left[\prod\nolimits_{x_j\in\tilde\phi_c}\exp\left(-\frac{sh_j\chi P_F}{\|x_j\|^{\alpha}}\right)\right]\nonumber\\ 
&\eqa\mathbb{E}_{\phi,n}\left[\prod\nolimits_{x_j\in\tilde\phi_c}\mathcal{L}_h\left(\frac{s\chi P_F}{\|x_j\|^{\alpha}}\right)\right]\nonumber\\
&\eqb\mathbb{E}_{n}\left[\mathbb{E}_k\left[\left[\int_{\mathbb{R}^2}\mathcal{L}_h\left(\frac{s\chi P_F}{\|x\|^{\alpha}}\right)f(x)dx\right]^{\frac{k}{n+1}}\right]\right]\nonumber\\
&\eqc\mathbb{E}_{n}\left[\exp\left(-c\left(1-\left[\int_{\mathbb{R}^2}\mathcal{L}_h\left(\frac{s\chi P_F}{\|x\|^{\alpha}}\right)f(x)dx\right]^{\frac{1}{n+1}}\right)\right)\right]\nonumber\\
&\leqd\mathbb{E}_{n}\left[\exp\left(-\frac{c}{n+1}\int_{\mathbb{R}^2}\left(1-\mathcal{L}_h\left(\frac{s\chi P_F}{\|x\|^{\alpha}}\right)\right)f(x)dx\right)\right]\nonumber\\
&\approxe\exp\left(-{\tilde c}\int_{\mathbb{R}^2}\left(1-\mathcal{L}_h\left(\frac{s\chi P_F}{\|x\|^{\alpha}}\right)\right)f(x)dx\right)\nonumber\\
&\eqf\exp\left(-\frac{2\tilde c}{R^2}\int_{0}^{R}\frac{r}{1+\frac{1}{s\chi P_F}r^{\alpha}}dr\right) 
\label{eq:lemma11}
\end{align}\endgroup
Step (a) follows from the definition of LT of \mbox{\small{$h$}} and due to the fact of \mbox{\small{$h_j$s}} are i.i.d's. Further, (b) follows as the \mbox{\small{$x_j$s}} are i.i.d's and number of co-channel FAPs are \mbox{\small{$\frac{k}{n+1}$}} where \mbox{\small{$k$}} is the number of FAPs in a cluster and \mbox{\small{$n$}} is the number of accessible channels in the cluster. Further (c) follows from the definition of moment generating function (MGF) of \mbox{\small{$k$}}, i.e. \mbox{\small{$\mathbb{E}(z^k)=\exp(-c (1-z))$}}. Using inequality \mbox{\small{$1-z^{\frac{1}{n}}\geq\frac{1}{n}[1-z]$}} for \mbox{\small{$z\geq 0$}} and \mbox{\small{$n>0$}}, we can write Step (d). The approximation in Step (e) is obtained by taking lower bound (using Jensen's inequality) on the upper bound of the function where \mbox{\small{$\tilde c=c\mathbb{E}\left[\frac{1}{n+1}\right]$}}.
Next Step (f) yields through the substitution of \mbox{\small{$\mathcal{L}_h(s)=\frac{1}{1+s}$}} and a Cartesian to polar co-ordinate conversion. Finally, substitution of \mbox{\small{$\frac{1}{\left(s\chi P_F\right)^{{2}/{\alpha}}}r^{2}=t$}} will complete the proof. 

As \mbox{\small{$n$}} follows binomial distribution \eqref{eq:nChannelAccessibleProb}, the \mbox{\small{$\tilde c$}} becomes:
\begingroup\makeatletter\def\f@size{8}\check@mathfonts
\begin{align}
\tilde c &=c\sum\limits_{n=0}^{N-1} \frac{1}{n+1}\dbinom{N-1}{n} p_{r_m}^{n}\left(1-p_{r_m}\right)^{N-1-n}\nonumber\\
 &=\frac{c}{Np_{r_m}}\sum\limits_{j=1}^{N} \dbinom{N}{j} p_{r_m}^{j}\left(1-p_{r_m}\right)^{N-j}\nonumber\\
 &=\frac{c}{Np_{r_m}}\left[1-\left(1-p_{r_m}\right)^N\right]
 \approxa\frac{c}{N}\exp(\pi\lambda_m r_m^2/\mu)
\label{eq:ExpectationOfBinomial}
\end{align}\endgroup
Equation \eqref{eq:ChannelAccessProb} and fact of \mbox{\small{$z^N\rightarrow 0$}}, for \mbox{\small{$z<1$}} and \mbox{\small{$N>>1$}}, yield  Step (a). The approximation becomes accurate for larger \mbox{\small{$N$}}.
\section{Proof of Lemma \ref{lemma1a}}
\label{app:lemma1a}
Here, \mbox{\small{$\mathbb{E}_o^![\cdot]$}} is conditional expectation given that there is a parent point of process at origin but without including it. Since cluster has a Poisson distribution of points, by Slivnyak's theorem \cite{Stoyan}, we have  \mbox{\small{$\mathbb{E}_o^![\prod_{x\in\Phi_F}f(x)]=\mathbb{E}[\prod_{x\in\Phi_F}f(x)]$}}. 
The probability generating functional (PGFL) of PCP is \cite{Stoyan} 

\begingroup\makeatletter\def\f@size{7}\check@mathfonts
\begin{align}
\mathbb{E}\left[\prod_{x\in\Phi}v(x)\right]=\exp\left(-\lambda\int_{\mathbb{R}^2}1-M\left(\int_{\mathbb{R}^2}v(x+y)f(x)dx\right)dy\right) 
\label{eq:probability_generating_function_PCP}
\end{align}\endgroup
where \mbox{\small{$M(z)=\exp\left(-c\left(1-z\right)\right)$}} is the MGF of Poisson distributed random variable.
The LT of \mbox{\small{$I_{\text{of}}$}} is:
\begingroup\makeatletter\def\f@size{8}\check@mathfonts
\begin{align}
&\mathcal{L}_{I_{\text{of}}}(s)=\mathbb{E}_{\tilde\Phi_F,h,n}\left[\exp\left(-sI_{\text{of}}\right)\right]\nonumber\\
&=\mathbb{E}_{\tilde\Phi_F,h,n}\left[\prod\nolimits_{x_j\in\tilde\Phi_F}\exp\left(-\frac{sh_j\chi P_F}{\|x_j\|^{\alpha}}\right)\right]\nonumber\\
&\eqa\mathbb{E}_{\tilde\Phi_F,n}\left[\prod\nolimits_{x_j\in\tilde\Phi_F}\mathcal{L}_h\left(\frac{s\chi P_F}{\|x_j\|^{\alpha}}\right)\right]\nonumber\\
\begin{split}
&\eqb\mathbb{E}_n\left[\exp\left(-\tilde\lambda_F\int\nolimits_{\mathbb{R}^2}1-\exp\left(-c\left(1-\vphantom{\mathcal{L}_h\left(\frac{x\chi P_F}{\|x-y\|^\alpha}\right)}\right.\right.\right.\right.\\
&\left.\left.\left.\left.~~~~~~~~\left[\int\nolimits_{\mathbb{R}^2}\mathcal{L}_h\left(\frac{s\chi P_F}{\|x-y\|^\alpha}\right)f(x)dx\right]^{\frac{1}{n+1}}\right)\right)dy\right)\right]\nonumber\\
\end{split}\\
\begin{split}
&\leqc\mathbb{E}_n\left[\exp\left(-\tilde\lambda_F\int\nolimits_{\mathbb{R}^2}1-\exp\left(-\frac{c}{n+1}\vphantom{\mathcal{L}_h\left(\frac{x\chi P_F}{\|x-y\|^\alpha}\right)}\right.\right.\right.\\
&\left.\left.\left.~~~~~~~~\int\nolimits_{\mathbb{R}^2}\left[1-\mathcal{L}_h\left(\frac{s\chi P_F}{\|x-y\|^\alpha}\right)\right]f(x)dx\right)dy\right)\right]\nonumber\\
\end{split}\\
&\leqd\mathbb{E}_n\left[\exp\left(-\tilde\lambda_F\frac{c}{n+1}\int\nolimits_{\mathbb{R}^2}\int\nolimits_{\mathbb{R}^2}\left[1-\mathcal{L}_h\left(\frac{s\chi P_F}{\|x-y\|^\alpha}\right)\right]f(x)dxdy\right)\right]\nonumber\\
&\approxe\exp\left(-\tilde\lambda_F\tilde c\int\nolimits_{\mathbb{R}^2}\int\nolimits_{\mathbb{R}^2}\left[1-\mathcal{L}_h\left(\frac{s\chi P_F}{\|x-y\|^\alpha}\right)\right]f(x)dxdy\right)\nonumber\\
&\eqf\exp\left(-\tilde\lambda_F\tilde c\int_{\mathbb{R}^2}f(x)dx \int\nolimits_{\mathbb{R}^2}\frac{dz}{1+\frac{1}{s\chi P_F}\|z\|^{\alpha}}\right)
\label{eq:LT_PCP_InterCluster1}
\end{align}\endgroup
where \mbox{\small{$\tilde\lambda_F=\lambda_Fp_{r_m}$}} is the thinned density of femto cluster.
Step (a) follows from the definition of LT of \mbox{\small{$h$}} and due to the fact of \mbox{\small{$h_j$s}} are i.i.d's.
Step (b) is followed by applying PGFL of PCP process. Using inequality \mbox{\small{$1-z^{\frac{1}{n}}\geq\frac{1}{n}[1-z]$}} for \mbox{\small{$z\geq 0$}} and \mbox{\small{$n>0$}}, we can write Step (c). 
Inequality in Step (d) is followed using \mbox{\small{$1-\exp(-\Delta x)\leq \Delta x$}} for \mbox{\small{$\Delta\geq 0$}}.
Next, the approximation in Step (e) is obtained by taking lower bound (using Jensen's inequality) on upper bound of the function where \mbox{\small{$\tilde c=c\mathbb{E}\left[\frac{1}{n+1}\right]$}}.
Step (f) is followed by substituting \mbox{\small{$\mathcal{L}_h(s)=\frac{1}{1+s}$}} and replacing \mbox{\small{$x-y$}} by \mbox{\small{$z$}}.
Further solving \eqref{eq:LT_PCP_InterCluster1} yields the \eqref{eq:Laplace_InterClusterInterference}.  
\section{Proof of Lemma \ref{lemma3}}
\label{app:lemma3}
Interference from co-channel FAPs to a macro user is \mbox{\small{$I_{\text{fm}}=\sum_{x_i\in\tilde\Phi_F'}h_i\|x_i\|^{-\alpha}\chi {P_F}$}}
where parents point of \mbox{\small{$\tilde\Phi_F$}} has 0 intensity inside \mbox{\small{$\mathcal{B}\left(0,r_m\right)$}} and \mbox{\small{$\tilde\lambda_F$}} outside \mbox{\small{$\mathcal{B}\left(0,r_m\right)$}}. 
Let $\mathcal{B}^c_{r_m}$ denotes the complement of a ball of radius $r_m$ centered at origin, i.e. $\mathbb{R}^2\setminus\mathcal{B}(0,r_m)$. 
Therefore, following the proof of Lemma \ref{lemma1a} till Step (e), we can write\begingroup\makeatletter\def\f@size{8}\check@mathfonts
\begin{align}
\mathcal{L}_{I_{\text{fm}}}\left(s,r_m\right)&\approx\exp\left(-\tilde\lambda_F\tilde c\int\nolimits_{\mathcal{B}^c_{r_m}}\int\nolimits_{\mathbb{R}^2}\left[1-\mathcal{L}_h\left(\frac{s\chi P_F}{\|x-y\|^\alpha}\right)\right]f(x)dxdy\right)\nonumber\\
\label{eq:Laplace_PCPInterference2}
\end{align}\endgroup
Further, substituting \mbox{\small{$\mathcal{L}_h(s)=\frac{1}{1+s}$}} in \eqref{eq:Laplace_PCPInterference2} yields \eqref{eq:Laplace_PCPInterference1}.
For $r_m=0$ we have $\mathcal{B}^c_{r_m}=\mathbb{R}^2$. Therefore, substituting \mbox{\small{$\mathcal{L}_h(s)=\frac{1}{1+s}$}} and replacing $x-y$ by $z$ we can rewrite \eqref{eq:Laplace_PCPInterference2} as follows
\begingroup\makeatletter\def\f@size{8}\check@mathfonts
\begin{align} 
\mathcal{L}_{I_{\text{fm}}}\left(s,r_m\right)&\approx\exp\left(-\tilde\lambda_F\tilde c \int\nolimits_{\mathbb{R}^2} \frac{dz}{1+\frac{1}{s\chi P_F}{\|z\|^\alpha}}dz\right)
\end{align}\endgroup
Further, solving the integral yields \eqref{eq:Laplace_PCPInterference_NonCog}. This completes the proof.

\begin{IEEEbiography}[{\includegraphics[width=1in,height=1.25in,clip,keepaspectratio]{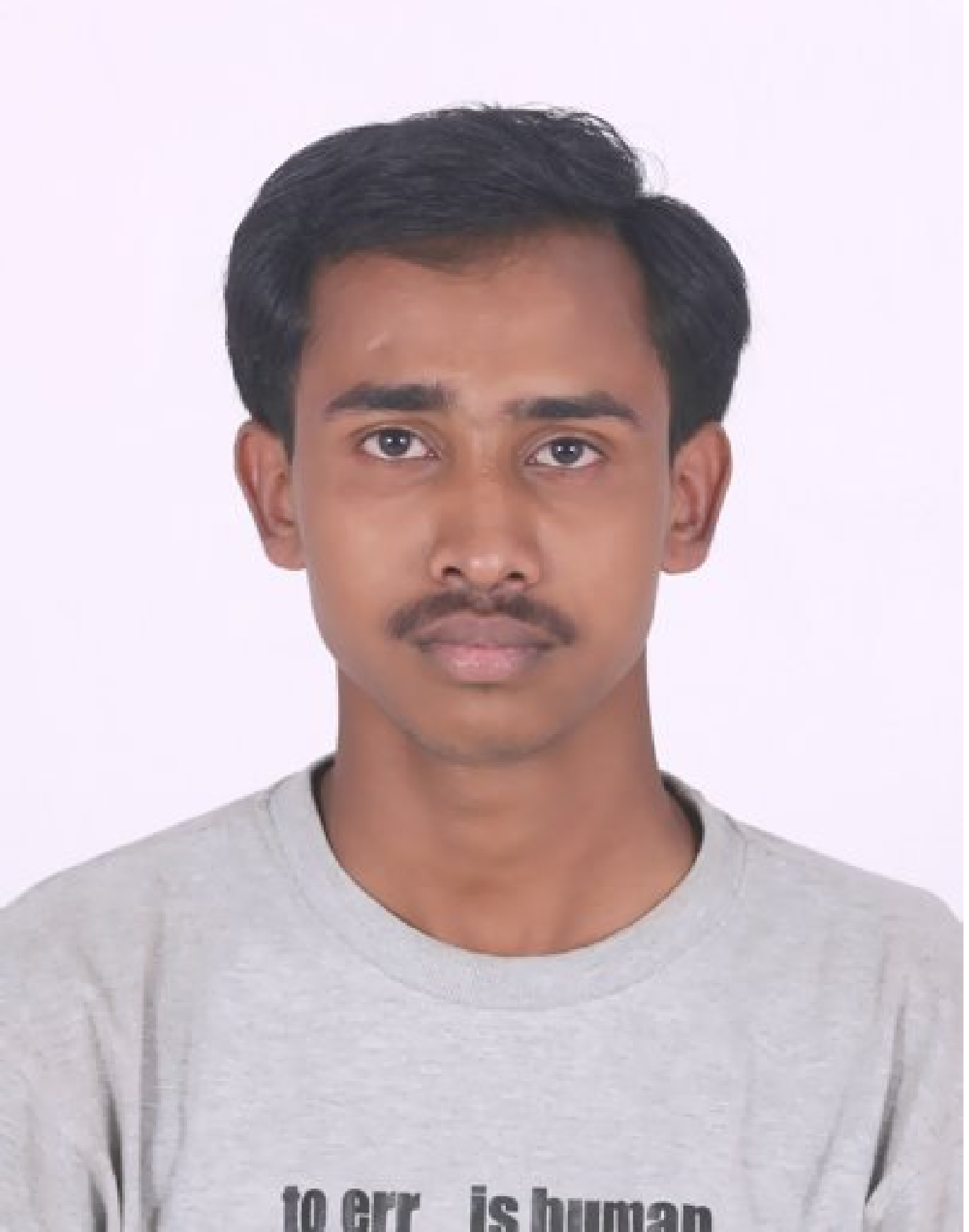}}]{Praful D. Mankar} has
completed B.Tech degree in Electronics and Telecommunications Engineering from Amaravati University, MH, India in 2006. In 2009, he has obtained M.Tech degree in Telecommunication System Engineering  from Indian Institute of Technology (IIT) Kharagpur, WB, India. He received Ph.D degree in wireless communication from IIT Kharagpur, WB, India in 2016.  Currently, He is working as a research assistant at IIT Kharagpur, WB, India. His research interest includes modeling, analyzing and designing of wireless networks. 
\end{IEEEbiography}

\begin{IEEEbiography}[{\includegraphics[width=1in,height=1.25in,clip,keepaspectratio]{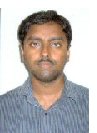}}]{Goutam Das} has obtained his Ph.D. Degree from the University of Melbourne, Australia in 2008. He has worked as a postdoctoral fellow at Ghent University, Belgium, from 2009-2011. Currently, he is working as an Assistant Professor in the Indian Institute of Technology, Kharagpur. He has served as a member in the organizing committee of IEEE ANTS since 2011. His research interests include optical access networks, optical data center networks, radio over fiber technology, optical packet switched networks and media access protocol design for application specific requirements.
\end{IEEEbiography}

\begin{IEEEbiography}[{\includegraphics[width=1in,height=1.25in,clip,keepaspectratio]{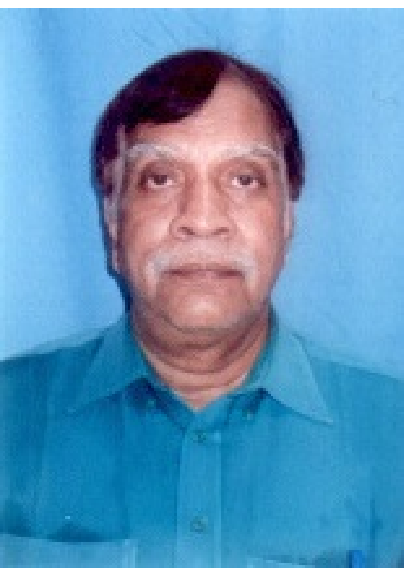}}]{Sant Sharan Pathak} has received his B.Tech and M.Tech Degrees in Electronics Engineering from IT BHU in 1976 and 1978 respectively,
and Ph.D. Degree in Digital Communications from IIT Delhi in 1984. He has joined the Department of Electronics and Electrical
Communication Engineering in 1985. His area of research interest includes physical layer network issues, receiver design optimization for Gaussian and non-Gaussian channels, security system design and analysis at application layer, image forensics with wireless camera pickup over Internet using low bandwidth wireless access channel, and similar others.
\end{IEEEbiography}

\begin{thebibliography}{1}
\bibitem{Haenggi2009}
 M. Haenggi, J. Andrews, F. Baccelli, O. Dousse, and M. Franceschetti, “Stochastic geometry and random graphs for the analysis and design of wireless networks,” IEEE J. Sel. Areas Commun., vol. 27, no. 7, pp. 1029–1046, September 2009.
\bibitem{haenggi2009interference}
M. Haenggi and R. K. Ganti, Interference in large wireless networks. Now Publishers Inc, 2009.
\bibitem{Baccelli_stochasticgeometry}
F. Baccelli and B. Błaszczyszyn, Stochastic Geometry and Wireless Networks. NOW: Foundations and Trends in Networking, vol 1., 2009.
\bibitem{Stoyan}
D. Stoyan, W. Kendall, and J. Mecke, Stochastic Geometry and its  Applications. John Wiley, 2nd ed., 1996.
\bibitem{Hossain_2013}
H. ElSawy, E. Hossain, and M. Haenggi, “Stochastic geometry for modeling, analysis, and design of multi-tier and cognitive cellular wireless networks: A survey,” IEEE Communications Surveys Tutorials, vol. 15, no. 3, pp. 996–1019, Third 2013.
\bibitem{Andrews_2011}
J. Andrews, F. Baccelli, and R. Ganti, “A tractable approach to coverage and rate in cellular networks,” IEEE Trans. Commun., vol. 59, no. 11, pp. 3122–3134, November 2011.
\bibitem{JXu_2011}
J. Xu, J. Zhang, and J. G. Andrews, “On the accuracy of the wyner model in cellular networks,” IEEE Trans. Wireless Commun., vol. 10, no. 9, pp. 3098–3109, September 2011.
\bibitem{WeberS_2010}
S. Weber, J. Andrews, and N. Jindal, “An overview of the transmission capacity of wireless networks,” IEEE Trans. on Commun., vol. 58, no. 12, pp. 3593–3604, December 2010.
\bibitem{Dhillon_2012}
H. Dhillon, R. Ganti, F. Baccelli, and J. Andrews, “Modeling and analysis of k-tier downlink heterogeneous cellular networks,” IEEE J. Sel. Areas Commun., vol. 30, no. 3, pp. 550–560, April 2012.

\bibitem{Dhillon_2013}
H. Dhillon, R. Ganti, and J. Andrews, “Load-aware modeling and analysis of heterogeneous cellular networks,” IEEE Trans. Wireless
Commun., vol. 12, no. 4, pp. 1666–1677, April 2013.
\bibitem{Dhillon_2014}
H. Dhillon and J. Andrews, “Downlink rate distribution in heterogeneous cellular networks under generalized cell selection,” IEEE Wireless Commun. Lett., vol. 3, no. 1, pp. 42–45, February 2014.
\bibitem{HSJo_2012}
H.S. Jo, Y. J. Sang, P. Xia, and J. Andrews, “Heterogeneous cellular networks with flexible cell association: A comprehensive downlink sinr analysis,” IEEE Trans. Wireless Commun., vol. 11, no. 10, pp. 3484–3495, Oct 2012.
\bibitem{HSJo_2011}
H. S. Jo, Y. J. Sang, P. Xia, and J. G. Andrews, “Outage probability for heterogeneous cellular networks with biased cell association,” in IEEE Global Telecommunications Conference, Dec 2011, pp. 1–5.
\bibitem{SSingh_Offloading}
 S. Singh, H. S. Dhillon, and J. G. Andrews, “Offloading in heterogeneous networks: Modeling, analysis, and design insights,” IEEE Trans. Wireless Commun., vol. 12, no. 5, pp. 2484–2497, May 2013.
 
 
 \bibitem{Heath_2013} 
R. Heath, M. Kountouris, and T. Bai, “Modeling heterogeneous network interference using poisson point processes,” IEEE Trans. on Signal Process., vol. 61, no. 16, pp. 4114–4126, Aug 2013.
\bibitem{Renzo}
M. D. Renzo, A. Guidotti, and G. E. Corazza, “Average rate of downlink heterogeneous cellular networks over generalized fading channels: A stochastic geometry approach,” IEEE Trans. Commun., vol. 61, no. 7, pp. 3050–3071, July 2013.
\bibitem{RKGanti_2009}
R. Ganti and M. Haenggi, “Interference and outage in clustered wireless ad hoc networks,” IEEE Trans. Inf. Theory, vol. 55, no. 9, pp. 4067–4086, Sept 2009.
\bibitem{Haenggi_2012}
C. han Lee and M. Haenggi, “Interference and outage in poisson cognitive networks,” IEEE Trans. Wireless Commun., vol. 11, no. 4, pp. 1392–1401, April 2012.
\bibitem{Afshang_2015}
M. Afshang, H. S. Dhillon, and P. H. J. Chong, “Modeling and performance analysis of clustered device-to-device networks,” CoRR,
vol. abs/1508.02668, 2015.
\bibitem{Afshang_2015a}
——, “Fundamentals of cluster-centric content placement in cache-enabled device-to-device networks,” CoRR, vol. abs/1509.04747, 2015.
\bibitem{YJChun_2015}
Y. Chun, M. Hasna, and A. Ghrayeb, “Modeling heterogeneous cellular networks interference using poisson cluster processes,” IEEE J. Sel. Areas Commun., vol. PP, no. 99, pp. 1–1, 2015.
\bibitem{Zhong_2013}
Y. Zhong and W. Zhang, “Multi-channel hybrid access femtocells: A stochastic geometric analysis,” IEEE Trans. Commun., vol. 61, no. 7, pp. 3016–3026, July 2013.
\bibitem{Luo}
Y. Luo and T. Ratnarajah, “Throughput of two-tier heterogeneous wireless networks with interference coordination,” in IEEE PIMRC, Sept 2014, pp. 618–622.
\bibitem{Dhillon_RandomArrival}
H. S. Dhillon, H. C. Huang, H. Viswanathan, and R. A. Valenzuela, “Fundamentals of throughput maximization with random arrivals for M2M communications,” CoRR, vol. abs/1307.0585, 2013.
\bibitem{Wei_Bao_2014_NearOptimal}
W. Bao and B. Liang, “Near-optimal spectrum allocation in multi-tier cellular networks with random inelastic traffic,” in ICASSP, May 2014, pp. 855–859.
\bibitem{EkramHossain_2013}
H. ElSawy and E. Hossain, “Channel assignment and opportunistic spectrum access in two-tier cellular networks with cognitive small cells,” in IEEE GLOBECOM, Dec 2013, pp. 4477–4482.
\bibitem{EkramHossain_2014}
——, “Two-tier hetnets with cognitive femtocells: Downlink performance modeling and analysis in a multichannel environment,” IEEE
Trans. Mobile Comput., vol. 13, no. 3, pp. 649–663, March 2014.

\bibitem{Wei_Bao_2014_Structured}
W. Bao and B. Liang, “Structured spectrum allocation and user association in heterogeneous cellular networks,” in IEEE INFOCOM, April 2014, pp. 1069–1077.
\bibitem{Adve_2014}
S. Sadr and R. Adve, “Tier association probability and spectrum partitioning for maximum rate coverage in multi-tier hetnet,” IEEE Commun. Lett., vol. 18, no. 10, pp. 1791–1794, Oct 2014.
\bibitem{Chandrasekhar_2009}
V. Chandrasekhar and J. Andrews, “Spectrum allocation in tiered cellular networks,” IEEE Trans. Commun., vol. 57, no. 10, pp. 3059–3068, Oct 2009.
\bibitem{YoungjuKim_2010}
Y. Kim, T. Kwon, and D. Hong, “Area spectral efficiency of shared spectrum hierarchical cell structure networks,” IEEE Trans. Veh. Technol., vol. 59, no. 8, pp. 4145–4151, Oct 2010.
\bibitem{Kleinrock}
L. Kleinrock, Queueing Systems, Volume 1: Theory. Wiley, 1975.
\bibitem{Bertsekas:1992}
D. Bertsekas and R. Gallager, Data Networks (2Nd Ed.). Upper Saddle River, NJ, USA: Prentice-Hall, Inc., 1992.
\bibitem{Singh_2013}
S. Singh, H. Dhillon, and J. Andrews, “Offloading in heterogeneous networks: Modeling, analysis, and design insights,” IEEE Trans. Wireless Commun., vol. 12, no. 5, pp. 2484–2497, May 2013.
\bibitem{Praful_BlockingProb}
P. Mankar, B. Sahu, G. Das, and S. Pathak, “Evaluation of blocking probability for downlink in poisson networks,” IEEE Wireless Commun. Lett., vol. PP, no. 99, pp. 1–1, 2015.
\end{thebibliography}
\end{document}